\newcommand{\sat}{\ensuremath{\vDash}}
\newcommand{\allpaths}{\ensuremath{\mathsf{A}}}
\newcommand{\somepath}{\ensuremath{\mathsf{E}}}
\newcommand{\xnext}{\ensuremath{\mathsf{X}}}
\newcommand{\eventually}{\ensuremath{\mathsf{F}}}
\newcommand{\until}{\ensuremath{\mathsf{U}}}
\newcommand{\releases}{\ensuremath{\mathsf{R}}}
\newcommand{\globally}{\ensuremath{\mathsf{G}}}
\newcommand{\Obs}{\mbox{\rm Obs}}
\newcommand{\converges}{\ensuremath{\downarrow}}
\newcommand{\suparrow}[1]{\stackrel{\mbox{\scriptsize $#1$}}{\longrightarrow}}
\newcommand{\Suparrow}[1]{\stackrel{\mbox{\scriptsize $#1$}}{\Longrightarrow}}
\newcommand{\traces}{\mbox{\rm traces}}
\newcommand{\trace}{\mbox{\rm trace}}
\newcommand{\paths}{\ensuremath{\Pi}}
\newcommand{\start}{\mbox{\rm start}}
\newcommand{\actcount}{\mbox{\rm count}}
\newcommand{\refuses}{\mbox{\rm ref}}
\newcommand{\Fin}{\mbox{\rm Fin}}
\newcommand{\init}{\mbox{\rm init}}
\newcommand{\setests}{\ensuremath{\mathcal{ST}}}
\newcommand{\ppre}{\ensuremath{\sqsubseteq}}
\newcommand{\peq}{\ensuremath{\simeq}}
\newcommand{\may}{\mbox{\rm may}}
\newcommand{\must}{\mbox{\rm must}}
\newcommand{\pstop}{\mbox{\rm stop}}
\newcommand{\pass}{\mbox{\rm pass}}
\newcommand{\choice}{\ensuremath{\Box}}
\newcommand{\mchoice}{\ensuremath{\Sigma}}
\newcommand{\synrule}[3]{\frac{\begin{array}{c}#1\end{array}}{\begin{array}{c}#2\end{array}}\ #3}
\newcommand{\parallelth}{\ensuremath{\|_{\theta}}}
\newcommand{\ftr}{\mbox{\rm ftr}}
\newcommand{\seqt}{\mbox{\rm st}}
\newcommand{\ap}{\ensuremath{\mathsf{AP}}}
\newcommand{\natset}{\ensuremath{\mathbbm{N}}}
\newcommand{\bnfor}{\ \vert\ }
\newcommand{\stablefail}{\ensuremath{\mathcal{SF}}}
\newcommand{\nottest}[1]{\ensuremath{\overline{#1}}}
\newcommand{\Lor}{\bigvee}
\newcommand{\restrict}[2]{\ensuremath{[#1]_{#2}}}
\newcommand{\ltstokripke}{\ensuremath{\mathbbm{K}}}  
\newcommand{\ltstokripkex}{\ensuremath{\mathbbm{X}}}
\newcommand{\fttoctl}{\ensuremath{\mathbbm{F}}}      
\newcommand{\ctltoft}{\ensuremath{\mathbbm{T}}}      
\newcommand{\internal}{\ensuremath{\mathbf{i}}}
\newcommand{\procset}{\ensuremath{\mathcal{P}}}
\newcommand{\ftset}{\ensuremath{\mathcal{T}}}
\newcommand{\ctlset}{\ensuremath{\mathcal{F}}}
\newtheorem{definition}{Definition}
\newtheorem{theorem}{Theorem}
\newtheorem{corollary}[theorem]{Corollary}
\newtheorem{proposition}[theorem]{Proposition}
\newtheorem{lemma}[theorem]{Lemma}
\newenvironment{proof*}[1]
{\noindent \emph{Proof of #1.} \rm} {\hfill \qed}
\newtheorem{example}{Example}
\newenvironment{qexample}[1]
{
 \begin{quotation}
 \noindent
 \hrulefill
 \begin{example}
 \textsc{#1:}\rm\\
}
{
 \end{example}
  \par
 \noindent
 \hrulefill
 \end{quotation}
}
\newcommand{\nudge}{\mbox{\rm nudge}}
\newcommand{\turn}{\mbox{\rm turn}}
\newcommand{\coin}{\mbox{\rm coin}}
\newcommand{\bang}{\mbox{\rm bang}}
\newcommand{\tea}{\mbox{\rm tea}}
\newcommand{\coffee}{\mbox{\rm coffee}}
\newcommand{\water}{\mbox{\rm water}}
\def\today{\number\day\space\ifcase\month\or January\or February\or
  March\or April\or May\or June\or July\or August\or September\or
  October\or November\or December\fi\space\number\year}
\title{A Constructive Equivalence between Computation Tree Logic and
  Failure Trace Testing\thanks{Part of this research was supported by
    the Natural Sciences and Engineering Research Council of Canada
    and by Bishop's University.}}
\author{Stefan D.\ Bruda, Sunita Singh, A.\ F.\ M.\ Nokib Uddin, Zhiyu Zhang,  and Rui Zuo\\
  Department of Computer Science\\
  Bishop's University\\
  Sherbrooke, Quebec J1M 1Z7, Canada\\
  stefan@bruda.ca,
  $\{$singh$|$uddin$|$zzhang$|$zuo$\}$@cs.ubishops.ca} \date{\today}
\titlerunning{}
\authorrunning{}
\begin{document}

\maketitle

\begin{abstract}
  The two major systems of formal verification are model checking and
  algebraic model-based testing. Model checking is based on some form
  of temporal logic such as linear temporal logic (LTL) or computation
  tree logic (CTL). One powerful and realistic logic being used is
  CTL, which is capable of expressing most interesting properties of
  processes such as liveness and safety.  Model-based testing is based
  on some operational semantics of processes (such as traces,
  failures, or both) and its associated preorders. The most
  fine-grained preorder beside bisimulation (mostly of theoretical
  importance) is based on failure traces.  We show that these two most
  powerful variants are equivalent; that is, we show that for any
  failure trace test there exists a CTL formula equivalent to it, and
  the other way around. All our proofs are constructive and
  algorithmic.  Our result allows for parts of a large system to be
  specified logically while other parts are specified algebraically,
  thus combining the best of the two (logic and algebraic) worlds.

  \textbf{Keywords:} model checking, model-based testing, stable
  failure, failure trace, failure trace preorder, temporal logic,
  computation tree logic, labelled transition system, Kripke structure
\end{abstract}

\section{Introduction}

Computing systems are already ubiquitous in our everyday life, from
entertainment systems at home, to telephone networks and the Internet,
and even to health care, transportation, and energy infrastructure.
Ensuring the correct behaviour of software and hardware has been one
of the goals of Computer Science since the dawn of computing.  Since
then computer use has skyrocketed and so has the need for assessing
correctness.

Historically the oldest verification method, which is still widely
used today, is empirical testing \cite{pawlikowski90,schriber03}.
This is a non-formal method which provides input to a system, observes
the output, and verifies that the output is the one expected given the
input.  Such testing cannot check all the possible input combinations
and so it can disprove correctness but can never prove it.  Deductive
verification \cite{floyd67,hoare69,saidi97} is chronologically the
next verification method developed.  It consists of providing proofs
of program correctness manually, based on a set of axioms and
inference rules.  Program proofs provide authoritative evidence of
correctness but are time consuming and require highly qualified
experts.

Various techniques have been developed to automatically perform
program verification with the same effect as deductive reasoning but
in an automated manner.  These efforts are grouped together in the
general field of formal methods.  The general technique is to verify a
system automatically against some formal specification.  Model-based
testing and model checking are the two approaches to formal methods
that became mainstream.  Their roots can be traced to simulation and
deductive reasoning, respectively. These formal methods however are
sound, complete, and to a large extent automatic.  They have proven
themselves through the years and are currently in wide use throughout
the computing industry.

In model-based testing \cite{motres05,denicola84,tretmans96} the
specification of a system is given algebraically, with the underlying
semantics given in an operational manner as a labeled transition
system (LTS for short), or sometimes as a finite automaton (a
particular, finite kind of LTS).  Such a specification is usually an
abstract representation of the system's desired behaviour.  The system
under test is modeled using the same formalism (either finite or
infinite LTS).  The specification is then used to derive
systematically and formally tests, which are then applied to the
system under test.  The way the tests are generated ensures soundness
and completeness.  In this paper we focus on arguably the most
powerful method of model-based testing, namely failure trace testing
\cite{langerak89}.  Failure trace testing also introduces a smaller
set (of sequential tests) that is sufficient to assess the failure
trace relation.

By contrast, in model checking \cite{clarke86,clarke99,queille83} the
system specification is given in some form of temporal logic.  The
specification is thus a (logical) description of the desired
properties of the system.  The system under test is modeled as Kripke
structures, another formalism similar to transition systems.  The
model checking algorithm then determines whether the initial states of
the system under test satisfy the specification formulae, in which
case the system is deemed correct.  There are numerous temporal logic
variants used in model checking, including CTL*, CTL and LTL. In this
paper we focus on CTL.

There are advantages as well as disadvantages to each of these formal
methods techniques.  Model checking is a complete verification
technique, which has been widely studied and also widely used in
practice.  The main disadvantage of this technique is that it is not
compositional.  It is also the case that model checking is based on
the system under test being modeled using a finite state formalism,
and so does not scale very well with the size of the system under
test.  By contrast, model-based testing is compositional by definition
(given its algebraic nature), and so has better scalability.  In
practice however it is not necessarily complete given that some of the
generated tests could take infinite time to run and so their success
or failure cannot be readily ascertained.  The logical nature of
specification for model checking allows us to only specify the
properties of interest, in contrast with the labeled transition
systems or finite automata used in model-based testing which more or
less require that the whole system be specified.

Some properties of a system may be naturally specified using temporal
logic, while others may be specified using finite automata or labeled
transition systems.  Such a mixed specification could be given by
somebody else, but most often algebraic specifications are just more
convenient for some components while logic specifications are more
suitable for others.  However, such a mixed specification
cannot be verified.  Parts of it can be model checked and some other
parts can be verified using model-based testing.  However, no global
algorithm for the verification of the whole system exists.  Before
even thinking of verifying such a specification we need to convert one
specification to the form of the other.

We describe in this paper precisely such a conversion.  We first
propose two equivalence relations between labeled transition systems
(the semantic model used in model-based testing) and Kripke structures
(the semantic model used in model checking), and then we show that for
each CTL formula there exists an equivalent failure trace test suite,
and the other way around.  In effect, we show that the two (algebraic
and logic) formalisms are equivalent.  All our proofs are constructive
and algorithmic, so that implementing back and forth automated
conversions is an immediate consequence of our result.

We believe that we are thus opening the domain of combined, algebraic
and logic methods of formal system verification.  The advantages of
such a combined method stem from the above considerations but also
from the lack of compositionality of model checking (which can thus be
side-stepped by switching to algebraic specifications), from the lack
of completeness of model-based testing (which can be side-stepped by
switching to model checking), and from the potentially attractive
feature of model-based testing of incremental application of a test
suite insuring correctness to a certain degree (which the
all-or-nothing model-checking lacks).

The reminder of this paper is organized as follows: We introduce basic
concepts including model checking, temporal logic, model-based
testing, and failure trace testing in the next section.  Previous work
is reviewed briefly in Section~\ref{sect-previous}.
Section~\ref{sect-lts-kripke} defines the concept of equivalence
between LTS and Kripke structures, together with an algorithmic
function for converting an LTS into its equivalent Kripke structure.
Two such equivalence relations and conversion functions are offered
(Section~\ref{sect-lts-kripke-a} and~\ref{sect-lts-kripke-b},
respectively).  Section~\ref{sect-eq-ctl-ft} then presents our
algorithmic conversions from failure trace tests to CTL formulae
(Section~\ref{sect-ft-to-ctl} with an improvement in
Section~\ref{sect-cycle}) and the other way around
(Section~\ref{sect-ctl-to-ft}).  We discuss the significance and
consequences of our work in Section~\ref{sect-conclusions}.  For the
remainder of this paper results proved elsewhere are introduced as
Propositions, while original results are stated as Theorems, Lemmata,
or Corollaries.

\section{Preliminaries}
\label{sect-prelims}

This section is dedicated to introducing the necessary background
information on model checking, temporal logic, and failure trace
testing.  For technical reasons we also introduce TLOTOS, a process
algebra used for describing algebraic specifications, tests, and
systems under test.  The reason for using this particular language is
that earlier work on failure trace testing uses this language as well.

Given a set of symbols $A$ we use as usual $A^{*}$ to denote exactly
all the strings of symbols from $A$.  The empty string, and only the
empty string is denoted by $\varepsilon$.  We use $\omega$ to refer to
$|\natset|$, the cardinality of the set $\natset$ of natural numbers.
The power set of a set $A$ is denoted as usual by $2^{A}$.

\subsection{Temporal Logic and Model Checking}

A specification suitable for model checking is described by a temporal
logic formula.  The system under test is given as a Kripke structure.
The goal of model checking is then to find the set of all states in
the Kripke structure that satisfy the given logic formula.  The system
then satisfies the specification provided that all the designated
initial states of the respective Kripke structure satisfy the logic
formula.

Formally, a \emph{Kripke structure} \cite{clarke99} $K$ over a set
$\ap$ of atomic propositions is a tuple $(S, S_{0}, \rightarrow, L)$,
where $S$ is a set of states, $S_{0}\subseteq S$ is the set of initial
states, $\rightarrow\subseteq S\times S$ is the transition relation,
and $L:S\rightarrow 2^{\ap}$ is a function that assigns to each states
exactly all the atomic propositions that are true in that state.  As
usual we write $s \rightarrow t$ instead of $(s,t)\in \rightarrow$.
It is usually assumed \cite{clarke99} that $\rightarrow$ is total,
meaning that for every state $s\in S$ there exists a state $t\in S$
such that $s\rightarrow t$.  Such a requirement can however be easily
established by creating a ``sink'' state that has no atomic
proposition assigned to it, is the target of all the transitions from
states with no other outgoing transitions, and has one outgoing
``self-loop'' transition back to itself.

A \emph{path} $\pi$ in a Kripke structure is a sequence
$s_{0}\rightarrow s_{1}\rightarrow s_{2} \rightarrow \cdots$ such that
$s_{i}\rightarrow s_{i+1}$ for all $i\geq 0$. The path starts from
state $s_{0}$. Any state may be the start of multiple paths.  It
follows that all the paths starting from a given state $s_{0}$ can be
represented together as a computation tree with nodes labeled with
states.  Such a tree is rooted at $s_{0}$ and $(s,t)$ is an edge in
the tree if and only if $s \rightarrow t$.  Some temporal logics
reason about computation paths individually, while some other temporal
logics reason about whole computation trees.

There are several temporal logics currently in use.  We will focus in
this paper on the CTL* family \cite{clarke99,nicola95} and more
precisely on the CTL variant.  CTL* is a general temporal logic which
is usually restricted for practical considerations.  One such a
restriction is the linear-time temporal logic or LTL
\cite{clarke99,pnueli81}, which is an example of temporal logic that
represents properties of individual paths.  Another restriction is the
computation tree logic or CTL \cite{clarke82,clarke99}, which
represents properties of computation trees.

In CTL* the properties of individual paths are represented using five
temporal operators: \xnext\ (for a property that has to be true in the
next state of the path), \eventually\ (for a property that has to
eventually become true along the path), \globally\ (for a property
that has to hold in every state along the path), \until\ (for a
property that has to hold continuously along a path until another
property becomes true and remains true for the rest of the path), and
\releases\ (for a property that has to hold along a path until another
property becomes true and releases the first property from its
obligation).  These path properties are then put together so that they
become state properties using the quantifiers \allpaths\ (for a
property that has to hold on all the outgoing paths) and \somepath\
(for a property that needs to hold on at least one of the outgoing
paths).

CTL is a subset of CTL*, with the additional restriction that the
temporal constructs \xnext, \eventually, \globally, \until, and
\releases\ must be immediately preceded by one of the path quantifiers
\allpaths\ or \somepath.  More precisely, the syntax of CTL formulae
is defined as follows:
\begin{eqnarray*}
f & = & \top \bnfor \bot \bnfor a \bnfor \lnot f \bnfor f_{1} \land
f_{2} \bnfor f_{1}\lor f_{2} \bnfor \\
&& \allpaths\xnext\ f \bnfor
\allpaths\eventually\ f \bnfor \allpaths\globally\ f \bnfor \allpaths\
f_{1}\ \until\ f_{2} \bnfor \allpaths\ f_{1}\ \releases\ f_{2} \bnfor\\
&&
\somepath\xnext\ f \bnfor \somepath\eventually\ f \bnfor
\somepath\globally\ f \bnfor\ \somepath\ f_{1}\ \until\ f_{2} \bnfor
\somepath\ f_{1}\ \releases\ f_{2}
\end{eqnarray*}
where $a\in \ap$, and $f$, $f_{1}$, $f_{2}$ are all state formulae.

CTL formulae are interpreted over states in Kripke structures.
Specifically, the CTL semantics is given by the operator $\sat$ such
that $K, s \sat f$ means that the formula $f$ is true in the state $s$
of the Kripke structure $K$. All the CTL formulae are state formulae,
but their semantics is defined using the intermediate concept of path
formulae.  In this context the notation $K, \pi \sat f$ means that the
formula $f$ is true along the path $\pi$ in the Kripke structure $K$.
The operator \sat\ is defined inductively as follows:
\begin{enumerate}
\item $K, s\sat \top$ is true and $K, s\sat \bot$ is false for any
  state $s$ in any Kripke structure $K$.
\item $K, s\sat a$, $a\in \ap$ if and only if $a\in L(s)$.
\item $K, s\sat\lnot f$ if and only if $\lnot (K,s\sat f)$ for any
  state formula $f$.
\item $K, s\sat f\land g$ if and only if $K, s\sat f$ and $K, s\sat g$
  for any state formulae $f$ and $g$.
\item $K, s\sat f\lor g$ if and only if $K, s\sat f$ or $K, s\sat g$
  for any state formulae $f$ and $g$.
\item $K, s\sat \somepath\ f$ for some path formula $f$ if and only if
  there exists a path $\pi = s \rightarrow s_{1} \rightarrow
  s_{2}\rightarrow \cdots \rightarrow s_{i}$, $i\in
  \natset\cup\{\omega\}$ such that $K, \pi \sat f$.
\item $K, s\sat \allpaths\ f$ for some path formula $f$ if and only if
  $K, \pi \sat f$ for all paths $\pi = s \rightarrow s_{1} \rightarrow
  s_{2}\rightarrow \cdots \rightarrow s_{i}$, $i\in
  \natset\cup\{\omega\}$.
\end{enumerate}

We use $\pi^{i}$ to denote the $i$-th state of a path $\pi$, with the
first state being $\pi^{0}$.  The operator \sat\ for path formulae is
then defined as follows:
\begin{enumerate}
\item $K,\pi\sat \xnext\ f$ if and only if $k,\pi^{1}\sat f$ for any state
  formula $f$.
\item $K,\pi\sat f\ \until\ g$ for any state formulae $f$ and $g$ if
  and only if there exists $j\geq 0$ such that $K,\pi^{k}\sat g$ for
  all $k \geq j$, and $K,\pi^{i}\sat f$ for all $i<j$.
\item $K,\pi\sat f\ \releases\ g$ for any state formulae $f$ and $g$
  if and only if for all $j\geq 0$, if $K,\pi^{i}\not\sat f$ for every
  $i<j$ then $K,\pi^{j}\sat g$.
\end{enumerate}

\subsection{Labeled Transition Systems and Stable Failures}

CTL semantics is defined over Kripke structures, where each state is
labeled with atomic propositions. By contrast, the common model used
for system specifications in model-based testing is the labeled
transition system (LTS), where the labels (or actions) are associated
with the transitions instead.

An LTS \cite{katoen05} is a tuple $M = (S, A, \rightarrow, s_{0})$
where $S$ is a countable, non empty set of states, $s_{0}\in S$ is the
initial state, and $A$ is a countable set of actions.  The actions in
$A$ are called visible (or observable), by contrast with the special,
unobservable action $\tau\not\in A$ (also called internal action).
The relation $\rightarrow \subseteq S\times (A\cup\{\tau\})\times S$
is the transition relation; we use $p\suparrow{a}q$ instead of $(p, a,
q)\in \rightarrow$.  A transition $p\suparrow{a}q$ means that state
$p$ becomes state $q$ after performing the (visible or internal)
action $a$.

The notation $p\suparrow{a}$ stands for $\exists p':p\suparrow{a}p'$.
The sets of states and transitions can also be considered global, in
which case an LTS is completely defined by its initial state.  We
therefore blur whenever convenient the distinction between an LTS and
a state, calling them both ``processes''.  Given that $\rightarrow$ is
a relation rather than a function, and also given the existence of the
internal action, an LTS defines a nondeterministic process.

A \emph{path} (or \emph{run}) $\pi$ starting from state $p'$ is a
sequence $p' = p_{0}\suparrow{a_{1}} p_{1}\suparrow{a_{2}} \cdots
p_{k-1} \suparrow{a_{k}} p_{k}$ with $k\in\natset\cup\{\omega\}$ such
that $p_{i-1}\suparrow{a_{i}}p_{i}$ for all $0<i\leq k$.  We use
$|\pi|$ to refer to $k$, the length of $\pi$.  If $|\pi|\in \natset$,
then we say that $\pi$ is finite. The trace of $\pi$ is the sequence
$\trace(\pi) = (a_{i})_{0< i\leq |\pi|, a_{i}\neq \tau}\in A^{*}$ of
all the visible actions that occur in the run listed in their order of
occurrence and including duplicates.  Note in particular that internal
actions do not appear in traces.  The set of finite traces of a
process $p$ is defined as $\Fin(p) = \{tr \in \traces(p) : |tr|\in
\natset \}$.  If we are not interested in the intermediate states of a
run then we use the notation $p\Suparrow{w}q$ to state that there
exists a run $\pi$ starting from state $p$ and ending at state $q$
such that $\trace(\pi) = w$.  We also use $p\Suparrow{w}$ instead of
$\exists p':p\Suparrow{w}p'$.

A process $p$ that has no outgoing internal action cannot make any
progress unless it performs a visible action.  We say that such a
process is \emph{stable} \cite{schneider00}.  We write $p\converges$
whenever we want to say that process $p$ is stable.  Formally,
$p\converges = \lnot (\exists p'\neq p: p\Suparrow{\varepsilon}p')$.
A stable process $p$ responds predictably to any set of actions
$X\subseteq A$, in the sense that its response depends exclusively on
its outgoing transitions.  Whenever there is no action $a\in X$ such
that $p\suparrow{a}$ we say that $p$ \emph{refuses} the set $X$.  Only
stable processes are able to refuse actions; unstable processes refuse
actions ``by proxy'': they refuse a set $X$ whenever they can
internally become a stable process that refuses $X$.  Formally, $p$
refuses $X$ (written $p\ \refuses\ X$) if and only if $\forall a\in X:
\lnot(\exists p': (p\Suparrow{\varepsilon }p') \land p'\converges\land
p'\suparrow{a})$.

To describe the behaviour of a process in terms of refusals we need to
record each refusal together with the trace that causes that refusal.
An observation of a refusal plus the trace that causes it is called a
\emph{stable failure} \cite{schneider00}.  Formally, $(w, X)$ is a
stable failure of process $p$ if and only if $\exists p^{w}:
(p\Suparrow{w}p^{w})\land p^{w}\converges \land (p^{w}\ \refuses\ X)$.
The set of stable failures of $p$ is then $\stablefail(p) = \{(w,
X):\exists p^{w}: (p\Suparrow{w}p^{w})\land p^{w}\converges \land
(p^{w}\ \refuses\ X)\}$.

Several preorder relations (that is, binary relations that are
reflexive and transitive but not necessarily symmetric or
antisymmetric) can be defined over processes based on their observable
behaviour (including traces, refusals, stable failures, etc.)
\cite{bruda05}.  Such preorders can then be used in practice as
implementation relations, which in turn create a process-oriented
specification technique.  The \emph{stable failure preorder} is
defined based on stable failures and is one of the finest such
preorders (but not the absolute finest) \cite{bruda05}.

Let $p$ and $q$ be two processes.  The stable failure preorder
$\ppre_{\mathrm{SF}}$ is defined as $p\ppre_{\mathrm{SF}} q$ if and only if
$\Fin(p) \subseteq \Fin(q)$ and $\stablefail(p) \subseteq
\stablefail(q)$.  Given the preorder $\ppre_{\mathrm{SF}}$ one can
naturally define the stable failure equivalence $\peq_{\stablefail}$:
$p\peq_{\stablefail}q$ if and only if $p\ppre_{\mathrm{SF}}q$ and
$q\ppre_{\mathrm{SF}}p$.

\subsection{Failure Trace Testing}
\label{sect-ft}

In model-based testing \cite{motres05} a test runs in parallel with
the system under test and synchronizes with it over visible actions. A
run of a test $t$ and a process $p$ represents a possible sequence of
states and actions of $t$ and $p$ running synchronously.  The outcome
of such a run is either success ($\top$) or failure ($\bot$).  The
precise definition of synchronization, success, and failure depends on
the particular type of tests being considered. We will present below
such definitions for the particular framework of failure trace
testing.

Given the nondeterministic nature of LTS there may be multiple runs
for a given process $p$ and a given test $t$ and so a set of outcomes
is necessary to give the results of all the possible runs.  We denote
by $\Obs(p, t)$ the set of exactly all the possible outcomes of all
the runs of $p$ and $t$.  Given the existence of such a set of
outcomes, two definitions of a process passing a test are possible.
More precisely, a process $p$ \emph{may} pass a test $t$ whenever some
run is successful (formally, $p\ \may\ t$ if and only if $\top\in \Obs(p, t)$),
while $p$ \emph{must} pass $t$ whenever all runs are successful
(formally, $p\ \must\ t$ if and only if $\{\top\} = \Obs(p, t)$).

In what follows we use the notation $\init(p) = \{a\in A:
p\Suparrow{a}\}$.  A failure trace $f$ \cite{langerak89} is a string
of the form $f=A_{0}a_{1}A_{1}a_{2}A_{2}\ldots a_{n}A_{n}$, $n\geq 0$,
with $a_{i}\in A^{*}$ (sequences of actions) and $A_{i}\subseteq A$
(sets of refusals).  Let $p$ be a process such that
$p\Suparrow{\varepsilon}p_{0}\Suparrow{a_{1}}p_{1}\Suparrow{a_{2}}\cdots
\Suparrow{a_{n}}p_{n}$; $f=A_{0}a_{1}A_{1}a_{2}A_{2}\ldots a_{n}A_{n}$
is then a failure trace of $p$ whenever the following two conditions
hold:
\begin{itemize}
\item If $\lnot (p_{i}\suparrow\tau)$, then $A_{i}\subseteq
  (A\setminus\init(p_{i}))$; for a stable state the failure trace
  refuses any set of events that cannot be performed in that state
  (including the empty set).
\item If $p_{i}\suparrow\tau$ then $A_{i} = \emptyset$; whenever
  $p_{i}$ is not a stable state it refuses an empty set of events by
  definition.
\end{itemize}
In other words, we obtain a failure trace of $p$ by taking a trace of
$p$ and inserting refusal sets after stable states.

Systems and tests can be concisely described using the testing
language TLOTOS \cite{brinksma87,langerak89}, which will also be used
in this paper.  $A$ is the countable set of observable actions, ranged
over by $a$. The set of processes or tests is ranged over by $t$,
$t_{1}$ and $t_{2}$, while $T$ ranges over the sets of tests or
processes.  The syntax of TLOTOS is then defined as follows:
\[
t = \pstop \bnfor a; t_{1} \bnfor \internal; t_{1} \bnfor \theta;
t_{1} \bnfor \pass \bnfor t_{1}\ \choice\ t_{2} \bnfor \mchoice T
\]
The semantics of TLOTOS is then the following:
\begin{enumerate}
\item inaction (\pstop): no rules.
\item action prefix: $a; t_{1}\suparrow{a} t_{1}$ and $\internal;
  t_{1}\suparrow{\tau} t_{1}$
\item deadlock detection: $\theta; t_{1} \suparrow{\theta} t_{1}$.
\item successful termination: $\pass\suparrow{\gamma}\pstop$.
\item choice: with $g\in A\cup \{\gamma, \theta, \tau \}$,
  \[
  \synrule{t_{1}\suparrow{g}t_{1}'}
  {t_{1}\ \choice\ t_{2}\suparrow{g}t_{1}'\\t_{2}\ \choice\ t_{1}\suparrow{g}t_{1}'}{}
  \]
\item generalized choice: with $g\in A\cup \{\gamma, \theta, \tau
  \}$,
  \[
  \synrule{t_{1}\suparrow{g}t_{1}'}
  {\mchoice (\{t_{1}\}\cup t)\suparrow{g}t_{1}'}{}
  \]
\end{enumerate}

Failure trace tests are defined in TLOTOS using the special actions
$\gamma$ which signals the successful completion of a test, and
$\theta$ which is the deadlock detection label (the precise behaviour
will be given later).  Processes (or LTS) can also be described as
TLOTOS processes, but such a description does not contain $\gamma$
or $\theta$.  A test runs in parallel with the system under test
according to the parallel composition operator
\parallelth.  This operator also defines the semantics of $\theta$ as
the lowest priority action:
\[
\begin{array}{ccc}
  \begin{array}{cc}
  \synrule{p\suparrow{\tau}p'}
  {p\parallelth t\suparrow{\tau}p' \parallelth t}{}
  &
  \synrule{t\suparrow{\tau}t'}
  {p\parallelth t\suparrow{\tau}p' \parallelth t}{}
  \end{array}
  \\\\
  \begin{array}{cc}
    \setlength{\arraycolsep}{0pt}
    \synrule{t\suparrow{\gamma}\pstop}
    {p\parallelth t\suparrow{\gamma}\pstop}{}
    &
    \synrule{p\suparrow{a}p' \qquad t\suparrow{a}t'}
    {p\parallelth t\suparrow{a}p' \parallelth t'}
    { a\in A }
  \end{array}
  \\\\
  \begin{array}{cc}
  \synrule{t\suparrow{\theta}t' \qquad
    \lnot\exists x \in A\cup\{\tau,\gamma\}:p\parallelth t\suparrow{x}}
  {p\parallelth t\suparrow{\theta}p \parallelth t'}
  {}
  \end{array}
\end{array}
\]

Given that both processes and tests can be nondeterministic we have a
set $\paths(p\parallelth t)$ of possible runs of a process and a test.
The outcome of a particular run $\pi\in\paths(p\parallelth t)$ of a
test $t$ and a process under test $p$ is success ($\top$) whenever the
last symbol in $\trace(\pi)$ is $\gamma$, and failure ($\bot$)
otherwise.  One can then distinguish the possibility and the
inevitability of success for a test as mentioned earlier: $p\ \may\ t$
if and only if $\top \in \Obs(p,t)$, and $p\ \must\ t$ if and only if
$\{\top\} = \Obs(p,t)$.

The set \setests\ of sequential tests is defined as follows
\cite{langerak89}: $\pass\in\setests$, if $t\in\setests$ then $a; t\in
\setests$ for any $a\in A$, and if $t\in\setests$ then
$\mchoice\{a;\pstop: a\in A'\}\ \choice\ \theta;t\in\setests$ for any
$A'\subseteq A$.

A bijection between failure traces and sequential tests exists
\cite{langerak89}.  For a sequential test $t$ the failure trace
$\ftr(t)$ is defined inductively as follows: $\ftr(\pass)= \emptyset$,
$\ftr(a; t')= a\ \ftr(t')$, and $\ftr(\mchoice\{a;\pstop: a\in A'\}\
\choice\ \theta;t') = A'\ \ftr(t')$.  Conversely, let $f$ be a failure
trace.  Then we can inductively define the sequential test $\seqt(f)$
as follows: $\seqt(\emptyset) = \pass$, $\seqt(af) = a\ \seqt(f)$, and
$\seqt(A f) = \mchoice\{a;\pstop: a\in A\}\ \choice\ \theta;
\seqt(f)$.  For all failure traces $f$ we have that $\ftr(\seqt(f)) =
f$, and for all tests $t$ we have $\seqt(\ftr(t)) = t$. We then define
the failure trace preorder $\ppre_{\mathrm{FT}}$ as follows:
$p\ppre_{\mathrm{FT}}q$ if and only if $\ftr(p)\subseteq \ftr(q)$.

The above bijection effectively shows that the failure trace preorder
(which is based on the behaviour of processes) can be readily
converted into a testing-based preorder (based on the outcomes of
tests applied to processes).  Indeed there exists a successful run of
$p$ in parallel with the test $t$, if and only if $f$ is a failure trace of both
$p$ and $t$.  Furthermore, these two preorders are equivalent to the
stable failure preorder introduced earlier:

\begin{proposition}
  \label{th-stable-failure-trace}
  {\rm\textbf{\cite{langerak89}}} Let $p$ be a process, $t$ a
  sequential test, and $f$ a failure trace.  Then $p\ \may\ t$ if and only if $f
  \in \ftr(p)$, where $f=\ftr(t)$.

  Let $p_{1}$ and $p_{2}$ be processes.  Then $p_{1}
  \ppre_{\mathrm{SF}} p_{2}$ if and only if $p_{1} \ppre_{\mathrm{FT}} p_{2}$ if and only if
  $p_{1}\ \may\ t \implies p_{2}\ \may\ t$ for all failure trace tests
  $t$ if and only if $\forall t'\in \setests: p_{1}\ \may\ t' \implies p_{2}\
  \may\ t'$.

  Let $t$ be a failure trace test.  Then there exists $T(t) \subseteq
  \setests$ such that $p\ \may\ t $ if and only if $\exists t'\in T(t): p\ \may\
  t'$.
\end{proposition}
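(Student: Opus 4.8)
The plan is to prove the three assertions in turn, relying throughout on the bijection between sequential tests and failure traces given by $\ftr$ and $\seqt$ and on the operational rules for $\parallelth$. For the first assertion I would induct on the structure of the sequential test $t$, equivalently on the length of $f=\ftr(t)$. In the base case $t=\pass$ the rule yielding $p\parallelth\pass\suparrow{\gamma}\pstop$ shows that $p\may\pass$ holds unconditionally, matching the fact that the empty failure trace $\ftr(\pass)$ belongs to $\ftr(p)$ for every $p$. For $t=a;t'$, so that $f=a\,\ftr(t')$, the only route to a $\gamma$ in $p\parallelth t$ is to synchronise first on $a$ (possibly after internal moves of $p$), whence $p\may t$ holds exactly when there is $p'$ with $p\Suparrow{a}p'$ and $p'\may t'$; by the induction hypothesis this is precisely the condition for $a\,\ftr(t')\in\ftr(p)$. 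The decisive case is $t=\mchoice\{a;\pstop:a\in A'\}\choice\theta;t'$, so that $f=A'\,\ftr(t')$: all the $a$-branches lead to the dead $\pstop$, so a successful run must fire the $\theta$-transition, and the side condition of the $\theta$-rule enables it exactly when $p$ has reached a stable state refusing $A'$, after which $p$ is carried unchanged into the run against $t'$. This is the operational reading of inserting the refusal set $A'$ before continuing with $\ftr(t')$, which closes the induction; because only one successful run is required, the failing $a$-branches do no harm.

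For the third assertion I would take $T(t)$ to be the set of sequential tests obtained by linearising the branching structure of the general failure trace test $t$: each maximal path from its root to a $\gamma$-labelled leaf, read off through the $\choice$ and $\mchoice$ operators and guarded by the appropriate $\theta$-prefix wherever the branch commits to a refusal, yields one element of $T(t)\subseteq\setests$. Since any single run of $p\parallelth t$ follows exactly one such branch, $\top\in\Obs(p,t)$ holds if and only if some branch admits a successful run, that is, if and only if $p\may t'$ for the corresponding $t'\in T(t)$; the set $T(t)$ may well be infinite, but this is immaterial.

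The second assertion then assembles as a chain of equivalences. The first assertion together with the bijection $\ftr/\seqt$ turns $\ppre_{\mathrm{FT}}$, namely $\ftr(p_1)\subseteq\ftr(p_2)$, into the condition that $p_1\may t'\Rightarrow p_2\may t'$ for all $t'\in\setests$, since each sequential test names exactly one failure trace and every failure trace is named. The containment of $\setests$ in the class of all failure trace tests yields one direction between the two testing conditions at once, and the reverse direction is exactly the third assertion, that a general test is passed if and only if one of its sequential components is. Finally I would match $\ppre_{\mathrm{FT}}$ with $\ppre_{\mathrm{SF}}$ through the correspondence between failure traces on one side and finite traces together with stable failures on the other: failure traces with empty refusal sets recover $\Fin$ and failure traces ending in a single refusal recover $\stablefail$, which gives the direction from $\ppre_{\mathrm{FT}}$ to $\ppre_{\mathrm{SF}}$, while the converse asks one to reconstruct a whole failure trace of $p_2$ from the finite traces and stable failures it is forced to contain.

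The hard part will be precisely this reconstruction. Stable-failure inclusion only guarantees, for each prefix of a given failure trace of $p_1$, some run of $p_2$ that realises the refusal recorded at that prefix; stitching these witnesses into a single coherent run of $p_2$ that exhibits all the intermediate refusals of the entire failure trace simultaneously is the genuinely nontrivial step, and it is here that the shape of the $\theta$-guarded sequential tests, which resume from the very state at which a refusal was observed, has to be exploited with care.
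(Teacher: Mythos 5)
You should know at the outset that the paper never proves this statement: by its own convention (``results proved elsewhere are introduced as Propositions''), Proposition~\ref{th-stable-failure-trace} is quoted from Langerak's work, so there is no in-paper proof to measure your attempt against and it must stand on its own. On that footing, your treatment of the first and third assertions is essentially the standard argument and is acceptable as a sketch: the induction on sequential tests is right, the negative premise of the $\theta$-rule does capture exactly ``$p$ has internally stabilized into a state refusing $A'$,'' and linearising a general test into $\theta$-guarded sequential branches is the correct idea for the third assertion (though you should state explicitly that the refusal set recorded when a linearised branch passes through a $\theta$ is the set of initial actions of the sibling branches of that $\theta$, since the premise of the $\theta$-rule quantifies over the whole composition $p\parallelth t$).

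The genuine gap is the one you flag yourself and then leave open: the implication from $\ppre_{\mathrm{SF}}$ to $\ppre_{\mathrm{FT}}$. This is not a reconstruction that extra care will complete; with the definitions exactly as this paper states them, the implication is false. Consider $p_{1} = a;(b\ \choice\ c;d)\ \choice\ a;(f\ \choice\ c;e)$ and $p_{2} = a;(b\ \choice\ c;e)\ \choice\ a;(f\ \choice\ c;d)$. All reachable states are stable, and the two processes have identical finite traces and identical stable failures (the refusals are determined by the initial-action sets of the states reached after each trace, and these match pairwise), so $p_{1}\ppre_{\mathrm{SF}}p_{2}$. However $\emptyset\, a\, \{f\}\, c\, \emptyset\, d\, \emptyset$ is a failure trace of $p_{1}$ and not of $p_{2}$: refusing $\{f\}$ after $a$ forces $p_{2}$ into its branch $b\ \choice\ c;e$, from which $d$ is impossible. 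Intermediate refusals are strictly more discriminating than the single end-of-trace refusals recorded by $\stablefail$; this is the classical separation of failure semantics from failure-trace semantics, and the paper's own coffee machines $b_{1}$ and $b_{2}$ (Example~\ref{ex-coffee}) exhibit the same phenomenon --- they agree on $\Fin$ and $\stablefail$ yet differ on $\ftr$. Consequently the step you call ``the hard part'' cannot be carried out from the stated definitions at all; if Langerak's equivalence is to hold, it must rest on a notion of stable-failure preorder (or of refusal) different from the one reproduced in this paper, and an honest proof would have to import and use that notion. The remaining links in your chain --- the first assertion, the bijection between $\setests$ and failure traces, and the passage between sequential and general failure trace tests via the third assertion --- are sound.
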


We note in passing that unlike other preorders, $\ppre_{\mathrm{SF}}$
(or equivalently $\ppre_{\mathrm{FT}}$) can be in fact characterized
in terms of may testing only; the must operator needs not be
considered any further.

\section{Previous Work}
\label{sect-previous}

The investigation into connecting logical and algebraic frameworks of
formal specification and verification has not been pursued in too much
depth.  To our knowledge the only substantial investigation on the
matter is based on linear-time temporal logic (LTL) and its relation
with B\"{u}chi automata \cite{thomas90}.  Such an investigation
started with timed B\"{u}chi automata \cite{alur94} approaches to LTL
model checking \cite{clarke99,denicola84,gerth95,vardi86,vardi94}.

An explicit equivalence between LTL and the may and must testing
framework of De~Nicola and Hennessy \cite{denicola84} was developed as
a unified semantic theory for heterogeneous system specifications
featuring mixtures of labeled transition systems and LTL formulae
\cite{cleaveland00}.  This theory uses B\"{u}chi automata
\cite{thomas90} rather than LTS as underlying semantic formalism.  The
B\"{u}chi must-preorder for a certain class of B\"{u}chi process was
first established by means of trace inclusion.  Then LTL formulae
were converted into B\"{u}chi processes whose languages contain the
traces that satisfy the formula.

The relation between may and must testing and temporal logic mentioned
above \cite{cleaveland00} was also extended to the timed (or
real-time) domain \cite{bruda10j,dai08}.  Two refinement timed
preorders similar to may and must testing were introduced, together
with behavioural and language-based characterizations for these
relations (to show that the new preorders are extensions of the
traditional preorders).  An algorithm for automated test generation
out of formulae written in a timed variant of LTL called Timed
Propositional Temporal Logic (TPTL) \cite{bellini00} was then
introduced.

To our knowledge there were only two efforts on the equivalence
between CTL and algebraic specifications \cite{bruda09a,nicola90}, one
of which is our preliminary version of this paper.  This earlier
version \cite{bruda09a} presents the equivalence between LTS and
Kripke structures (Section~\ref{sect-lts-kripke} below) and also a
tentative (but not complete) conversion from failure trace tests to
CTL formulae.  The other effort \cite{nicola90} is the basis of our
second equivalence relation between LTS and Kripke structure (again
see Section~\ref{sect-lts-kripke} below).

\section{Two Constructive Equivalence Relations between LTS and Kripke Structures}
\label{sect-lts-kripke}

We believe that the only meaningful basis for constructing a Kripke
structure equivalent to a given LTS is by taking the outgoing actions
of an LTS state as the propositions that hold on the equivalent Kripke
structure state.  This idea is amenable to at least two algorithmic
conversion methods.

\subsection{Constructing a Compact Kripke Structure Equivalent with a Given LTS}
\label{sect-lts-kripke-a}

We first define an LTS satisfaction operator similar to the one on
Kripke structures in a natural way (and according to the intuition
presented above).

\begin{definition}
  \label{def-lts-sat}
  \textsc{Satisfaction for processes:} A process $p$ satisfies $a\in
  A$, written by abuse of notation $p\sat f$, iff $p\suparrow{a}$.
  That $p$ satisfies some (general) CTL* state formula is defined
  inductively as follows: Let $f$ and $g$ be some state formulae
  unless stated otherwise; then,

  \begin{enumerate}
  \item $p\sat \top$ is true and $p\sat \bot$ is false for any process
    $p$.
  \item $p\sat\lnot f$ iff $\lnot (p\sat f)$.
  \item $p\sat f\land g$ iff $p\sat f$ and $p\sat g$.
  \item $p\sat f\lor g$ iff $p\sat f$ or $p\sat g$.
  \item $p\sat \somepath\ f$ for some path formula $f$ iff there is a
    path $\pi = p\suparrow{a_{0}} s_{1}\suparrow{a_{1}} s_{2}
    \suparrow{a_{2}} \cdots$ such that $\pi \sat f$.
  \item $p\sat \allpaths\ f$ for some path formula $f$ iff $p \sat f$
    for all paths $\pi = p\suparrow{a_{0}} s_{1}\suparrow{a_{1}} s_{2}
    \suparrow{a_{2}} \cdots$.
  \end{enumerate}

  We use $\pi^{i}$ to denote the $i$-th state of a path $\pi$ (with
  the first state being state 0, or $\pi^{0}$).  The definition of
  \sat\ for LTS paths is:

  \begin{enumerate}
  \item $\pi\sat \xnext\ f$ iff $\pi^{1}\sat f$.
  \item $\pi\sat f\ \until\ g$ iff there exists $j\geq 0$ such that
    $\pi^{j}\sat g$ and $\pi^{k}\sat g$ for all $k \geq j$, and
    $\pi^{i}\sat f$ for all $i<j$.
  \item $\pi\sat f\ \releases\ g$ iff for all $j\geq 0$, if
    $\pi^{i}\not\sat f$ for every $i<j$ then $\pi^{j}\sat g$.
  \end{enumerate}
\end{definition}

We also need to define a weaker satisfaction operator for CTL.  Such
an operator is similar to the original, but is defined over a set of
states rather than a single state.  By abuse of notation we denote
this operator by $\sat$ as well.

\begin{definition}
  \label{def-kripke-set-sat}
  \textsc{Satisfaction over sets of states:} Consider a Kripke
  structure $K=(S, S_{0}, R, L)$ over $\ap$.  For some set $Q\subseteq
  S$ and some CTL state formula $f$ we define $K,Q\sat f$ as follows,
  with $f$ and $g$ state formulae unless stated otherwise:
  \begin{enumerate}
  \item $K, Q\sat \top$ is true and $K, Q\sat \bot$ is false for any
    set $Q$ in any Kripke structure $K$.
  \item $K, Q\sat a$ iff $a\in L(s)$ for some $s\in Q$, $a\in \ap$.
  \item $K, Q\sat\lnot f$ iff $\lnot (K,Q\sat f)$.
  \item $K, Q\sat f\land g$ iff $K, Q\sat f$ and $K, Q\sat g$.
  \item $K, Q\sat f\lor g$ iff $K, Q\sat f$ or $K, Q\sat g$.
  \item $K, Q\sat \somepath\ f$ for some path formula $f$ iff for some
    $s\in Q$ there exists a path $\pi = s \rightarrow s_{1}
    \rightarrow s_{2}\rightarrow \cdots \rightarrow s_{i}$ such that
    $K, \pi \sat f$.
  \item $K, Q\sat \allpaths\ f$ for some path formula $f$ iff for some
    $s\in Q$ it holds that $K, \pi \sat f$ for all paths $\pi = s
    \rightarrow s_{1} \rightarrow s_{2}\rightarrow \cdots \rightarrow
    s_{i}$.
  \end{enumerate}
\end{definition}

With these definition we can introduce the following equivalence
relation between Kripke structures and LTS.

\begin{definition}
  \label{def-kripke-lts-eq}
  \textsc{Equivalence between Kripke structures and LTS:} Given a
  Kripke structure $K$ and a set of states $Q$ of $K$, the pair $K,Q$
  is equivalent to a process $p$, written $K,Q\peq p$ (or $p\peq
  K,Q$), if and only if for any CTL* formula $f$ $K,Q\sat f$ if and
  only if $p \sat f$.
\end{definition}

It is easy to see that the relation $\peq$ is indeed an equivalence
relation.  This equivalence has the useful property that given an LTS
it is easy to construct its equivalent Kripke structure.

\begin{theorem}
  \label{th-kripke-lts}
  There exists an algorithmic function $\ltstokripke$ which converts a
  labeled transition system $p$ into a Kripke structure $K$ and a set
  of states $Q$ such that $p \peq (K, Q)$.

  Specifically, for any labeled transition system $p = (S,
  A,\rightarrow, s_{0})$, its equivalent Kripke structure $K =
  \ltstokripke(p)$ is defined as $K = (S', Q, R', L')$ where:
  \begin{enumerate}
  \item $S' = \{\langle s, x\rangle : s\in S, x \subseteq \init(s)\}$.
  \item $Q = \{\langle s_{0}, x\rangle \in S'\}$.
  \item $R'$ contains exactly all the transitions $(\langle s,
    N\rangle , \langle t, O\rangle)$ such that $\langle s, N\rangle ,
    \langle t, O\rangle\in S'$, and
    \begin{enumerate}
    \item for any $n\in N$, $s\Suparrow{n}t$,
    \item for some $q\in S$ and for any $o\in O$, $t\Suparrow{o}q$,
      and
    \item \label{th-kripke-lts-3} if $N=\emptyset$ then $O=\emptyset$
      and $t=s$ (these loops ensure that the relation $R'$ is
      complete).
    \end{enumerate}
  \item $L': S'\rightarrow 2^{\ap}$ such that $L'(s, x) = x$, where
    $\ap = A$.
  \end{enumerate}
\end{theorem}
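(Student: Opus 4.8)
**The plan is to prove the theorem by structural induction on CTL\* formulae**, establishing the equivalence $K,Q\sat f \iff p\sat f$ for the specific construction $K=\ltstokripke(p)$ and $Q=\{\langle s_0,x\rangle\in S'\}$. The core intuition driving the construction is that a state $\langle s,x\rangle$ in $S'$ records both an LTS state $s$ and a chosen subset $x\subseteq\init(s)$ of the actions enabled at $s$, and then $L'(\langle s,x\rangle)=x$ makes exactly those actions hold as atomic propositions. Thus the atomic case of the induction should fall out almost immediately: $p\sat a$ means $p\suparrow{a}$, i.e.\ $a\in\init(p)=\init(s_0)$, while $K,Q\sat a$ means $a\in L'(\langle s_0,x\rangle)=x$ for some $\langle s_0,x\rangle\in Q$; since $Q$ ranges over \emph{all} $x\subseteq\init(s_0)$, the existential over $Q$ recovers exactly $a\in\init(s_0)$. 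This is precisely why Definition~\ref{def-kripke-set-sat} uses set-satisfaction with an existential quantifier rather than single-state satisfaction.

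First I would dispatch the Boolean cases ($\top$, $\bot$, $\lnot$, $\land$, $\lor$), which are routine since both satisfaction operators are defined by the same Boolean clauses; the only subtlety is that $K,Q\sat\lnot f \iff \lnot(K,Q\sat f)$ combined with the existential reading of $K,Q\sat a$ behaves correctly, and I would verify that the induction hypothesis is stated strongly enough to carry negation through. Next I would handle the path quantifiers $\somepath$ and $\allpaths$ together with the temporal operators $\xnext$, $\until$, $\releases$. The crucial lemma here is a \emph{bisimulation-like correspondence between LTS paths from $p$ and Kripke paths from $Q$}: I would show that every run $\pi = p\suparrow{a_0}s_1\suparrow{a_1}s_2\cdots$ in the LTS is matched by a path $\langle s_0,x_0\rangle\rightarrow\langle s_1,x_1\rangle\rightarrow\cdots$ in $K$ where the labels $x_i$ can be chosen to record the actions actually taken, and conversely. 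The three conditions defining $R'$ in part~3 are exactly what makes this matching possible: condition~(a) forces $N$ to be a set of actions genuinely leading from $s$, condition~(b) ensures $O$ consists of actions enabled at the \emph{target} $t$ (so the proposition labels at each Kripke state faithfully report that state's outgoing LTS actions), and condition~(c) supplies the self-loops on $\langle s,\emptyset\rangle$ states that make $R'$ total, matching the convention on Kripke structures while corresponding to deadlocked or stable LTS states.

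\textbf{The main obstacle will be establishing the path correspondence cleanly}, because the two path notions do not line up step-for-step. An LTS transition $s\Suparrow{n}t$ in condition~(a) absorbs internal $\tau$-actions (it is the double-arrow, weak transition relation), whereas a Kripke path edge is a single step; moreover the quantifier structure is delicate. In Definition~\ref{def-kripke-set-sat}, clause~7 for $\allpaths$ reads ``for \emph{some} $s\in Q$, all paths from $s$ satisfy $f$'' rather than ``for all $s\in Q$'', which is an unusual asymmetry; I would need to check carefully that, because $Q$ here collects states $\langle s_0,x\rangle$ differing only in the proposition-label $x$ while sharing the underlying LTS state $s_0$, this existential-over-$Q$ reading of $\allpaths$ nonetheless coincides with the genuine universal-over-paths reading of $\allpaths$ in the LTS semantics (Definition~\ref{def-lts-sat}, clause~6). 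This hinges on showing that the set of path-\emph{behaviours} available from any $\langle s_0,x\rangle\in Q$ is independent of $x$ and equals the set of LTS runs from $p$, so that the choice of witness $s\in Q$ is immaterial for the temporal subformula. Once that behavioural independence is proved, the inductive steps for $\xnext$, $\until$, and $\releases$ reduce to transcribing Definition~\ref{def-lts-sat}'s path clauses into Definition~\ref{def-kripke-set-sat}'s, using the step-correspondence to translate $\pi^i$ on the LTS side to the $i$-th Kripke state. I would close by noting the construction is manifestly algorithmic: $S'$, $Q$, $R'$, and $L'$ are each finitely computable from the finite presentation of $p$, so $\ltstokripke$ is an effective function as claimed.
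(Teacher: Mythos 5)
Your proposal follows essentially the same route as the paper's own proof: structural induction over CTL* formulae, with the atomic case resolved by the labeling $L'$ together with the existential reading of satisfaction over $Q$, routine Boolean cases, and a path-correspondence lemma (the paper's $\ltstokripke(\pi)=\pi'$ relation) that carries the quantifiers $\somepath$, $\allpaths$ and the temporal operators across. If anything, you are more careful than the paper, which asserts the path correspondence and the harmlessness of the existential-over-$Q$ clause for $\allpaths$ without explicitly addressing the weak-transition ($\Suparrow{}$) mismatch or the behavioural independence of the label component $x$ that you correctly flag as the delicate points.
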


\begin{figure}[tbp]
  \begin{center}
    \begin{tabular}{ccc}
      \mbox{\begin{picture}(28,24)(-4,-2)
          \gasset{Nframe=n,ExtNL=y,NLdist=1,NLangle=0,Nw=2,Nh=2,Nframe=n,Nfill=y}
          \node(p)(10,20){$p$} 
          \node(t)(15,10){$t$}
          \node[NLangle=180](q)(5,10){$q$} \node(s)(10,0){$s$}
          \node[NLangle=180](r)(0,0){$r$} \node(u)(20,0){$u$}
          \drawtrans[r](q,r){$c$} \drawtrans(q,s){$d$}
          \drawtrans[r](p,q){$a$} \drawtrans(p,t){$b$}
          \drawtrans(t,u){$e$}
        \end{picture}}
      & &
      \mbox{\begin{picture}(45,26)(-10,-4)
          \gasset{Nadjust=wh,Nadjustdist=1,Nfill=n}
          \node(p1)(5,18){$p,\{a\}$} 
          \node(q2)(10,9){$q,\{d\}$}
          \node(q1)(0,9){$q,\{c\}$} 
          \node(s)(10,0){$s,\emptyset$}
          \node(r)(0,0){$r,\emptyset$} 
          \node(p2)(25,18){$p,\{b\}$}
          \node(t)(25,9){$t,\{e\}$} 
          \node(u)(25,0){$u,\emptyset$}
          \drawtrans(p1,q1){} 
          \drawtrans(p1,q2){} 
          \drawtrans(q1,r){}
          \drawtrans(q2,s){} 
          \drawtrans(p2,t){} 
          \drawtrans(t,u){}
          \drawloop[l](r){} 
          \drawloop[r](s){} 
          \drawloop[r](u){}
        \end{picture}}
      \\
      $(a)$ & & $(b)$
    \end{tabular}
  \end{center}
  \begin{quote}
    \small
    Each state of the Kripke structure $(b)$ is labeled with the LTS
    state it came from, and the set of propositions that hold in that
    Kripke state.
  \end{quote}
  \caption{A conversion of an LTS $(a)$ to an equivalent Kripke
    structure $(b)$.}
  \label{fig-lts-kripke}
\end{figure}

Figure~\ref{fig-lts-kripke} illustrates graphically how we convert a
labelled transition system to its equivalent Kripke structure.  As
illustrated in the figure, we combine each state in the labelled
transition system with its actions provided as properties to form new
states in the equivalent Kripke structure. The transition relation of
the Kripke structure is formed by the new states and the corresponding
transition relation in the original labelled transition system.  The
labeling function in the equivalent Kripke structure links the actions
to their relevant states.  An LTS states in split into multiple Kripke
states whenever it can evolve differently by performing different
actions (like state $p$ in the figure).

Using such a conversion we can define the semantics of CTL* formulae
with respect to a process rather than Kripke structure.  One
problem---that required the new satisfaction operator for sets of
Kripke states as defined in Definition~\ref{def-kripke-set-sat}---is
introduced by the fact that one state of a process can generate
multiple initial Kripke states.  We believe that the weaker
satisfaction operator from Definition~\ref{def-kripke-set-sat} is
introduced without loss of generality and may even be worked around by
such mechanisms as considering processes with one outgoing transition
(a ``start'' action) followed by their normal behaviour.

\medskip

\begin{proof*}{Theorem~\ref{th-kripke-lts}} The proof relies on the
  properties of the syntax and semantics of CTL* formulae and is done
  by structural induction.

  For the basis of the induction, we note that $\top$ is true for any
  process and for any state in any Kripke structure.  $p\sat \top$ iff
  $K,Q\sat \top$ is therefore immediate.  The same goes for $\bot$ (no
  process and no state in any Kripke structure satisfy $\bot$).
  $p\sat a$ iff $\ltstokripke(p)=K,Q\sat a$ by the definition of
  $\ltstokripke$; indeed, $a \in \init(p)$ (so that $p \sat a$) iff
  $a\in x$ for some state $\langle s,x \rangle\in Q$ that is, $a \in
  L'(\langle s,x\rangle)$.

  On to the inductive step.  $p \sat f'$ iff $\ltstokripke(p) \sat f'$ for any
  formula $f'$ by induction hypothesis, so we take $f'=\lnot f$ and so
  $p \sat \lnot f$ iff $\ltstokripke(p) \sat \neg f$.

  Suppose that $p\sat f$ and [or] $p\sat g$ (so that $p\sat f\land g$
  [$p\sat f\lor g$]).  This is equivalent by induction hypothesis to
  $\ltstokripke(p)\sat f$ and [or] $\ltstokripke(p)\sat g$, that is,
  $\ltstokripke(p)\sat f\land g$ [$\ltstokripke(p)\sat f\lor g$], as
  desired.

  Let now $\pi$ be a path $\pi=p\suparrow{a_{0}} s_{1}\suparrow{a_{1}}
  s_{2}\suparrow{a_{2}} \cdots \suparrow{a_{n}}s_{n}$ starting from a
  process $p$.  According to the definition of $\ltstokripke$, all the
  equivalent paths in the Kripke structure $\ltstokripke(p)$ have the
  form $\pi' = \langle p, A_{0}\rangle \rightarrow \langle s_{1},
  A_{1}\rangle \rightarrow \langle s_{2},
  A_{2}\rangle\rightarrow\cdots \rightarrow \langle s_{n},A_{n}
  \rangle$, such that $a_{i}\in A_{i}$ for all $0\leq i<n$.  Clearly,
  such a path $\pi'$ exists.  Moreover, given some path of form
  $\pi'$, a path of form $\pi$ also exists (because no path in the
  Kripke structure comes out of the blue; instead all of them come
  from paths in the original process).  By abuse of notation we write
  $\ltstokripke(\pi)=\pi'$, with the understanding that this
  incarnation of $\ltstokripke$ is not necessarily a function (it
  could be a relation) but is complete (there exists a path
  $\ltstokripke(\pi)$ for any path $\pi$).  With this notion of
  equivalent paths we can now proceed to path formulae.

  Consider the formula $\xnext f$ such that some path $\pi$ satisfies
  it.  Whenever $\pi \sat \xnext f$, $\pi^{1}\sat f$ and therefore
  $\ltstokripke(\pi)^{1}\sat f$ (by inductive assumption, for indeed
  $f$ is a state, not a path formula) and therefore
  $\ltstokripke(\pi)\sat \xnext f$, as desired.  Conversely,
  $\ltstokripke(\pi)\sat \xnext f$, that is, $\ltstokripke(\pi)^{1}\sat
  f$ means that $\pi^{1}\sat f$ by inductive assumption, and so
  $\pi\sat \xnext f$.
  
  The proof for \eventually, \globally, \until, and \releases\
  operators proceed similarly.  Whenever $\pi\sat \eventually\ f$,
  there is a state $\pi^{i}$ such that $\pi^{i}\sat f$.  By induction
  hypothesis then $\ltstokripke(\pi)^{i}\sat f$ and so
  $\ltstokripke(\pi)\sat \eventually\ f$.  The other way (from
  $\ltstokripke(\pi)$ to $\pi$) is similar.  The \globally\ operator
  requires that all the states along $\pi$ satisfy $f$, which imply
  that all the states in any $\ltstokripke(\pi)$ satisfy $f$, and thus
  $\ltstokripke(\pi)\sat \globally\ f$ (and again things proceed
  similarly in the other direction).  In all, the induction hypothesis
  established a bijection between the states in $\pi$ and the states
  in (any) $\ltstokripke(\pi)$.  This bijection is used in the proof
  for \until\ and \releases\ just as it was used in the above proof
  for \eventually\ and \globally.  Indeed, the states along the path
  $\pi$ will satisfy $f$ or $g$ as appropriate for the respective
  operator, but this translates in the same set of states satisfying
  $f$ and $g$ in $\ltstokripke(\pi)$, so the whole formula (using
  \until\ or \releases\ holds in $\pi$ iff it holds in
  $\ltstokripke(\pi)$.

  Finally, given a formula $\somepath\ f$, $p\sat \somepath\ f$
  implies that there exists a path $\pi$ starting from $p$ that
  satisfies $f$.  By induction hypothesis there is then a path
  $\ltstokripke(\pi)$ starting from $\ltstokripke(p)$ that satisfies
  $f$ (there is at least one such a path) and thus
  $\ltstokripke(p)\sat \somepath\ f$.  The other way around is
  similar, and so is the proof for $\allpaths\ f$ (all the paths $\pi$
  satisfy $f$ so all the path $\ltstokripke(\pi)$ satisfy $f$ as well;
  there are no supplementary paths, since all the paths in
  $\ltstokripke(p)$ come from the paths in $p$).
\end{proof*}

\subsection{Yet Another Constructive Equivalence between LTS and Kripke Structures}
\label{sect-lts-kripke-b}

The function $\ltstokripke$ developed earlier produces a very compact
Kripke structure.  However, a state in the original LTS can result in
multiple equivalent state in the resulting Kripke structure, which in
turn requires a modified notion of satisfaction (over sets of states,
see Definition~\ref{def-kripke-set-sat}).  This in turn implies a
non-standard model checking algorithm.  A different such a conversion
algorithm \cite{nicola90} avoids this issue, at the expense of a
considerably larger Kripke structure.  We now explore a similar
equivalence.

The just mentioned conversion algorithm \cite{nicola90} is based on
introducing intermediate states in the resulting Kripke structure.
These states are labelled with the special proposition $\Delta$ which
is understood to mark a state that is ignored in the process of
determining the truth value of a CTL formula; if $\Delta$ labels a
state then it is the only label for that state.  We therefore base our
construction on the following definition of equivalence between
processes and Kripke structures:

\begin{definition}
  \label{def-Kripke-Kripke1-sat}
  \textsc{Satisfaction for processes:} 
  Given a Kripke structure $K$ and a state $s$ of $K$, the pair $K,s$
  is equivalent to a process $p$, written as $K,s\peq p$ (or
  $p\peq K,s$) iff for any CTL* formula $f$ $K,s\models f$ iff
  $p \models f$.  The operator $\models$ is defined for processes in
  Definition~\ref{def-lts-sat} and for Kripke structures as follows:
  \begin{enumerate}
  \item $p\models \top$  iff $K, s\models \top$
  \item $p\models a$ iff $K, s\models \Delta\ \until\ a$
  \item $p\models\lnot f$ iff $K, s\models\lnot f$
  \item $p\models f\land g$ iff $K, s\models f\land g$
  \item $p\models f\lor g$ iff $K, s\models f\lor g$ 
  \item $p\models\somepath f$ iff $K, s\models\somepath f$
  \item $p\models f\ \until\ g$ iff
    $K, s\models (\Delta\lor\ f)\ \until\ g$
  \item $p\models \xnext\ f$ iff $K, s\models\xnext\ (\Delta\ \until\ f)$
  \item $p\models \eventually f$ iff $K, s\models\eventually\ f$
  \item $p\models \globally\ f$ iff
    $K, s\models\globally\ (\Delta\ \lor\ f)$
  \item $p\models f\ \releases\ g$ iff $K, s\models\ f\ \releases\ (\Delta \lor\ g)$
  \end{enumerate}
\end{definition}

Note that the definition above is stated in terms of CTL* rather than
CTL; however, CTL* is stronger and so equivalence under CTL* implies
equivalence under CTL.

Most of the equivalence is immediate.  However, some cases need to
make sure that the states labelled $\Delta$ are ignored.  This happens
first in $K,s\models \Delta\ \until\ a$, which is equivalent to
$p\models a$.  Indeed, $a$ needs to hold immediately, except that any
preceding states labelled $\Delta$ must be ignored, hence $a$ must be
eventually true and when it becomes so it releases the chain of
$\Delta$ labels.  The formula for $\xnext$ is constructed using the
same idea (except that the formula $f$ releasing the possible chain of
$\Delta$ happens starting from the next state).

Then expression $(\Delta\lor\ f)\ \until\ g$ means that $f$ must
remain true with possible interleaves of $\Delta$ until $g$ becomes
true.  Similarly $f\ \releases\ (\Delta \lor\ g)$ requires that $g$ is
true (with the usual interleaved $\Delta$) until it is released by $f$
becoming true.

Based on this equivalence we can define a new conversion of LTS into
equivalent Kripke structures.  This conversion is again based on a
similar conversion \cite{nicola90} developed in a different context.

\begin{theorem} 
  \label{new-th-kripke-lts}
  There exist at least two algorithmic functions for converting LTS
  into equivalent Kripke structures.  The first is the function
  $\ltstokripke$ described in Theorem~\ref{th-kripke-lts}.

  The new function $\ltstokripkex$ is defined as follows: with
  $\Delta$ a fresh symbol not in $A$: Given an LTS
  $p = (S, A, \rightarrow, s_{0})$, the Kripke structure
  $\ltstokripkex(p) = (S', Q, R', L)$ is given by:
  \begin{enumerate}
  \item $\ap = A\uplus{\Delta}$;
  \item $S' = S\cup\{(r,a,s): a\in A\ \mbox{ and } r\suparrow{a}s\}$;
  \item $Q = \{s_{0}\}$;
  \item $R'=\{(r,s): r\suparrow{\tau} s\}\cup \{(r,(r,a,s)): r\suparrow{a}s\}\cup\{((r,a,s),s): r\suparrow{a}s\}$; 
  \item For $r,s\in S\ and\ a\in A:L(s)=\{\Delta\}\ and\ L((r,a,s))=\{a\}$.
  \end{enumerate}
  Then $p \peq \ltstokripkex(p)$.
\end{theorem}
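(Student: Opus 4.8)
The plan is to establish $p \peq \ltstokripkex(p)$ by structural induction on the CTL* formula, verifying for the concrete structure $K = \ltstokripkex(p)$ with distinguished state $s_0$ each of the eleven clauses that Definition~\ref{def-Kripke-Kripke1-sat} imposes on the equivalence $\peq$. The reading I will use is that these clauses fix a $\Delta$-aware translation $f \mapsto \mathcal{T}(f)$ of process formulae into Kripke formulae, and that $p \peq K, s_0$ amounts to $p \models f$ (process semantics, Definition~\ref{def-lts-sat}) holding if and only if $K, s_0 \models \mathcal{T}(f)$ (ordinary CTL* semantics) for every $f$. Since the first function $\ltstokripke$ was already treated in Theorem~\ref{th-kripke-lts}, the entire content lies in $\ltstokripkex$.

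First I would prove a path-correspondence lemma that serves as the backbone of the induction. Every run $\pi = s_0 \suparrow{a_0} s_1 \suparrow{a_1} \cdots$ of $p$ maps to a path $\Pi$ of $K$ obtained by replacing each visible step $s_i \suparrow{a} s_{i+1}$ with the two-edge segment $s_i \to (s_i, a, s_{i+1}) \to s_{i+1}$ and each internal step $s_i \suparrow{\tau} s_{i+1}$ with the single edge $s_i \to s_{i+1}$; conversely, directly from the definition of $R'$, every path of $K$ issuing from $s_0$ has exactly this shape, so the construction introduces no spurious paths. The crucial qualitative fact the lemma records is that the original states $s_i$ are precisely the $\Delta$-labelled positions of $\Pi$ while the inserted states $(s_i, a, s_{i+1})$ are precisely the action-labelled positions; this is what lets the $\Delta$-markers single out exactly the states that the translation must skip.

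Then I would run the induction. The base cases ($\top$, $\bot$, and an atomic $a$) and the Boolean cases are immediate from the labelling $L$ and the induction hypothesis, and the quantifier cases $\somepath$ and $\allpaths$ follow from the lemma once the correspondence is known to be onto at the level of whole paths. The temporal cases carry the weight: for each operator I would show that the $\Delta$-guarded rewriting of Definition~\ref{def-Kripke-Kripke1-sat} makes $K$ jump over the inserted states. Thus $p \models a$ matches $K, s_0 \models \Delta \until a$ because every state met before the action-labelled target is $\Delta$-labelled; $p \models \xnext f$ matches $K, s_0 \models \xnext(\Delta \until f)$ because one process step is either a lone internal edge or a visible edge followed by an inserted state, and the $\Delta \until$ wrapper absorbs exactly that extra state; and the $\Delta \lor \cdot$ guards in the $\until$, $\globally$, and $\releases$ clauses render the obligation on the argument formula vacuous on precisely the inserted and internal $\Delta$-positions while keeping it in force on the genuine process positions.

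The hard part will be the bookkeeping forced by the step-count mismatch: a visible transition costs two steps in $K$ but one in $p$, whereas an internal transition costs one step in each, so the witness index $j$ of an $\until$ (and the universally quantified $j$ of $\releases$ and $\globally$) ranges over a set of Kripke positions that is neither a fixed dilation nor a fixed shift of the process positions. I expect to control this by stating the lemma in the strong form that fixes the bijection between process positions and $\Delta$-positions of $\Pi$, and then transporting each witness along that bijection, checking that every skipped position is $\Delta$-labelled (so its $\Delta \lor \cdot$ guard holds) and bears no obligation under the process semantics. Particular care will be needed for chains of consecutive internal transitions, which yield chains of consecutive $\Delta$-states that the $\Delta \until$ and $\Delta \lor \cdot$ constructs must absorb uniformly, and for the interaction of nondeterminism with $\somepath$ and $\allpaths$, where surjectivity of the correspondence must be argued for entire paths rather than single steps.
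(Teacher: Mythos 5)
Your proposal is correct and takes essentially the same route as the paper: a structural induction over CTL* formulae resting on a run-to-path correspondence for $\ltstokripkex(p)$ in which the original LTS states are exactly the $\Delta$-labelled positions and the inserted states exactly the action-labelled ones, so that the $\Delta$-aware satisfaction operator of Definition~\ref{def-Kripke-Kripke1-sat} skips precisely the inserted states. The only difference is rigour, not structure: where the paper simply declares the $\Delta$-states ``ignored'' and then reuses the argument of Theorem~\ref{th-kripke-lts} on the $\Delta$-free projection of each path, you make the position bijection, the witness transport for the temporal operators, and the treatment of $\tau$-chains explicit --- a sharper write-up of the same argument.
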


\begin{proof}
  We prove the stronger equivalence over CTL* rather than CTL by
  structural induction.  Since $\Delta$ is effectively handled by the
  satisfaction operator introduced in
  Definition~\ref{def-Kripke-Kripke1-sat} it will turn out that there
  is no need to mention it at all.

  For the basis of the induction, we note that $\top$ is true for any
  process and for any state in any Kripke structure.  $p\models \top$
  iff $\ltstokripkex(p)\models \top$ is therefore immediate.  The same
  goes for $\bot$ (no process and no state in any Kripke structure
  satisfy $\bot$).  $p\models a$ iff $\ltstokripkex(p)\models a$;
  Indeed, $a \in\ A$(so that $p\models a$) iff $a \in\ L((r,a,s))$.

  That $p \models \lnot f$ iff $\ltstokripkex(p) \models \neg f$ is
  immediately given by the induction hypothesis that $p \models f$ iff
  $\ltstokripkex(p) \models f$.

  Suppose that $p\models f$ and [or] $p\models g$ (so that
  $p\models f\land g$ [$p\models f\lor g$]).  This is equivalent by
  induction hypothesis to $\ltstokripkex(p)\models f$ and [or]
  $\ltstokripkex(p)\models g$, that is,
  $\ltstokripkex(p)\models f\land g$
  [$\ltstokripkex(p)\models f\lor g$], as desired.

  Let now $\pi'$ be a path
  $\pi'=p\suparrow{a_{0}} s_{1}\suparrow{a_{1}} s_{2}\suparrow{a_{2}}
  \cdots \suparrow{a_{n}}s_{n}$ starting from a process $p$.
  According to the definition of $\ltstokripkex$, all the equivalent
  paths in the Kripke structure $\ltstokripkex(p)$ have the form
  $\pi' =\Delta \rightarrow A_{0} \rightarrow \Delta \rightarrow
  A_{1}\rightarrow \Delta \rightarrow A_{2}\rightarrow\cdots \Delta
  \rightarrow A_{n}$, such that $a_{i}\in A_{i}$ for all $0\leq
  i<n$. Clearly, such a path $\pi'$ exists. According to the function
  $\ltstokripkex$, we know that $\Delta$ is a symbol that stands for
  states in the LTS and has no meaning in the Kripke structure.  The
  satisfaction operator for Kripke structures
  (Definition~\ref{def-Kripke-Kripke1-sat}) is specifically designed
  to ignore the $\Delta$ label and this insures that the part $\pi'$
  is equivalent to the path
  $\pi=A_{0} \rightarrow A_{1}\rightarrow A_{2}\rightarrow\cdots
  \rightarrow A_{n}$ with $a_{i}\in A_{i}$ for all $0\leq i<n$ and so
  we will use this form for the reminder of the proof.
  
  Consider the formula $\xnext f$ such that some path $\pi$ satisfies
  it.  Whenever $\pi \models \xnext f$, $\pi^{1}\models f$ and
  therefore $\ltstokripkex(\pi)^{1}\models f$ (by inductive
  assumption, for indeed $f$ is a state, not a path formula) and
  therefore $\ltstokripkex(\pi)\models \xnext f$, as desired.
  Conversely, $\ltstokripkex(\pi)\models \xnext f$, that is,
  $\ltstokripkex(\pi)^{1}\models f$ means that $\pi^{1}\models f$ by
  inductive assumption, and so $\pi\models \xnext f$.
  
  The proof for \eventually, \globally, \until, and \releases\
  operators proceed similarly.  Whenever $\pi\models \eventually\ f$,
  there is a state $\pi^{i}$ such that $\pi^{i}\models f$.  By
  induction hypothesis then $\ltstokripkex(\pi)^{i}\models f$ and so
  $\ltstokripkex(\pi)\models \eventually\ f$.  The other way (from
  $\ltstokripkex(\pi)$ to $\pi$) is similar.  The \globally\ operator
  requires that all the states along $\pi$ satisfy $f$, which implies
  that all the states in any $\ltstokripkex(\pi)$ satisfy $f$, and
  thus $\ltstokripkex(\pi)\models \globally\ f$ (and again things
  proceed similarly in the other direction).  In all, the induction
  hypothesis established a bijection between the states in $\pi$ and
  the states in (any) $\ltstokripkex(\pi)$.  This bijection is used in
  the proof for \until\ and \releases\ just as it was used in the
  above proof for \eventually\ and \globally.  Indeed, the states
  along the path $\pi$ will satisfy $f$ or $g$ as appropriate for the
  respective operator, but this translates in the same set of states
  satisfying $f$ and $g$ in $\ltstokripkex(\pi)$, so the whole formula
  (using \until\ or \releases\ holds in $\pi$ iff it holds in
  $\ltstokripkex(\pi)$).

  Finally, given a formula $\somepath\ f$, $p\models \somepath\ f$
  implies that there exists a path $\pi$ starting from $p$ that
  satisfies $f$.  By induction hypothesis there is then a path
  $\ltstokripkex(\pi)$ starting from $\ltstokripkex(p)$ that satisfies
  $f$ (there is at least one such a path) and thus
  $\ltstokripkex(p)\models \somepath\ f$.  The other way around is
  similar, and so is the proof for $\allpaths\ f$ (all the paths $\pi$
  satisfy $f$ so all the path $\ltstokripkex(\pi)$ satisfy $f$ as well;
  there are no supplementary paths, since all the paths in
  $\ltstokripkex(p)$ come from the paths in $p$).
\end{proof}

\begin{figure}[tbp]
  \centering
  \begin{picture}(28,38)(-4,-3)
    \gasset{Nframe=n,ExtNL=y,NLdist=1,NLangle=0,Nw=2,Nh=2,Nframe=n,Nfill=y}
    \node(p)(10,32){$p$}
    \node[NLangle=180](q)(5,16){$q$}
    \node(t)(15,16){$t$}
    \node[NLangle=180](r)(0,0){$r$}
    \node(s)(10,0){$s$}
    \node(u)(20,0){$u$}

    \drawtrans[r](p,q){$a$}
    \drawtrans(p,t){$b$}
    \drawtrans[r](q,r){$c$}
    \drawtrans(q,s){$d$}
    \drawtrans(t,u){$e$}
  \end{picture}
  \qquad
  \begin{picture}(26,38)(-3,-3)
    \gasset{Nadjust=wh,Nadjustdist=1,Nfill=n}
    \node(t1)(12.5,32){$\triangle$}
    \node(a)(5,24){$a$}
    \node(b)(20,24){$b$}
    \node(t2)(5,16){$\triangle$}
    \node(t3)(20,16){$\triangle$}
    \node(c)(0,8){$c$}
    \node(d)(10,8){$d$}
    \node(e)(20,8){$e$}
    \node(t4)(0,0){$\triangle$}
    \node(t5)(10,0){$\triangle$}
    \node(t6)(20,0){$\triangle$}

    \drawtrans(t1,a){}
    \drawtrans(t1,b){}
    \drawtrans(a,t2){}
    \drawtrans(b,t3){}
    \drawtrans(t2,c){}
    \drawtrans(t2,d){}
    \drawtrans(t3,e){}
    \drawtrans(c,t4){}
    \drawtrans(d,t5){}
    \drawtrans(e,t6){}
  \end{picture}
  \\
  $(a)$ \qquad\qquad\qquad\qquad\qquad\qquad $(b)$
  \caption{Conversion of an LTS $(a)$ to its equivalent Kripke structure $(b)$.}
  \label{new-fig-lts-ks}
\end{figure}
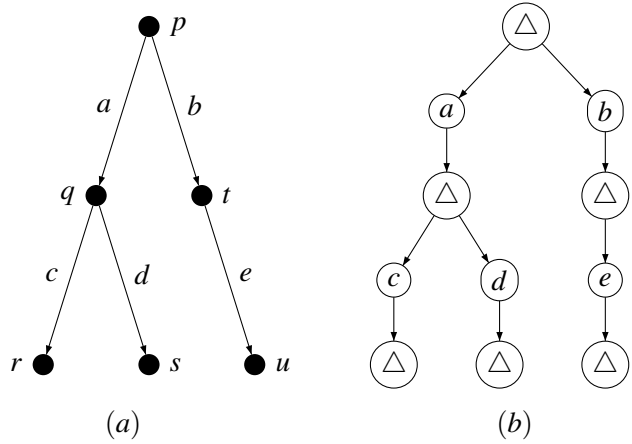

The process of the new version described in
Theorem~\ref{new-th-kripke-lts} is most easily described graphically;
refer for this purpose to Figure~\ref{new-fig-lts-ks}.  Specifically,
the function $\ltstokripkex$ converts the LTS given in Figure
~\ref{new-fig-lts-ks}$(a)$ into the equivalent Kripke structure shown
in Figure~\ref{new-fig-lts-ks}$(b)$.  In this new structure, instead
of combining each state with its corresponding actions in the LTS (and
thus possibly splitting the LTS state into multiple Kripke structure
states), we use the new symbol $\Delta$ to stand for the original LTS
states.  Every $\Delta$ state of the Kripke structure is the LTS
state, and all the other states in the Kripke structure are the
actions in the LTS.  This ensures that all states in the Kripke
structure corresponding to actions that are outgoing from a single LTS
state have all the same parent.  This in turn eliminates the need for
the weaker satisfaction operator over sets of states
(Definition~\ref{def-kripke-set-sat}).

\section{CTL Is Equivalent to Failure Trace Testing}
\label{sect-eq-ctl-ft}

We now proceed to show the equivalence between CTL formulae and
failure trace tests.  Let \procset\ be the set of all processes,
\ftset\ the set of all failure trace tests, and \ctlset\ the set of
all CTL formulae.  We have:

\begin{theorem}
  \label{th-ctl-ft-main}
  \begin{enumerate}
  \item For some $t\in \ftset$ and $f\in \ctlset$, whenever
    $p\ \may\ t$ if and only if $\ltstokripke(p) \sat f$ for any
    $p\in \procset$ we say that $t$ and $f$ are equivalent.  Then, for
    every failure trace test there exists an equivalent CTL formula
    and the other way around.  Furthermore a failure trace test can be
    algorithmically converted into its equivalent CTL formula and the
    other way around.
  \item For some $t\in \ftset$ and $f\in \ctlset$, whenever
    $p\ \may\ t$ if and only if $\ltstokripkex(p) \sat f$ for any
    $p\in \procset$ we say that $t$ and $f$ are equivalent.  Then, for
    every failure trace test there exists an equivalent CTL formula
    and the other way around.  Furthermore a failure trace test can be
    algorithmically converted into its equivalent CTL formula and the
    other way around.
\end{enumerate}
\end{theorem}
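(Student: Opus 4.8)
The plan is to prove part~1 (for $\ltstokripke$) in full and then observe that part~2 follows by the same argument with $\ltstokripkex$ substituted for $\ltstokripke$ throughout, the only difference being that the intermediate $\Delta$-labelled states are transparently skipped by the satisfaction operator of Definition~\ref{def-Kripke-Kripke1-sat}; thus the formulae built for the two structures differ only in the $\Delta$-bookkeeping idioms (such as $\Delta\ \until\ a$) that are already absorbed into that definition. Since the two directions of the equivalence are independent constructions, I would split the proof into a failure-trace-test-to-CTL lemma (Section~\ref{sect-ft-to-ctl}) and a CTL-to-failure-trace-test lemma (Section~\ref{sect-ctl-to-ft}), each carrying its own induction and its own correctness argument anchored on the path correspondence from Theorem~\ref{th-kripke-lts} (respectively Theorem~\ref{new-th-kripke-lts}).

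For the direction from tests to formulae I would first invoke Proposition~\ref{th-stable-failure-trace} to reduce an arbitrary $t\in\ftset$ to the set $T(t)\subseteq\setests$ of sequential tests it is equivalent to, since $p\ \may\ t$ iff $p\ \may\ t'$ for some $t'\in T(t)$; by the bijection $\ftr$/$\seqt$ each such $t'$ is simply a failure trace $f=A_{0}a_{1}A_{1}\cdots a_{n}A_{n}$, and $p\ \may\ t'$ iff $f\in\ftr(p)$. The heart of the construction is then a recursively defined formula $\phi_{f}$ that holds at $\ltstokripke(p)$ exactly when $f$ is a failure trace of $p$: refusal of a set $A_{i}$ is written as $\Land_{a\in A_{i}}\lnot a$ (recall that in $\ltstokripke(p)$ the proposition $a$ marks precisely the states where $a$ is enabled, and that the set-satisfaction of Definition~\ref{def-kripke-set-sat} absorbs the multiple initial copies), enabledness of the next action $a_{i+1}$ as the atom $a_{i+1}$, and advancing along the trace as one nesting of $\somepath\xnext$, schematically $\phi_{A_{i}a_{i+1}\cdots}=\bigl(\Land_{a\in A_{i}}\lnot a\bigr)\land a_{i+1}\land\somepath\xnext\,\phi_{A_{i+1}\cdots}$ with base $\phi_{A_{n}}=\Land_{a\in A_{n}}\lnot a$. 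Correctness is an induction on $|f|$ riding on the fact, already proved for Theorem~\ref{th-kripke-lts}, that trace paths of $p$ and paths of $\ltstokripke(p)$ correspond; the formula equivalent to $t$ is then $\Lor_{t'\in T(t)}\phi_{\ftr(t')}$, and when cycles in $p$ or an unbounded $T(t)$ would make this disjunction infinite I would appeal to the cycle treatment of Section~\ref{sect-cycle} to fold the recurrent refusal structure into a finite formula using the repeating temporal operators.

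For the reverse direction I would proceed by structural induction on the CTL formula $f$, but first push all negations inward using the standard CTL dualities (for instance $\lnot\somepath\xnext g\equiv\allpaths\xnext\lnot g$, together with the De~Morgan laws exchanging $\until$ with $\releases$ and $\eventually$ with $\globally$) so that negation occurs only on atomic propositions. This normal form is what makes the translation possible at all: a positive atom $a$ is realised by a test that performs $a$, whereas a negated atom $\lnot a$ — meaning $a$ is disabled — is realised through the refusal machinery, namely by offering $\{a\}$ alongside the deadlock-detection action $\theta$, so that the negative information CTL expresses with $\lnot a$ is exactly the information a failure trace records in a refusal set. Disjunction then matches the generalized choice $\mchoice$ and the $T(t)$ mechanism (may-passing one of a set of tests), while the path quantifiers and temporal operators dictate how the atomic tests are sequenced and branched; the resulting test suite is shown to may-pass $p$ iff $\ltstokripke(p)\sat f$ by induction, again moving between the two semantics via Theorem~\ref{th-kripke-lts}.

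The step I expect to be the main obstacle is this reverse direction, and within it the treatment of conjunction and of the universal operators $\allpaths\xnext$, $\allpaths\globally$, $\allpaths\eventually$, and the all-paths $\until$/$\releases$: because may-testing is inherently existential (a single successful run suffices), a disjunctive property is effortless to realise as a set of tests, but a conjunctive or genuinely universal property is not, and it is not obvious that the class of may-testable properties is closed under these operations. Reducing to negation normal form tames atomic negation, yet I would still expect to need a careful product-style test construction, supported by a lemma computing the may-behaviour of the composite test as the intended Boolean or branching combination of its parts, in order to cover $\land$ and the all-paths operators; establishing that this construction is simultaneously correct, complete, and algorithmic is where the real content of the theorem lies.
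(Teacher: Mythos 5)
Your top-level decomposition (one lemma per direction, part~2 handled by the $\Delta$-transparency of Definition~\ref{def-Kripke-Kripke1-sat}) is indeed how the paper organizes Theorem~\ref{th-ctl-ft-main}, but both of your constructions have genuine gaps. In the tests-to-CTL direction, reducing $t$ to $T(t)\subseteq\setests$ and forming $\Lor_{t'\in T(t)}\phi_{\ftr(t')}$ fails because $T(t)$ is in general infinite --- any test with a cycle admits unboundedly long successful runs, and Proposition~\ref{th-stable-failure-trace} promises no finiteness --- so the proposed disjunction is not a CTL formula. Your fallback to Section~\ref{sect-cycle} does not repair this: Theorem~\ref{th-ctl-ftr-compt} folds the \emph{syntactic} loop structure of a single test term (with marked entry actions $\start(a)$) into a finite formula; it provides no way to fold an infinite family of failure traces, and recovering the loop structure from the family $\{\ftr(t'):t'\in T(t)\}$ amounts to redoing the structural induction on test syntax that the paper's Lemma~\ref{th-ft-ctl} performs directly ($\fttoctl_{\ltstokripke}(a;t)=a\land\somepath\xnext\ \fttoctl_{\ltstokripke}(t)$, $\fttoctl_{\ltstokripke}(\mchoice T)=\Lor\{\fttoctl_{\ltstokripke}(t'):t'\in T\}$, plus the $\theta$-choice clause), bypassing sequential tests altogether. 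Two further problems: encoding the refusal set $A_{i}$ as $\Land_{a\in A_{i}}\lnot a$ is unsound at the single Kripke state reached after a $\somepath\xnext$ step, because $\ltstokripke$ explicitly creates reachable states $\langle s,x\rangle$ with $x\subsetneq\init(s)$, which satisfy $\lnot a$ for $a\in\init(s)\setminus x$ even though $s$ does not refuse $\{a\}$ (the set-level satisfaction you invoke protects only the initial step); and your claim that part~2 is the same argument ``modulo $\Delta$-bookkeeping'' is contradicted by Lemma~\ref{th-ft-ctl-x}, where the paper shows that the $a\land\somepath\xnext$ pattern is wrong under $\ltstokripkex$ (counterexample $p=a\ \choice\ b;p'$) and must be replaced by an $\until$-based formula.

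In the CTL-to-tests direction you have correctly named the crux --- conjunction and the $\allpaths$-operators under an inherently existential may-semantics --- but you have not solved it: the ``product-style test construction'' is left entirely unspecified, and that is precisely where the content of the theorem lies. The paper's resolution is not negation normal form plus products, but closure of may-testing under Boolean operations: Lemma~\ref{th-ft-neg} constructs for every test $t$ a test $\nottest{t}$ with $p\ \may\ t$ iff $\lnot(p\ \may\ \nottest{t})$ (delete the $\gamma$-transitions and any $\theta$ leading to them, then attach $\theta;\gamma$ escapes to the affected states); Lemma~\ref{th-ft-or} realizes the disjunction of two tests as a single test; and Corollary~\ref{th-ft-and} obtains conjunction by De~Morgan. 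Once negation of tests is available, normal form is unnecessary: every $\allpaths$-operator is the dual of a $\somepath$-operator, and $\somepath\xnext$, $\somepath\eventually$, $\somepath\globally$, $\somepath\until$ together with the Boolean connectives form a complete basis for CTL, so only the existential cases need bespoke tests (e.g., $\ctltoft_{\ltstokripke}(\somepath\xnext\ f)=\mchoice\{a;\ctltoft_{\ltstokripke}(f):a\in A\}$ and recursive tests for $\somepath\eventually$, $\somepath\globally$, $\somepath\until$). Without a negation lemma (or the product construction you postpone), your induction cannot be completed; once you prove such a lemma, you will have essentially reconstructed the paper's Section~\ref{sect-ctl-to-ft}.
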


\begin{proof}
  The proof of Item~1 follows from Lemma~\ref{th-ft-ctl} (in
  Section~\ref{sect-ft-to-ctl} below) and Lemma~\ref{th-ctl-ft} (in
  Section~\ref{sect-ctl-to-ft} below).  The algorithmic nature of the
  conversion is shown implicitly in the proofs of these two results.
  The proof of Item~2 is fairly similar and is summarized in
  Lemma~\ref{th-ft-ctl-x} (Section~\ref{sect-ft-to-ctl}) and
  Lemma~\ref{th-ctl-ft-x} (Section~\ref{sect-ctl-to-ft}).
\end{proof}

The remainder of this section is dedicated to the proof of the lemmata
mentioned above and so the actual proof of this result.  Note
incidentally that Lemmata~\ref{th-ctl-ft} and~\ref{th-ctl-ft-x} will
be further improved in Theorem~\ref{th-ctl-ftr-compt}.

\subsection{From Failure Trace Tests to CTL Formulae}
\label{sect-ft-to-ctl}

\begin{lemma}
  \label{th-ft-ctl}
  There exists a function $\fttoctl_{\ltstokripke}:\ftset\rightarrow\ctlset$ such
  that $p\ \may\ t$ if and only if $\ltstokripke(p)\sat \fttoctl_{\ltstokripke}(t)$ for any
  $p\in\procset$.
\end{lemma}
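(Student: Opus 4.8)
The plan is to build the translation $\fttoctl_{\ltstokripke}$ compositionally, first reducing a general failure trace test to sequential tests and then translating each sequential test (equivalently, its failure trace) into a CTL formula. By Proposition~\ref{th-stable-failure-trace} there is a set $T(t)\subseteq\setests$ with $p\ \may\ t$ iff $p\ \may\ t'$ for some $t'\in T(t)$, and a sequential test is interchangeable with its failure trace via $p\ \may\ t'$ iff $\ftr(t')\in\ftr(p)$. Hence I would set $\fttoctl_{\ltstokripke}(t)=\Lor_{t'\in T(t)}\fttoctl_{\ltstokripke}(t')$ and concentrate on a single failure trace $f=A_{0}a_{1}A_{1}\cdots a_{n}A_{n}$; the disjunction keeps the construction algorithmic since $T(t)$ is obtained effectively from $t$.

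For a single failure trace I would recurse on the alternation of refusal sets and actions: a refusal set $A_{i}$ becomes the conjunction $\Land_{b\in A_{i}}\lnot b$ (none of the refused actions is enabled), an action $a_{i+1}$ becomes a one-step move $\somepath\xnext(\cdots)$ into a successor carrying the remaining formula, and the empty tail (the image of $\pass$) becomes $\top$, so that the formula has the shape
\[
\fttoctl_{\ltstokripke}(f)=\Bigl(\Land_{b\in A_{0}}\lnot b\Bigr)\land a_{1}\land\somepath\xnext\Bigl(\bigl(\Land_{b\in A_{1}}\lnot b\bigr)\land a_{2}\land\somepath\xnext(\cdots)\Bigr),
\]
where each proposition $a$ is read on $\ltstokripke(p)$ (Theorem~\ref{th-kripke-lts}) as ``the current process state enables $a$''. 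Correctness is then an induction on $|f|$ proving both directions. For the direction $f\in\ftr(p)\Rightarrow\ltstokripke(p)\sat\fttoctl_{\ltstokripke}(f)$ I would read off a witnessing Kripke path from the run $p\Suparrow{\varepsilon}p_{0}\Suparrow{a_{1}}p_{1}\Suparrow{a_{2}}\cdots$, taking at each $p_{i}$ a state $\langle p_{i},N_{i}\rangle$ whose proposition set $N_{i}$ collects exactly the actions leading to $p_{i+1}$; here $A_{i}\cap N_{i}=\emptyset$ holds for free, since the failure-trace definition gives $A_{i}\subseteq A\setminus\init(p_{i})$ while $N_{i}\subseteq\init(p_{i})$. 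The converse direction reads a witnessing path back into a run of $p$ performing the prescribed actions.

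The main obstacle is the faithful treatment of the refusals, and this is exactly where the set-based operator of Definition~\ref{def-kripke-set-sat} becomes indispensable. Refusing $A_{i}$ at $p_{i}$ means $A_{i}\cap\init(p_{i})=\emptyset$, a statement about the \emph{whole} enabled set of $p_{i}$; but a single Kripke state $\langle p_{i},N_{i}\rangle$ exposes only the subset $N_{i}$, so reading $\Land_{b\in A_{i}}\lnot b$ at one split state would merely assert $A_{i}\cap N_{i}=\emptyset$ and admit spurious witnesses (e.g.\ the split state with $N_{i}=\emptyset$ vacuously refuses everything, yet the action $a_{i+1}$ may still demand an enabling). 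At the initial state this is harmless precisely because $\ltstokripke(p)$ is evaluated against the full set $Q$ of splittings of $s_{0}$, and Definition~\ref{def-kripke-set-sat} yields $\ltstokripke(p)\sat\lnot b$ iff $b\notin\init(s_{0})$, i.e.\ genuine refusal. The delicate point---and what I expect to be the crux of the lemma---is to propagate this set-level reading past each $\somepath\xnext$, so that $A_{i}$ is always tested against the full family of splittings of $p_{i}$ (equivalently against $\init(p_{i})$) rather than against one impoverished split state, while still retaining enough control over the chosen splitting to move along the intended action $a_{i+1}$. The soundness direction (formula $\Rightarrow$ failure trace) will hinge entirely on getting this aggregation right; by contrast, completeness is routine once the witnessing path is built from the maximal admissible splittings.
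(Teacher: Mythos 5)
Your route is genuinely different from the paper's---the paper never invokes Proposition~\ref{th-stable-failure-trace}, but instead builds $\fttoctl_{\ltstokripke}$ by structural induction on the TLOTOS syntax of tests ($\pass$, $\pstop$, $\internal;t$, $a;t$, $\mchoice T$, $t_{1}\ \choice\ \theta;t$)---and on sequential tests your per-trace formula agrees with the paper's output up to De~Morgan (your $\Land_{b\in A_{i}}\lnot b$ versus the paper's $\lnot(\Lor\init(t_{1}))$). But the proposal has a genuine gap, and it is precisely the step you label ``the crux'' and then leave open: nothing in your plan makes a refusal conjunction nested under $\somepath\xnext$ mean genuine refusal, and without that the construction is not merely unfinished but unsound as written. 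Concretely, take $p=a;p_{1}$ with $p_{1}=a;\pstop\ \choice\ b;\pstop$, and the failure trace $f=a\,\{b\}\,a$ (perform $a$, refuse $\{b\}$, perform $a$; empty refusal sets omitted), whose sequential test is essentially $\seqt(f)=a;(b;\pstop\ \choice\ \theta;a;\pass)$. Since $b\in\init(p_{1})$ we have $f\notin\ftr(p)$, hence $\lnot(p\ \may\ \seqt(f))$ by Proposition~\ref{th-stable-failure-trace}. Yet your formula $a\land\somepath\xnext\,(\lnot b\land a\land\somepath\xnext\,\top)$ \emph{is} satisfied by $\ltstokripke(p)$: start at the initial state $\langle p,\{a\}\rangle$ and move to $\langle p_{1},\{a\}\rangle$, a legal state and legal successor in Theorem~\ref{th-kripke-lts} because $\{a\}\subseteq\init(p_{1})$, $p\Suparrow{a}p_{1}$ and $p_{1}\Suparrow{a}\pstop$; at that \emph{single} state $\lnot b$ holds, since $b\notin L(\langle p_{1},\{a\}\rangle)=\{a\}$, even though $p_{1}$ can perform $b$. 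The set-based operator of Definition~\ref{def-kripke-set-sat} cannot rescue you here: it acts only at the outermost level, on the initial set $Q$; the moment the formula crosses $\somepath\xnext$, the path clause hands the residue back to ordinary single-state satisfaction, and CTL offers no construct that quantifies over ``all splittings of $p_{i}$'' from inside a path. So the soundness direction fails exactly where you predicted it would hinge, and you supply no mechanism for the aggregation you correctly identify as necessary.

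For comparison, the paper's proof never evaluates a refusal along one chosen path inside $\ltstokripke(p)$. Its induction is on the structure of the test, and its inductive hypothesis is stated for whole processes ($p'\ \may\ t$ iff $\ltstokripke(p')\sat\fttoctl_{\ltstokripke}(t)$ for \emph{any} $p'$), so every refusal check $\lnot(\Lor\init(t_{1}))$ introduced by a $\theta$-choice is discharged, as the argument is written, as a top-level set-based formula of the re-rooted structure $\ltstokripke(p')$ of the residual process---which is exactly the device your plan lacks. There is also a problem with your very first step: $\Lor_{t'\in T(t)}\fttoctl_{\ltstokripke}(t')$ is a CTL formula only if $T(t)$ is finite, and Proposition~\ref{th-stable-failure-trace} guarantees no such thing (a test with a cycle has infinitely many failure traces), while over a countably infinite alphabet $A$ even a single refusal set can make $\Land_{b\in A_{i}}\lnot b$ an infinite conjunction; the paper's syntax-directed construction keeps formulas finite for finite tests and defers cyclic tests to the separate Theorem~\ref{th-ctl-ftr-compt}.
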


\begin{proof}
  The proof is done by structural induction over tests.  In the
  process we also construct (inductively) the function $\fttoctl_{\ltstokripke}$.

  We put $\fttoctl_{\ltstokripke}(\pass) = \top$. Any process passes \pass\ and any
  Kripke structure satisfies $\top$, thus it is immediate that $p\
  \may\ \pass$ iff $\ltstokripke(p)\sat\top = \fttoctl_{\ltstokripke}(\pass)$.  Similarly, we put
  $\fttoctl_{\ltstokripke}(\pstop) = \bot$. No process passes \pstop\ and no Kripke
  structure satisfies $\bot$.  

  On to the induction steps now.  We put $\fttoctl_{\ltstokripke}(\mathbf{i}; t) =
  \fttoctl_{\ltstokripke}(t)$: an internal action in a test is not seen by the process
  under test by definition.  We then put $\fttoctl_{\ltstokripke}(a; t) = a \land
  \somepath\xnext\ \fttoctl_{\ltstokripke}(t)$.  We note that $p\ \may\ (a; t)$ iff $p\
  \may\ a$ and $p'\ \may\ t$ for some $p\suparrow{a}p'$.  Now, $p\
  \may\ a$ iff $\ltstokripke(p)\sat a$ by the construction of $\ltstokripke$, and also
  $p'\ \may\ t$ iff $\ltstokripke(p')\sat \fttoctl_{\ltstokripke}(t)$ by induction hypothesis.  By
  Theorem~\ref{th-kripke-lts}, when we convert $p$ to an equivalent
  Kripke structure $\ltstokripke(p)$ we take as new states the original states
  together with their outgoing actions. So once we are (in $\ltstokripke(p)$)
  in a state that satisfies $a$, all the next states of that state
  correspond to the states following $p$ after executing $a$.
  Therefore, $\xnext(\fttoctl_{\ltstokripke}(t))$ is satisfied in exactly those states in
  which $t$ must succeed.  Thus $p\ \may\ a;t$ iff $\ltstokripke(p)\sat a\land
  \somepath\xnext \fttoctl_{\ltstokripke}(t)$.  For illustration purposes note that in
  Figure~\ref{fig-lts-kripke} the initial state $p$ becomes two
  initial states $(p, \{a\})$ and $(p, \{b\})$; the next state of the
  state satisfying the property $a$ in the Kripke structure contains
  only $q$ (and never $t$).

  Note now that $\choice$ is just syntactical sugar, for indeed
  $t_{1}\ \choice\ t_{2}$ is perfectly equivalent with
  $\mchoice\{t_{1},t_{2}\}$.  We put\footnote{As usual $\Lor \{t_{1},
    \ldots, t_{n}\}$ is a shorthand for $t_{1}\lor \cdots\lor t_{n}$.}
  $\fttoctl_{\ltstokripke}(\mchoice T) = \Lor \{ \fttoctl_{\ltstokripke}(t): t\in T\}$.  $p\ \may\
  \mchoice T$ iff $p\ \may\ t$ for at least one $t\in T$ iff
  $\ltstokripke(p)\sat \fttoctl_{\ltstokripke}(t)$ for at least one $t\in T$ (by
  induction hypothesis) iff $\ltstokripke(p)\sat \Lor \{ \fttoctl_{\ltstokripke}(t):
  t\in T\}$.

  We finally get to consider $\theta$.  Note first that whenever
  $\theta$ does not participate in a choice it behaves exactly like
  $\mathbf{i}$, so we assume without loss of generality that $\theta$
  appears only in choice constructs.  We also assume without loss of
  generality that every choice contains at most one top-level
  $\theta$, for indeed $\theta;t_{1}\ \choice\ \theta;t_{2}$ is
  equivalent with $\theta;(t_{1}\ \choice\ t_{2})$.  We put
  $\fttoctl_{\ltstokripke}(t_{1}\ \choice\ \theta;t) = ((\Lor \init(t_{1})) \land
  \fttoctl_{\ltstokripke}(t_{1})) \lor (\lnot (\Lor \init(t_{1})) \land
  \fttoctl_{\ltstokripke}(t))$.  

  According to the TLOTOS definition of $\parallelth$ (see
  Section~\ref{sect-ft}), if a common action is available for both $p$
  and $t$ then the deadlock detection action $\theta$ will not play
  any role.  In other words, whenever $p\Suparrow{a}$ such that $a\in
  \init(t_{1})$ we have $p\ \may\ t_{1}\ \choice\ \theta;t$ iff $p\
  \may\ t_{1}$.  We further note that $p\Suparrow{a}$ is equivalent to
  $\ltstokripke(p)\sat a$ and so given the inductive hypothesis (that
  $p'\ \may\ t_{1}$ iff $\ltstokripke(p')\sat \fttoctl_{\ltstokripke}(t_{1})$ for any
  process $p'$) we conclude that:
  \begin{equation}
    \label{eq-th-ft-ctl-1}
    (p\ \may\ t_{1}\ \choice\ \theta;t)\ \land\ (p\Suparrow{a}
    \land a\in \init(t_{1})) \mbox{\quad iff \quad} \ltstokripke(p)\sat (\Lor
    \init(t_{1})) \land \fttoctl_{\ltstokripke}(t_{1}) 
  \end{equation}
  Whenever it is not the case that $p\Suparrow{a}$ and $a\in
  \init(t_{1})$ (equivalent to $\ltstokripke(p)\not\sat\Lor \init(t_{1})$),
  then the deadlock detection transition $\theta$ of $t_{1}\ \choice\
  \theta;t$ will fire and then the test will succeed iff $t$ succeeds.
  Given once more the inductive hypothesis that $p'\ \may\ t$ iff
  $\ltstokripke(p')\sat \fttoctl_{\ltstokripke}(t)$ for any process $p'$ we have:
  \begin{equation}
    \label{eq-th-ft-ctl-2}
    (p\ \may\ t_{1}\ \choice\ \theta;t)\ \land\ \lnot (p\Suparrow{a}
    \land a\in \init(t_{1})) \mbox{\quad iff \quad} \ltstokripke(p)\sat \lnot (\Lor
    \init(t_{1})) \land \fttoctl_{\ltstokripke}(t) 
  \end{equation}
  The correctness of $\fttoctl_{\ltstokripke}(t_{1}\ \choice\ \theta;t)$ is then a
  direct consequence of Relations~(\ref{eq-th-ft-ctl-1})
  and~(\ref{eq-th-ft-ctl-2}); indeed, one just has to take the
  disjunction of both sides of these relations to reach the desired
  equivalence:
  \[
  p\ \may\ t_{1}\ \choice\ \theta;t \mbox{\quad iff \quad}
  \ltstokripke(p)\sat (\Lor \init(t_{1})) \land \fttoctl_{\ltstokripke}(t_{1})\ \lor\
  \ltstokripke(p)\sat \lnot (\Lor \init(t_{1})) \land \fttoctl_{\ltstokripke}(t)
  \]
  The induction is thus complete.
\end{proof}

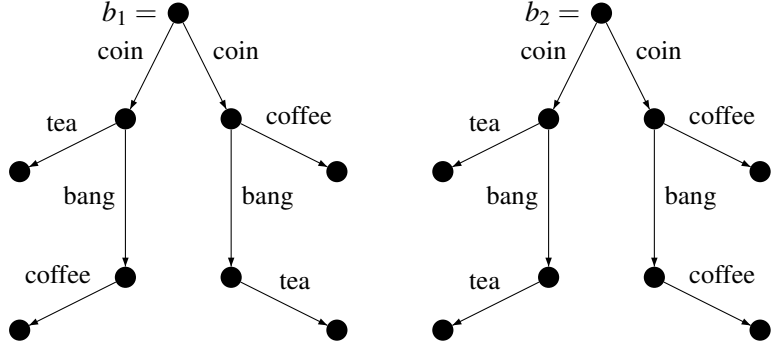
\begin{figure}[tbp]
  \centering
  \begin{tabular}{ccc}
    \mbox{\begin{picture}(34,34)(-2,-2)
        \gasset{Nframe=n,ExtNL=y,NLdist=1,NLangle=180,Nw=2,Nh=2,Nframe=n,Nfill=y}
        
        \node(b)(15,30){$b_{1}=$}
        \node(u1)(10,20){}
        \node(u2)(20,20){}
        \node(m1)(0,15){}
        \node(m2)(30,15){}
        \node(l2)(20,5){}
        \node(l1)(10,5){}
        \node(b2)(30,0){}
        \node(b1)(0,0){}

        \drawtrans[r](l1,b1){\small coffee}
        \drawtrans[l](l2,b2){\small tea}
        \drawtrans[r](u1,l1){\small bang}
        \drawtrans[l](u2,l2){\small bang}
        \drawtrans[r](u1,m1){\small tea}
        \drawtrans[l](u2,m2){\small coffee}
        \drawtrans[r](b,u1){\small coin}
        \drawtrans[l](b,u2){\small coin}
      \end{picture}}
    & \qquad &
        \mbox{\begin{picture}(34,34)(-2,-2)
        \gasset{Nframe=n,ExtNL=y,NLdist=1,NLangle=180,Nw=2,Nh=2,Nframe=n,Nfill=y}

        \node(b)(15,30){$b_{2}=$}
        \node(u1)(10,20){}
        \node(u2)(20,20){}
        \node(m1)(0,15){}
        \node(m2)(30,15){}
        \node(l2)(20,5){}
        \node(l1)(10,5){}
        \node(b2)(30,0){}
        \node(b1)(0,0){}

        \drawtrans[r](l1,b1){\small tea}
        \drawtrans[l](l2,b2){\small coffee}
        \drawtrans[r](u1,l1){\small bang}
        \drawtrans[l](u2,l2){\small bang}
        \drawtrans[r](u1,m1){\small tea}
        \drawtrans[l](u2,m2){\small coffee}
        \drawtrans[r](b,u1){\small coin}
        \drawtrans[l](b,u2){\small coin}
      \end{picture}}
  \end{tabular}
  \caption{Two coffee machines.}
  \label{fig-coffee}
\end{figure}

\begin{qexample}{How to tell logically that your coffee machine is
    working.}
  \label{ex-coffee}
  The coffee machines $b_{1}$ and $b_{2}$ below were famously
  introduced to illustrate the limitations of the may and must testing
  framework of De~Nicola and Hennessy.  Indeed, they have been found
  \cite{langerak89} to be equivalent under testing preorder
  \cite{denicola84} but not equivalent under stable failure preorder
  \cite{langerak89}.
  \begin{eqnarray*}
    b_{1} &=& \coin; (\tea\ \choice\ \bang; \coffee)\quad \choice
    \quad \coin; (\coffee\ \choice\ \bang; \tea)\\
    b_{2} &=& \coin; (\tea\ \choice\ \bang; \tea)\quad \choice
    \quad \coin; (\coffee\ \choice\ \bang; \coffee)
  \end{eqnarray*}
  These machines are also shown graphically (as LTS) in
  Figure~\ref{fig-coffee}.

  The first machine accepts a coin and then dispenses either tea or
  coffee, at its discretion.  Still, if one wants the other beverage,
  one just hits the machine.  The second machine is rather stubborn,
  giving either tea or coffee at its discretion.  By contrast with the
  first machines, the beverage offered will not be changed by hits.
  One failure trace test that differentiate these machines
  \cite{langerak89} is
  \[ t= \coin; (\coffee; \pass\ \choice\ \theta;
  \bang; \coffee; \pass)\]

  The conversion of the failure trace test $t$ into a CTL formula as
  per Lemma~\ref{th-ft-ctl} yields $\fttoctl_{\ltstokripke}(t) = \coin \land
  \somepath \xnext\ (\coffee \land \somepath\xnext\ \top \lor \lnot
  (\coffee \land \somepath\xnext\ \top) \land \bang \land
  \somepath\xnext\ (\coffee \land \somepath\xnext \top))$.
  Eliminating the obviously true sub-formulae, we obtain the logically
  equivalent formula 
  \[
    \fttoctl_{\ltstokripke}(t) = \coin \land \somepath \xnext\ (\coffee \lor \lnot
    \coffee \land \bang \land \somepath\xnext\ \coffee)
  \] 
  The meaning of this formula is clearly equivalent to the meaning of
  $t$, as it reads quite literally ``a coin is expected, and in the
  next state either coffee is offered, or coffee is not offered but a
  bang is available and then the next state will offer coffee.''  This
  is rather dramatically identical to how one would describe the
  behaviour of $t$ in plain English.

  The formula $\fttoctl_{\ltstokripke}(t)$ holds for both the initial states of
  $\ltstokripke(b_{1})$ (where coffee is offered from the outset or
  follows a hit on the machine) but holds in only one of the initial
  states of $\ltstokripke(b_{2})$ (the one that dispenses coffee).
\end{qexample}

\begin{lemma}
  \label{th-ft-ctl-x}
  There exists a function
  $\fttoctl_{\ltstokripkex}:\ftset\rightarrow\ctlset$ such that
  $p\ \may\ t$ if and only if
  $\ltstokripkex(p)\models \fttoctl_{\ltstokripkex}(t)$ for any
  $p\in\procset$.
\end{lemma}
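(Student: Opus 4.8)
The plan is to avoid repeating the structural induction of Lemma~\ref{th-ft-ctl} and instead obtain $\fttoctl_{\ltstokripkex}$ by composing the already-constructed $\fttoctl_{\ltstokripke}$ with the $\Delta$-aware translation underlying the second equivalence. The key observation is that process satisfaction $p\models f$ (Definition~\ref{def-lts-sat}) is a common bridge between the two conversions: Theorem~\ref{th-kripke-lts} gives $\ltstokripke(p)\models f$ iff $p\models f$, while Theorem~\ref{new-th-kripke-lts} gives $\ltstokripkex(p)\models f$ iff $p\models f$, for every CTL* formula $f$. Routing through $p\models f$ lets me transport the $\ltstokripke$-based characterisation of $\may$ to an $\ltstokripkex$-based one.

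First I would read off from Definition~\ref{def-Kripke-Kripke1-sat} the syntactic translation $T:\ctlset\rightarrow\ctlset$ that inserts the bookkeeping symbol $\Delta$, so that $p\models f$ iff $\ltstokripkex(p)\models T(f)$; concretely $T(a)=\Delta\ \until\ a$, $T(\xnext f)=\xnext(\Delta\ \until\ T(f))$, $T(f\ \until\ g)=(\Delta\lor T(f))\ \until\ T(g)$, $T(\globally f)=\globally(\Delta\lor T(f))$, and $T(f\ \releases\ g)=T(f)\ \releases\ (\Delta\lor T(g))$, while $T$ commutes with $\lnot$, $\land$, $\lor$, $\somepath$, $\allpaths$, and $\eventually$. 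I would then set $\fttoctl_{\ltstokripkex}=T\circ\fttoctl_{\ltstokripke}$. The verification is a three-link chain of biconditionals, for arbitrary $p\in\procset$ and $t\in\ftset$: $p\ \may\ t$ iff $\ltstokripke(p)\models\fttoctl_{\ltstokripke}(t)$ by Lemma~\ref{th-ft-ctl}; iff $p\models\fttoctl_{\ltstokripke}(t)$ by the equivalence $p\peq\ltstokripke(p)$ of Theorem~\ref{th-kripke-lts} instantiated at the formula $\fttoctl_{\ltstokripke}(t)$; iff $\ltstokripkex(p)\models T(\fttoctl_{\ltstokripke}(t))=\fttoctl_{\ltstokripkex}(t)$ by the equivalence $p\peq\ltstokripkex(p)$ of Theorem~\ref{new-th-kripke-lts}. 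Since $\fttoctl_{\ltstokripke}$ is algorithmic by Lemma~\ref{th-ft-ctl} and $T$ is a purely syntactic recursion, their composite is algorithmic as required.

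The main obstacle is bookkeeping around the two different satisfaction operators. On the $\ltstokripke$ side the equivalence is phrased with the set-satisfaction of Definition~\ref{def-kripke-set-sat} over the set $Q$ of initial Kripke states (because one LTS state may split into several), whereas on the $\ltstokripkex$ side $Q=\{s_{0}\}$ is a singleton and ordinary single-state satisfaction applies. I must check that both equivalences are genuinely routed through the same process-level relation $p\models f$, so that the middle link of the chain is legitimate; this is exactly what Definitions~\ref{def-kripke-lts-eq} and~\ref{def-Kripke-Kripke1-sat} guarantee. A secondary point is that Theorem~\ref{new-th-kripke-lts} is stated for CTL*, so I should note that each $\fttoctl_{\ltstokripke}(t)$ is a CTL (hence CTL*) formula and that $T$ keeps it within CTL, so the equivalence applies verbatim.

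Should the direct route be preferred, one can instead mirror the induction of Lemma~\ref{th-ft-ctl} with $\Delta$-padded clauses that step across the intermediate action states of $\ltstokripkex$; for example the action-prefix case becomes $\fttoctl_{\ltstokripkex}(a;t)=\somepath\xnext(a\land\somepath\xnext\fttoctl_{\ltstokripkex}(t))$, and the deadlock case must re-express $\Lor\init(t_{1})$ through $\somepath\xnext$ over the action-labelled successors rather than over the current state. This direct version is more robust but forces one to re-prove every inductive step, whereas the composition argument is shorter and reuses the work already done; I would therefore present the composition as the primary proof.
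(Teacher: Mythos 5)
Your primary (composition) argument has a genuine gap, and it is exactly the point that the paper's own proof is forced to confront. Unwinding your definition, $\fttoctl_{\ltstokripkex}=T\circ\fttoctl_{\ltstokripke}$ evaluated on $\ltstokripkex(p)$ is the same object as the operator $\models$ of Definition~\ref{def-Kripke-Kripke1-sat} applied to the untranslated formula $\fttoctl_{\ltstokripke}(t)$ (that operator \emph{is} the translation $T$); so your construction is precisely the ``transfer $\fttoctl_{\ltstokripke}$ verbatim'' option. The paper's proof shows by an explicit counterexample that this option is unsound: take $p = a\ \choice\ b;p'$ with $p'\ \may\ t$ (and no $a$-successor of $p$ passing $t$). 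Then $p$ does not may-pass $a;t$, yet the transferred formula $a\land\somepath\xnext\ \fttoctl_{\ltstokripke}(t)$ is satisfied at the root of $\ltstokripkex(p)$: the root is a \emph{single} $\Delta$-state, the conjunct $a$ is witnessed down its $a$-child, and the conjunct $\somepath\xnext\ \fttoctl_{\ltstokripke}(t)$ is witnessed down its $b$-child; nothing ties the two conjuncts to the same branch. In $\ltstokripke$ this mixing is blocked by the state splitting $\langle p,x\rangle$, which makes each initial Kripke state's label consistent with its successors -- and that splitting is exactly what $\ltstokripkex$ abandons, which is why Lemma~\ref{th-ft-ctl} cannot simply be transported. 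Concretely, the illusion in your three-link chain is the middle step: replacing $\ltstokripke(p)\sat\fttoctl_{\ltstokripke}(t)$ by $p\sat\fttoctl_{\ltstokripke}(t)$ discards the commitment structure on which the correctness of $\fttoctl_{\ltstokripke}$ rests (process satisfaction of $a\land\somepath\xnext\ \fttoctl_{\ltstokripke}(t)$ also holds for the counterexample $p$, by Definition~\ref{def-lts-sat}, even though $p\ \may\ a;t$ fails). Since the composed chain would ``prove'' a statement that this counterexample refutes, the links cannot be combined the way you claim, no matter how plausible each looks in isolation.

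The paper's actual proof keeps the induction of Lemma~\ref{th-ft-ctl} but \emph{redesigns the action-prefix clause}, setting $\fttoctl_{\ltstokripkex}(a;t) = \somepath\ (a\ \land\ \allpaths\xnext\ \lnot a)\ \until\ \fttoctl_{\ltstokripkex}(t)\ \land\ \somepath\xnext\ \fttoctl_{\ltstokripkex}(t)$, where the $\until$ forces $a$ and $\fttoctl_{\ltstokripkex}(t)$ onto one and the same path and $\allpaths\xnext\ \lnot a$ bounds the run of $a$ to a single state. That repaired clause is the essential new content of this lemma, and no composition of the previously proved results can manufacture it. Ironically, your ``fallback'' direct route, with $\fttoctl_{\ltstokripkex}(a;t)=\somepath\xnext(a\land\somepath\xnext\ \fttoctl_{\ltstokripkex}(t))$ stepping explicitly across the action-labelled states, is far closer in spirit to what a correct proof needs than the composition you chose as primary; if you rework this, make the direct induction the main argument and give the action-prefix and $\theta$ cases in full.
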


\begin{proof}
  The proof for $\fttoctl_{\ltstokripke}$ (Lemma~\ref{th-ft-ctl}) holds in almost
  all the cases.  Indeed, the way the operator $\models$ is defined
  (Definition~\ref{def-Kripke-Kripke1-sat}) ensures that all
  occurrences of $\Delta$ are ``skipped over'' as if they were not
  there in the first place.  However the way LTS states are split by
  $\ltstokripke$ facilitates the proof or Lemma~\ref{th-ft-ctl}, yet
  such a split no longer happens in $\ltstokripkex$.  The original
  proof had $\fttoctl_{\ltstokripke}(a;t) = a\land\somepath\xnext\ \fttoctl_{\ltstokripke}(t)$ but
  under $\ltstokripkex$ this construction will fail to work correctly
  on LTS such as $p=a\ \choice\ b; p'$ such that $p'\ \may\ t$.
  Indeed, $\ltstokripkex(p)$ features a node labeled $\Delta$ with two
  children; the first child is labeled $a$ while the second child is
  not but has $\ltstokripkex(p')$ as an eventual descendant (through a
  possible chain of nodes labeled $\Delta$).  Clearly it is not the
  case that $p\ \may\ a;t$, yet
  $\ltstokripkex(p)\models a\land\somepath\xnext\
  \fttoctl_{\ltstokripkex}(t)$, which shows that such a simple
  construction is not sufficient for $\ltstokripkex$.

  To remedy this we set
  $\fttoctl_{\ltstokripkex}(a;t) = \somepath\ (a\ \land\
  \allpaths\xnext\ \lnot a)\ \until\ \fttoctl_{\ltstokripkex}(t)
  \land \somepath\xnext\ \fttoctl_{\ltstokripkex}(t)$.  The second
  term in the conjunction ensures that $\fttoctl_{\ltstokripkex}(t)$
  will hold in some next state, while the first term specifies that a
  run of $a$ will be followed by $\fttoctl_{\ltstokripkex}(t)$ (the
  $a\ \until\ \fttoctl_{\ltstokripkex}(t)$ component) and also that
  the run of $a$ is exactly one state long (the
  $\allpaths\xnext\ \lnot a$ part).  Note in passing that the $\until$
  operator is necessary in order to make sure that $a$ and
  $\fttoctl_{\ltstokripkex}(t)$ are on the same path, for otherwise
  the example used above to show that the original $\ltstokripke(p)$
  does not work here will continue to be in effect.

  The rest of the proof remains unchanged from the proof of
  Lemma~\ref{th-ft-ctl}.
\end{proof}

\begin{qexample}{How to tell logically that your coffee machine is
    working, take 2}
  \label{ex-coffee-x}
  Consider again the coffee machines $b_{1}$ and $b_{2}$ from
  Example~\ref{ex-coffee}.  As discussed earlier, one failure trace
  test that differentiate these machines is
  \[ t= \coin; (\coffee; \pass\ \choice\ \theta;
    \bang; \coffee; \pass)\]
 
  The conversion of the failure trace test $t$ in to a CTL formula
  (after eliminating all the trivially true sub-formulae) will be:
  \[
    \begin{array}{lcl}
      \fttoctl_{\ltstokripkex}(t) 
      &=& \somepath(\coin \land \allpaths\xnext \lnot\coin\ \until\ \coffee)\land \somepath\xnext(\coffee)\lor\\
      &&\somepath(\coin \land \allpaths\xnext \lnot\coin\ \until\ \bang)\land\\  
      &&\somepath\xnext((\bang \land \allpaths\xnext\lnot\bang\ \until\ \coffee) \land \somepath\xnext(\coffee))
    \end{array}
  \]
  This formula specifies that after a coin we can get a coffee
  immediately or after a bang we get a coffee immediately, The meaning
  of this formula is clearly equivalent to the meaning of $t$. As
  expected the formula $\fttoctl_{\ltstokripkex}(t)$ holds for
  $b_{1}$ but not for $b_{2}$.
\end{qexample}

\subsection{Converting Failure Trace Tests into Compact CTL Formulae}
\label{sect-cycle}

Whenever all the runs of a test are finite then the conversion shown
in Lemma~\ref{th-ft-ctl} will produce a reasonable CTL formula.  That
formula is however not in its simplest form.  In particular, the
conversion algorithm follows the run of the test step by step, so
whenever the test has one or more cycles (and thus features
potentially infinite runs) the resulting formula has an infinite
length.  We now show that more compact formulae can be obtained and in
particular finite formulae can be derived out of tests with
potentially infinite runs.  This extension works for both
$\fttoctl_{\ltstokripke}$ and $\fttoctl_{\ltstokripkex}$.

\begin{theorem}
  \label{th-ctl-ftr-compt}
  Let
  $\fttoctl\in\{\fttoctl_{\ltstokripke},\fttoctl_{\ltstokripkex}\}$.
  Then there exists an extension of $\fttoctl$ (denoted by abuse of
  notation $\fttoctl$ as well) such that
  $\fttoctl:\ftset\rightarrow\ctlset$, $p\ \may\ t$ if and only if
  $\ltstokripke(p)\models \fttoctl(t)$ for any $p\in\procset$, and
  $\fttoctl(t)$ is finite for any test $t$ provided that we are
  allowed to mark some entry action $a$ so that we can refer to it as
  either $a$ or $\start(a)$ in each loop of $t$.  An entry action for
  a loop is defined as an action labeling an outgoing edge from a
  state that has an incoming edge from outside the loop.
\end{theorem}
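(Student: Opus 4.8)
The plan is to keep the inductive conversion of Lemma~\ref{th-ft-ctl} (and Lemma~\ref{th-ft-ctl-x}) essentially intact, but to intercept it at loop entries so that the self-referential unfolding that causes divergence is replaced by a single application of the $\until$ operator. First I would regard the test $t$ as a finite control graph (its transition system has finitely many states but possibly cycles) and compute its loops together with their entry actions. Marking an entry action $a$ as $\start(a)$ is a consistent relabelling of the loop-entry transition, mirrored in the process under test, so it preserves the $\may$ relation while giving the formula a proposition that holds exactly at the moment a loop is (re-)entered. Because a successful run must reach $\pass$ and is therefore finite, every traversed loop is eventually exited, which is why the $\until$ operator (rather than $\globally$) is the right device for folding a loop.

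For a single simple loop this yields a clean closed form. Suppose that after some prefix the test reaches a loop head from which it may traverse a body returning to the head or take an exit branch. Converting one traversal of the body gives a finite formula $\beta$ in which the return to the head is witnessed by $\start(a)$, and converting the exit branch together with the rest of the test following the loop gives a finite formula $\eta$. I would then set the conversion of the loop to have the form $\somepath\ \beta\ \until\ \eta$: the left argument encodes ``we are inside one iteration of the body'', the right argument encodes ``we take the exit'', and the $\until$ operator supplies exactly the ``finitely many iterations, then exit, all on one path'' behaviour that the infinite unfolding $\fttoctl(t)=\cdots\somepath\xnext\ \fttoctl(t)\cdots$ was spelling out step by step. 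The $\start(a)$ marking is the linchpin, since CTL cannot otherwise refer to a return to a specific state; it is what lets a single finite subformula $\beta$ stand in for an arbitrary number of iterations.

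Nested and sequentially composed loops I would handle by induction on loop-nesting depth: innermost loops are folded first into finite subformulae and then treated as atomic body fragments when folding the enclosing loop, exactly as the ordinary inductive step of Lemma~\ref{th-ft-ctl} treats an already-converted subtest. Since $t$ has finitely many loops of finite nesting depth and every loop contributes its body and exit converted only once, the resulting $\fttoctl(t)$ is finite. Correctness ($p\ \may\ t$ iff $\ltstokripke(p)\models\fttoctl(t)$, and likewise for $\ltstokripkex$) then follows by matching a successful finite run of $p\parallelth t$ --- which goes around each loop a finite number of times and then exits --- against an $\until$-witness in the Kripke structure, where consecutive $\start(a)$-marked states delimit the iterations and the exit is the first state satisfying $\eta$.

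The main obstacle I expect is getting the single-path semantics of $\until$ to line up with the cyclic runs in the presence of nested loops and of the $\Delta$-skipping semantics of $\ltstokripkex$. In particular, $\beta$ must be crafted so that each iteration is correctly delimited by the $\start(a)$ marker along one path (forcing the body and the exit to lie on the same run, in the spirit of the remark in Lemma~\ref{th-ft-ctl-x} about keeping $a$ and $\fttoctl_{\ltstokripkex}(t)$ on a single path), and for $\ltstokripkex$ the loop-folding $\until$ must compose with the nested $\until$ already used to encode action prefixing without the two repetition mechanisms interfering.
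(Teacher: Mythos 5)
Your core idea coincides with the paper's: fold each loop with a single \until\ whose left argument describes being inside the cycle and whose right argument describes exiting, use the $\start(a)$ marking to anchor the formula at the loop entry, and handle nested loops inductively from the innermost outward. Two concrete gaps remain, however. First, your model of a loop---a head with a body returning to it plus one exit branch---is too coarse. The general loop test treated in the paper is $t = a_{0}; (t_{0}\ \choice\ a_{1}; (t_{1}\ \choice\ \cdots a_{n-1}; (t_{n-1}\ \choice\ t) \cdots ))$, with a distinct exit test $t_{i}$ available after each loop action $a_{i}$. A single pair $(\beta,\eta)$ cannot express this: which exit test may fire depends on how far around the loop the run currently is, and CTL has no state variables, so the formula itself must do the counting. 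The paper's construction therefore takes the left of the \until\ to be a disjunction of $n$ rotation formulae $C_{i}$ (one per phase of the cycle) and the right to be a disjunction of exit formulae $E_{i} = \actcount(a_{i})\land\somepath\xnext\ \fttoctl(t_{i})$, where $\actcount(a_{i})$ is a finite formula anchored at $\start(a_{0})$ that pins down the position; your proposal marks iterations with $\start(a)$ but never develops this position-selection machinery, and without it the folded formula would accept runs that exit through a test that is not actually enabled at that point of the loop.

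Second, and more seriously, you never treat $\theta$. The deadlock-detection action is precisely what distinguishes failure trace tests from plain trace tests, and it may occur as a loop action $a_{k}$, at the top level of an exit test $t_{k-1 \bmod n}$, or both. In those cases whether a run may stay in the cycle or must exit is determined by which actions the process refuses, so negated disjunctions such as $\lnot(\Lor_{b\in \init(t_{k-1\bmod n})}\fttoctl(b))$ must be spliced into both the cycle formulae $C_{i}$ and the exit formulae $E_{i}$; roughly half of the paper's proof consists of three exhaustive cases doing exactly this bookkeeping. As written, your construction is only correct for $\theta$-free loops, so it establishes the theorem essentially for trace tests rather than for failure trace tests.
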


\begin{proof}
  It is enough to show how to produce a finite formula starting from a
  general ``loop'' test.  Such a conversion can be then applied to all
  the loops one by one, relying on the original conversion function
  from Lemma~\ref{th-ft-ctl} or Lemma~\ref{th-ft-ctl-x} (depending on
  whether $\fttoctl$ extends $\fttoctl_{\ltstokripke}$ or
  $\fttoctl_{\ltstokripkex}$) for the rest of the test.  Given the
  reliance on the mentioned lemmata we obtain overall an inductive
  construction.  Therefore nested loops in particular will be
  converted inductively (that is, from the innermost loop to the
  outermost loop).

  Thus to complete the proof it is enough to show how to obtain an
  equivalent, finite CTL formula for the following, general form of
  a loop test:
  \[
    t = a_{0}; (t_{0}\ \choice\ a_{1}; (t_{1}\ \choice\ \cdots a_{n-1}; (t_{n-1}\ \choice\ t) \cdots ))
  \]
  The loop itself consists of the actions $a_{0}$, \ldots, $a_{n-1}$.
  Each such an action $a_{i}$ has the ``exit'' test $t_{i}$ as an
  alternative.  We make no assumption about the particular form of
  $t_{i}$, $0\leq i < n$.

  Given the intended use of our function, this proof will be done
  within the inductive assumptions of the proof of
  Lemmata~\ref{th-ft-ctl} and~\ref{th-ft-ctl-x}.  We will therefore
  consider that the formulae $\fttoctl(a_{i})$ and $\fttoctl(t_{i})$
  exist and are finite, $0\leq i< n$.
  
  We have:
  \[\fttoctl(t)= \left(\somepath\left(\Lor_{i=0}^{n-1} C_{i}\right)\ \until\ 
      \left(\Lor_{i=0}^{n-1} E_{i}\right)\right) \lor\fttoctl(t_{0})
  \]
  where $C_{i}$ represents the cycle in its various stages such that
  \[C_{i}=\somepath\globally(\fttoctl(a_{i})\land\somepath\xnext(\fttoctl(a_{(i+1)\bmod
      n})\land\somepath\xnext\cdots\land\somepath\xnext(\fttoctl(a_{(i+n-1)\bmod
      n}))\cdots ))\] and each $E_{i}$ represents one possible exit
  from the cycle and so
  \[E_{i} = \actcount(a_{i})\land\ \somepath\xnext\ \fttoctl(t_{i})\] 
  with
  \[
    \actcount(a_{i})=\somepath\globally\
    \start(a_{j_{0}})\land\somepath\xnext\
    (a_{j_{1}}\land\cdots\somepath\xnext\ a_{j_{i-1}})
  \] 
  where the sequence $(j_{0}, j_{1}, \ldots, j_{i-1})$ is the
  subsequence of $(0, 1, \ldots, i-1)$ that contains exactly all the
  indices $p$ such that $a_{p}\neq \tau$.  It is worth noting that the
  second term in the disjunction ($\fttoctl(t_{0})$) accounts for the
  possibility that while running $t$ we exit immediately upon entering
  the cycle through the exit test $t_{0}$, in effect without
  traversing any portion of the loop.

  Intuitively, each $C_{i}$ corresponds to $a_{i}$ as being available
  in the test loop, followed by all the rest of the loop in the
  correct order.  It therefore models the decision of the test to
  perform $a_{i}$ and remain in the loop.  Whenever some $a_{i}$ is
  available (``true'') then the corresponding $C_{i}$ is true and so
  the disjunction of the formulae $C_{i}$ will keep being true as long
  as we stay in the loop. By contrast, each $E_{i}$ corresponds to
  being in the right place for the test $t_{i}$ to be available (and
  so $\actcount(a_{i})$ being true), combined with the exit from the
  loop using the test $t_{i}$.  The formula $E_{i}$ will become true
  whenever the test is in the right place and the test $t_{i}$
  succeeds.  Such an event releases the loop formula from its
  obligations (following the semantics of the \until\ operator), so
  such a path can be taken by the test and will be successful.  Like
  the name implies, the function of $\actcount(a_{i})$ is like a
  counter for how many actions separate the test $t_{i}$ from the
  marked action $a_{0}$.  By counting actions we know what test is
  available to exit from the loop (depending on how many actions away
  we are from the start of the loop).

  The formula above assumes that neither the actions in the cycle nor
  the top-level actions of the exit tests are $\theta$.  We introduce
  the deadlock detection action along the following cases, with $k$ an
  arbitrary value, $0\leq k < n$: $\theta$ may appear in the loop as
  $a_{k}$ but not on top level of the alternate exit test
  $t_{k-1 \bmod n}$ (Case 1), on the top level of the test
  $t_{k-1 \bmod n}$ but not as alternate $a_{k}$ (Case 2), or both as
  $a_{k}$ and on the top level of the alternate $t_{k-1 \bmod n}$
  (Case 3).  Given that $\theta$ only affects the top level of the
  choice in which it participates, these cases are exhaustive.
   
  \begin{enumerate}
  \item If any $a_{k} =\theta$ and
    $\theta\not\in \init(t_{k-1 \bmod n})$ then we replace all
    occurrences of $\fttoctl(a_{k})$ in $\fttoctl(t)$ with
    $\lnot(\Lor_{\scriptsize b\in \init (t_{k-1 \bmod n})}
    \fttoctl(b))$ in conjunction with
    $\Lor_{\scriptsize b\in \init(t_{k})\setminus\{\theta\}}
    \fttoctl(b)$ for the ``exit'' formulae and with
    $\fttoctl(a_{k+1 \bmod n})$ for the ``cycle'' formulae).
    Therefore
    $C_{i} = \somepath\globally\
    (\fttoctl(a_{i})\land\somepath\xnext\ (\cdots
    \land\somepath\xnext\ (\lnot(\Lor_{\scriptsize b\in \init (t_{k-1
        \bmod n})} \fttoctl(b))\land \fttoctl(a_{k+1 \bmod n})
    \land\somepath\xnext\ \cdots\ \land\somepath\xnext
    (\fttoctl(a_{(i+n-1)\bmod n}))\cdots)))$ and
    $E_{k} = \actcount(a_{k-1 \bmod n})\land\lnot(\Lor_{\scriptsize
      b\in \init (t_{k-1 \bmod n})} \fttoctl(b))\land
    \Lor_{\scriptsize b\in \init(t_{k})\setminus\{\theta\}}
    \fttoctl(b)\land\somepath\xnext(\fttoctl(t_{k}))$.

  \item If $\theta\in \init(t_{k-1 \bmod n})$ and $a_{k}\neq \theta$
    then we change the exit formula $E_{k-1 \bmod n}$ so that it
    contains two components.  If any action in
    $\init(t_{k-1 \bmod n})$ is available then such an action can be
    taken, so a first component is
    $\actcount(a_{k-1 \bmod n})\land \somepath\xnext\
    (\Lor_{\scriptsize b\in \init(t_{k-1 \bmod n})\setminus\{\theta\}}
    \fttoctl(b)) \land\fttoctl(t_{k-1 \bmod n})$.  Note that any
    $\theta$ top-level branch in $t_{k-1 \bmod n}$ is invalidated
    (since some action $b\in \init(t_{k-1 \bmod n})$ is available).
    The top-level $\theta$ branch of $t_{k-1 \bmod n}$ can be taken
    only if no action from $\init(t_{k-1 \bmod n})\cup\{a_{k}\}$ is
    available, so the second variant is
    $\fttoctl(a_{k-1 \bmod n})\land\somepath\xnext\
    \lnot\fttoctl(a_{(k)})\land\lnot(\Lor_{\scriptsize b\in
      \init(t_{k-1 \bmod n})\setminus\{\theta\}}
    \fttoctl(b))\land\fttoctl(t_{k-1 \bmod n}({\theta}))$, where
    $t_{k-1 \bmod n}=t'\ \choice\ \theta; t_{k-1 \bmod n}(\theta)$ for
    some test $t'$ (recall that we can assume without loss of
    generality that there exists a single top-level $\theta$ branch in
    $t_{k-1 \bmod n}$).

    We take the disjunction of the above variants and so
    $E_{k-1 \bmod n}=\actcount(a_{k-1 \bmod n})\land\somepath\xnext
    (\Lor_{\scriptsize b\in \init(t_{k-1 \bmod n})\setminus\{\theta\}}
    \fttoctl(b)) \land\fttoctl(t_{k-1 \bmod n}) \lor
    \lnot\fttoctl(a_{(k)})\land\lnot(\Lor_{\scriptsize b\in
      \init(t_{k-1 \bmod n})\setminus\{\theta\}}
    \fttoctl(b))\land\fttoctl(t_{k-1 \bmod n}({\theta}))$.

  \item If $a_{k}=\theta$ and $\theta\in \init(t_{k-1 \bmod n})$, then
    we must modify the cycle as well as the exit test.  Let
    $B=\init(t_{k-1 \bmod n})\setminus\{\theta\}$.

    If an action from $B$ is available the loop cannot continue, so we
    replace in $C$ all occurrences of $a_{k}$ with
    $\lnot(\Lor_{b\in B} \fttoctl(b)$ so that
    $C_{i} =
    \somepath\globally(\fttoctl(a_{i})\land\somepath\xnext(\cdots
    \land\somepath\xnext(\fttoctl(\lnot(\Lor_{\scriptsize b\in
      \init(t_{k-1 \bmod n})\setminus\{\theta\}} \fttoctl(b)))
    \land\somepath\xnext\cdots\land\somepath\xnext(\fttoctl(a_{(i+n-1)\bmod
      n}))\cdots)))$.

    Similarly, when actions from $B$ are available the non-$\theta$
    component of the exit test is applicable, while the $\theta$
    branch can only be taken when no action from $B$ is offered.
    Therefore we have
    $E_{k-1 \bmod n} = \actcount(a_{k-1 \bmod n})\land
    \somepath\xnext\ \Lor_{\scriptsize b\in \init(t_{k-1 \bmod
        n})\setminus\{\theta\}} \fttoctl(b) \land \fttoctl(t_{k-1
      \bmod n}) \lor \lnot (\Lor_{\scriptsize b\in \init(t_{k-1 \bmod
        n})\setminus\{\theta\}} \fttoctl(b)) \land \fttoctl(t_{k-1
      \bmod n}({\theta}))$.  As before, $t_{k-1 \bmod n}({\theta})$ is
    the $\theta$-branch of $t_{k-1 \bmod n}$ that is,
    $t_{k-1 \bmod n}=t'\ \choice\ \theta; t_{k-1 \bmod n}(\theta)$ for
    some test $t'$.

    Finally, recall that originally
    $E_{k}=\actcount(a_{k})\land\somepath\xnext\ \fttoctl(t_{k})$.
    Now however $a_{k}=\theta$ and so we must apply to $E_{k}$ the
    same process that we repeatedly performed earlier namely, adding
    $\lnot (\Lor_{\scriptsize b\in \init(t_{k-1 \bmod
        n})\setminus\{\theta\}} \fttoctl(b))$.  In addition, $\theta$
    does not consume any input by definition, so the $\somepath\xnext$
    construction disappears.  In all we have
    $E_{k}=\actcount(a_{k-1})\land\lnot (\Lor_{\scriptsize b\in
      \init(t_{k-1 \bmod n})\setminus\{\theta\}} \fttoctl(b))\land
    \fttoctl(t_{k})$ (when $a_{k}$ = $\theta$, we do not need to
    count $a_{k}$).
  \end{enumerate}

  We now prove that the construction described above is correct.  We
  focus first on the initial, $\theta$-less formula.

  If the common actions are available for both $p$ and $t$ then
  $p\Suparrow {a_{i}}p_{1}\land
  p_{1}\Suparrow{a_{i+1}}\land\cdots\land
  p_{n-1}\Suparrow{a_{i+n-1}}p$, which shows that process $p$ performs
  some actions in the cycle. We further notice that these are
  equivalent to $\ltstokripke(p)\models a_{i}$ and
  $\ltstokripke(p_{1})\models
  a_{i+1}\land\cdots\land\ltstokripke(p_{n-1})\models a_{i+n-1}$,
  respectively.  Therefore
  $p\Suparrow{a_{i}}p_{1}\land p_{1}\Suparrow{a_{i+1}}\land\cdots\land
  p_{n-1}\Suparrow{a_{i+n-1}}p$ iff
  $\ltstokripke(p)\models\somepath\globally(\fttoctl(a_{i})\land\somepath\xnext(\fttoctl(a_{(i+1)\bmod
    n})\land \somepath\xnext\cdots\land
  \somepath\xnext(\fttoctl(a_{(i+n-1)\bmod n}))\cdots))$.  That is,
  \begin{equation}
    \label{eq 3}
      p\Suparrow{a_{i}}p_{1}\land
      p_{1}\Suparrow{a_{i+1}}\land\cdots\land
      p_{n-1}\Suparrow{a_{i+n-1}}p \mbox{ iff } \ltstokripke(p)\models C_{i}
  \end{equation}

  We exit from the cycle as follows: When
  $p\Suparrow {a_{i}}p'\not \Suparrow {a_{i+1}}$ then the process $p$
  must take the test $t_{i}$ after performing $a_{i}$ and pass it. If
  however the action $a_{i+1}$ is available in the cycle as well as in
  the test $t_{i}$, then it depends on the process $p$ whether it will
  continue in the cycle or will take the test $t_{i}$. That means $p$
  may take $t_{i}$ and pass the test or it may decide to continue in
  the cycle. Eventually however the process must take one of the exit
  tests.  Given the nature of may-testing one successful path is
  enough for $p$ to pass $t$.

  Formally, we note that $p\Suparrow {a_{i}}p' \land p' \may\ t_{i}$
  is equivalent to
  $\ltstokripke(p)\models
  \actcount(a_{i})\land\somepath\xnext(\fttoctl(t_{i}))$.  Indeed,
  when $p$ and $t$ perform $a_{i}$ the test $t$ is $i$ actions away
  from $\start(a_{0})$ and so $\actcount(a_{i})$ is true.  Therefore
  given the inductive hypothesis (that $p'\ \may\ t_{i}$ iff
  $\ltstokripke(p')\models\fttoctl(t_{i})$ for any process $p'$) we
  concluded that:
  \begin{equation}
    \label{eq 4}
    (p\Suparrow{a_{i}}p' \land p'\ \may\ t_{i}) \mbox{ iff } 
    \ltstokripke(p)\models(\actcount(a_{i}) \land\somepath\xnext(\fttoctl(t_{i}))) = E_{i}
  \end{equation}
  Taking the disjunction of Relations~(\ref{eq 4}) over all
  $0\leq i< n$ we have
  \begin{equation}
    \label{eq 5}
    (p\Suparrow{a_{i}}p' \land p'\ \may\ \mchoice_{i=0}^{n-1} t_{i}) \mbox{ iff }  \ltstokripke(p)\models \Lor_{i=0}^{n-1} E_{i}
  \end{equation} 
  The correctness of $\fttoctl(t)$ is then the direct consequences of
  Relations~(\ref{eq 3}) and~(\ref{eq 5}): We can stay in the cycle as
  long as one $C_{i}$ remains true (Relation~(\ref{eq 3})), and we can
  exit at any time using the appropriate exit test (Relation~(\ref{eq
    5})).

  Next we consider the possible deadlock detection action as
  introduced in the three cases above.  We have:

  \begin{enumerate}
  \item Let $a_{k}=\theta$, $\theta\not\in\init(t_{k-1 \bmod n})$ and
    suppose that $p\parallelth t$ runs along such that they reach the
    point
    $p'\parallelth t'\suparrow{a_{k-1 \bmod n}}p''\parallelth t''$.
    Let $b\in \init(t_{k-1 \bmod n})$.  If
    $p''\parallelth t''\suparrow{b}$ then the run must exit the cycle
    according to the definition of $\parallelth$.  At the same time
    $C_{k}$ is false because the disjunction
    $\Lor_{\scriptsize b\in \init(t_{k-1 \bmod n})} \fttoctl(b)$ is true
    and no other $C_{i}$ is true, so $C$ is false and therefore the
    only way for $\fttoctl(t)$ to be true is for $E_{k}$ to be true.
    The two, testing and logic scenarios are clearly equivalent.  On
    the other hand, if $p''\parallelth t''\not\suparrow{b}$ for any
    $b\in \init(t_{k-1 \bmod n})$, then the test must take the
    $\theta$ branch.  At the same time $C_{k}$ is true and so is $C$,
    whereas $E_{k}$ is false (so the formula must ``stay in the
    cycle''), again equivalent to the test scenario.
  
  \item Now $\theta\in\init(t_{k-1 \bmod n})$ and $a_{k}\neq \theta$.
    The way the process and the test perform $a_{k}$ and remain in the
    cycle is handled by the general case so we are only considering
    the exit test $t_{k-1 \bmod n}$.  The only supplementary
    consequence of $a_{k}$ being available is that any $\theta$ branch
    in $t_{k-1 \bmod n}$ is disallowed, which is still about the exit
    test rather than the cycle.

    There are two possible successful runs that involve the exit test
    $t_{k-1 \bmod n}$. First,
    $p\Suparrow {a_{k-1 \bmod n}}p'\land \exists b\in \init(t_{k-1
      \bmod n})\setminus\{\theta\}: p'\Suparrow{b} \land\ p'\ \may\
    t_{k-1 \bmod n}$. Second,
    $p\Suparrow {a_{k-1 \bmod n}}p'\land \lnot(\exists b\in
    \init(t_{k-1 \bmod n})\setminus\{\theta\}: p'\Suparrow{b}) \land
    p'\not\Suparrow{a_{k}} \land\ p'\ \may\ t_{k-1 \bmod n}(\theta)$.
    The first case corresponds to a common action $b$ being available
    to both the process and the test (case in which the $\theta$
    branch of $t_{k-1 \bmod n}$ is forbidden by the semantics of
    $p'\ \may\ t_{k-1 \bmod n}$). The second case requires that
    the $\theta$ branch of the test is taken whenever no other action
    is available.

    Given the inductive hypothesis (that $p'\ \may\ t_{i}$ iff
    $\ltstokripke(p')\models\fttoctl(t_{i})$ for any process $p'$) we
    have
    \begin{equation}
      \label{eq 6}
      \begin{split}
        p\Suparrow {a_{k-1 \bmod n}}p'\land \exists b\in
        \init(t_{k-1 \bmod n})\setminus\{\theta\}: p'\Suparrow{b} \land p'\
        \may\
        t_{k-1 \bmod n} \mbox{ iff }\\
        \ltstokripke(p)\models \actcount(a_{k-1 \bmod n})\land\somepath\xnext\
        \mbox{$\Lor_{\scriptsize b\in
            \init(t_{k-1 \bmod n})\setminus\{\theta\}}$} \fttoctl(b)
        \land\fttoctl(t_{k-1 \bmod n})
      \end{split}
    \end{equation}
    \begin{equation}
      \label{eq 6 1}
      \begin{split}
        p\Suparrow {a_{k-1 \bmod n}}p'\land \lnot(\exists b\in
        \init(t_{k-1 \bmod n})\setminus\{\theta\}: p'\Suparrow{b}) \land
        p'\not\Suparrow{a_{k}} \land \\ p'\ \may\ t_{k-1 \bmod n}(\theta) \mbox{ iff }
        \ltstokripke(p)\models \actcount(a_{k-1 \bmod n})\land\somepath\xnext\
        \lnot\fttoctl(a_{(k)})\land\\\lnot(\mbox{$\Lor_{\scriptsize b\in
            \init(t_{k-1 \bmod n})\setminus\{\theta\}}
          \fttoctl(b)$})\land\fttoctl(t_{k-1 \bmod n}({\theta}))
      \end{split}
    \end{equation}
    The conjunction of Relations~(\ref{eq 6}) and~(\ref{eq 6 1})
    establish this case.  Indeed, the left hand sides of the two
    relations are the only two ways to have a successful run involving
    $t_{k-1 \bmod n}$ (as argued above).

  \item Let now $a_{k}=\theta$ and $\theta\in \init(t_{k-1 \bmod n})$.
    Suppose that the process under test is inside the cycle and has
    reached a state $p$ such that $p\Suparrow{a_{k-1 \bmod n}} p'$,
    meaning that $p'$ is ready to either continue within the cycle or
    pass $t_{k-1 \bmod n}$.

    Suppose first that $p'\Suparrow{b}$ for some
    $b\in \init(t_{k-1 \bmod n})\setminus\{\theta\}$.  Then $(a)$ $p'$
    cannot continue in the cycle, which is equivalent to $C_{k}$ being
    false (since no $C_{i}$, $i\neq k$ can be true), and so $(b)$ $p'$
    must pass $t_{k-1 \bmod n}$, which is equivalent to
    $\ltstokripke(p')\models \Lor_{\scriptsize b\in \init(t_{k-1 \bmod
        n})\setminus\{\theta\}} \fttoctl(b) \land \fttoctl(t_{k-1
      \bmod n})$.  That $C_{k}$ is false happens because
    $\lnot(\Lor_{\scriptsize b\in \init(t_{k-1 \bmod
        n})\setminus\{\theta\}} \fttoctl(b))$ is false.  Note
    incidentally that the $\theta$ branch of $t_{k-1 \bmod n}$ is
    forbidden, but this is guaranteed by the semantics of $p'$ passing
    $t_{k-1 \bmod n}$ (and therefore by the semantics of
    $\ltstokripke(p')\models \fttoctl(t_{k-1 \bmod n})$ by inductive
    hypothesis).

    Suppose now that $p'\not\Suparrow{b}$ for any
    $b\in \init(t_{k-1 \bmod n})\setminus\{\theta\}$.  Then the only
    possible continuations are $(a)$ $p'$ remaining in the cycle which
    is equivalent to $C_{k}$ being true (ensured by
    $\Lor_{\scriptsize b\in \init(t_{k-1 \bmod n})\setminus\{\theta\}}
    \fttoctl(b)$ being false), or $(b)$ $p'$ taking the $\theta$
    branch of $t_{k-1 \bmod n}$, which is equivalent to
    $\lnot (\Lor_{\scriptsize b\in \init(t_{k-1 \bmod
        n})\setminus\{\theta\}} \fttoctl(b)) \land
    \fttoctl(t_{k}({\theta}))$ by the fact that
    $\Lor_{\scriptsize b\in \init(t_{k-1 \bmod n})\setminus\{\theta\}}
    \fttoctl(b)$ is false and the inductive hypothesis, or $(c)$
    $p'$ taking the test $t_{k}$ (which falls just after
    $a_{k}=\theta$ and so it is an alternative in the deadlock
    detection branch), which is equivalent to $E_{k}$ being true,
    ensured by
    $\Lor_{\scriptsize b\in \init(t_{k-1 \bmod n})\setminus\{\theta\}}
    \fttoctl(b)$ being false and $\fttoctl(t_{k})$ being true iff
    $p'$ passes $t_{k}$ by inductive hypothesis.

    Once more taking the disjunction of the two alternatives above
    establishes this case.
\end{enumerate}
\end{proof}

\newpage
\begin{qexample}{Compact CTL formula generation}

  In this section we illustrate the conversion of failure trace test
  into CTL formulae.  For this purpose consider the following simple
  vending machines:
  \begin{eqnarray*}
    P_{1} &=& \coin; (\coffee\ \choice\ \coin;\ (\tea\ \choice\ \bang;(\tea\ \choice\ \ P_{1})))\\
    P_{2} &=& \coin; (\coffee\ \choice\ \coin;\ (\coffee\ \choice\ \bang;(\tea\ \choice\ \ P_{2})))
  \end{eqnarray*}
  These two machines dispense a coffee after accepting a coin; the
  first machine dispenses a tea after second coin, whereas the second
  still dispenses coffee. After two coins and a hit by customer, the
  two machines will dispense tea.  The vending machines as well as the
  equivalent Kripke structures are shown in
  Figure~\ref{fig-vmachines}.

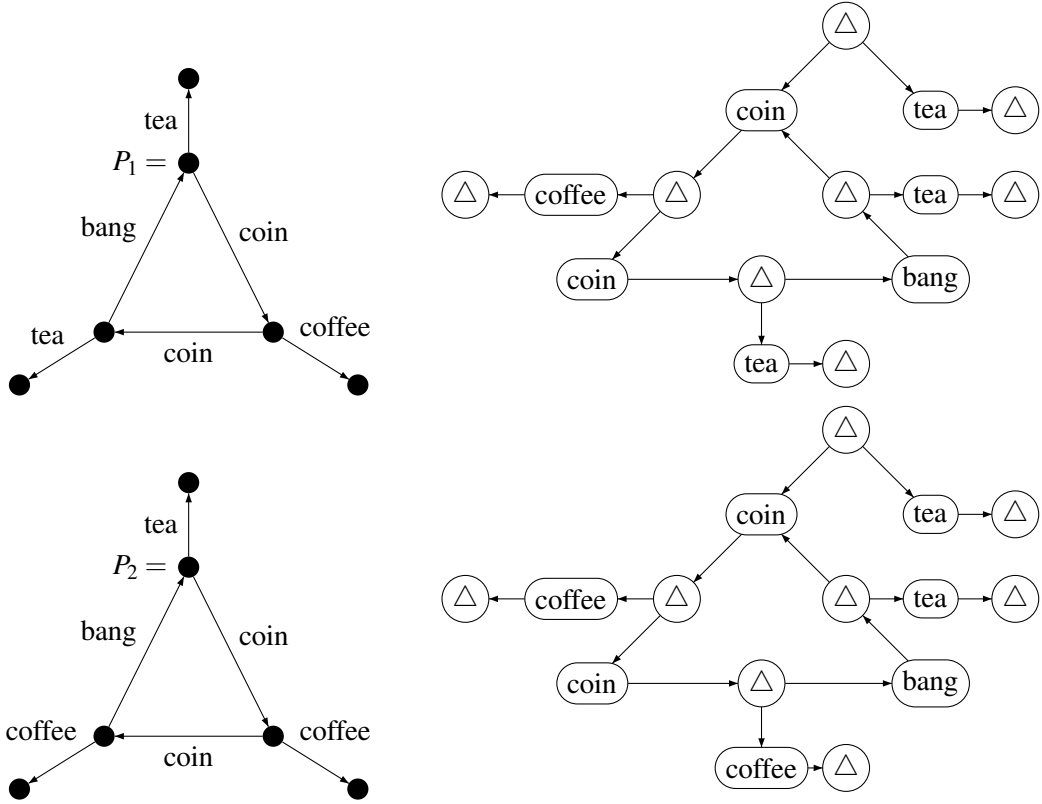
\begin{figure}[tb]
  \centering
  \begin{picture}(34,31)(-1,-6)
    \gasset{Nframe=n,ExtNL=y,NLdist=1,NLangle=180,Nw=2,Nh=2,Nframe=n,Nfill=y}
    \node[NLangle=180](p1)(16,16){$P_{1}=$}
    \node(p2)(24,0){}
    \node(p3)(8,0){}
    \drawtrans(p1,p2){coin}
    \drawtrans(p2,p3){coin}
    \drawtrans(p3,p1){bang}
    \node(p11)(16,24){}
    \node(p21)(32,-5){}
    \node(p31)(0,-5){}
    \drawtrans(p1,p11){tea}
    \drawtrans(p2,p21){coffee}
    \drawtrans[r](p3,p31){tea}  
  \end{picture}
  \qquad
  \begin{picture}(58,38)(-15,-11)
    \gasset{Nframe=y,ExtNL=n,NLdist=0,NLangle=180,Nw=1,Nh=1,Nframe=y,Nadjust=wh,Nadjustdist=1,Nfill=n}

    \node(m)(24,24){$\triangle$}

    \node(c1)(16,16){coin}
    \node(c1t)(8,8){$\triangle$}
    \node(c2)(0,0){coin}
    \node(c2t)(16,0){$\triangle$}
    \node(c3)(32,0){bang}
    \node(c3t)(24,8){$\triangle$}

    \node(xm)(32,16){tea}
    \node(xmt)(40,16){$\triangle$}

    \node(xc1)(-2,8){coffee}
    \node(xct1)(-12,8){$\triangle$}

    \node(xc2)(16,-8){tea}
    \node(xc2t)(24,-8){$\triangle$}

    \node(xc3)(32,8){tea}
    \node(xc3t)(40,8){$\triangle$}

    \drawtrans(m,c1){}

    \drawtrans(c1,c1t){}
    \drawtrans(c1t,c2){}
    \drawtrans(c2,c2t){}
    \drawtrans(c2t,c3){}
    \drawtrans(c3,c3t){}
    \drawtrans(c3t,c1){}

    \drawtrans(m,xm){}
    \drawtrans(xm,xmt){}

    \drawtrans(c1t,xc1){}
    \drawtrans(xc1,xct1){}

    \drawtrans(c2t,xc2){}
    \drawtrans(xc2,xc2t){}

    \drawtrans(c3t,xc3){}
    \drawtrans(xc3,xc3t){}
  \end{picture}

  \begin{picture}(34,31)(-1,-6)
    \gasset{Nframe=n,ExtNL=y,NLdist=1,NLangle=180,Nw=2,Nh=2,Nframe=n,Nfill=y}
    \node[NLangle=180](p1)(16,16){$P_{2}=$}
    \node(p2)(24,0){}
    \node(p3)(8,0){}
    \drawtrans(p1,p2){coin}
    \drawtrans(p2,p3){coin}
    \drawtrans(p3,p1){bang}
    \node(p11)(16,24){}
    \node(p21)(32,-5){}
    \node(p31)(0,-5){}
    \drawtrans(p1,p11){tea}
    \drawtrans(p2,p21){coffee}
    \drawtrans[r](p3,p31){coffee}  
  \end{picture}
  \qquad
  \begin{picture}(58,38)(-15,-11)
    \gasset{Nframe=y,ExtNL=n,NLdist=0,NLangle=180,Nw=1,Nh=1,Nframe=y,Nadjust=wh,Nadjustdist=1,Nfill=n}

    \node(m)(24,24){$\triangle$}

    \node(c1)(16,16){coin}
    \node(c1t)(8,8){$\triangle$}
    \node(c2)(0,0){coin}
    \node(c2t)(16,0){$\triangle$}
    \node(c3)(32,0){bang}
    \node(c3t)(24,8){$\triangle$}

    \node(xm)(32,16){tea}
    \node(xmt)(40,16){$\triangle$}

    \node(xc1)(-2,8){coffee}
    \node(xct1)(-12,8){$\triangle$}

    \node(xc2)(16,-8){coffee}
    \node(xc2t)(24,-8){$\triangle$}

    \node(xc3)(32,8){tea}
    \node(xc3t)(40,8){$\triangle$}

    \drawtrans(m,c1){}

    \drawtrans(c1,c1t){}
    \drawtrans(c1t,c2){}
    \drawtrans(c2,c2t){}
    \drawtrans(c2t,c3){}
    \drawtrans(c3,c3t){}
    \drawtrans(c3t,c1){}

    \drawtrans(m,xm){}
    \drawtrans(xm,xmt){}

    \drawtrans(c1t,xc1){}
    \drawtrans(xc1,xct1){}

    \drawtrans(c2t,xc2){}
    \drawtrans(xc2,xc2t){}

    \drawtrans(c3t,xc3){}
    \drawtrans(xc3,xc3t){}
  \end{picture}

  \caption{Yet another couple of vending machines $P_{1}$ and $P_{2}$
    (left) and the equivalent Kripke structures $\ltstokripkex(P_{1})$
    and $\ltstokripkex(P_{2})$ (right).}
  \label{fig-vmachines}
\end{figure}

Consider now the following test:
\[  
  t=\coin;(\coffee;\pass\ \choice\ \coin;(\tea;\pass\ \choice\ \bang;(\tea;\pass\ \choice\ t))) 
\]
Using Theorem~\ref{th-ctl-ftr-compt} (and thus implicitly
Lemma~\ref{th-ft-ctl-x}) we can convert this test into the following
CTL formula (after eliminating all the obviously true sub-formulae):
\[
  \begin{array}{rcl}
    \fttoctl_{\ltstokripkex}(t) &=& (\somepath\globally\ (\coin)\land\somepath\xnext(\coin\land\somepath\xnext((\bang)\land\somepath\xnext(\coin)))\lor\\ 
&&\somepath\globally\ (\coin)\land\somepath\xnext(\bang\land\somepath\xnext((\coin)\land\somepath\xnext(\coin)))\lor\\
&&\somepath\globally\ (\bang)\land\somepath\xnext(\coin\land\somepath\xnext((\coin)\land\somepath\xnext(\bang)))\\
                     && \until\ \somepath\globally\ \start(\coin) \land\somepath\xnext\ (\coffee)\lor\somepath
\globally\ \start(\coin) \\
&&\land\ \somepath\xnext(\coin\land\somepath\xnext(\tea))\lor\somepath\globally\ \start(\coin) \land\somepath\xnext(\coin\land\somepath\xnext(\bang\land\somepath\xnext(\tea)))) \\
&& \lor\ \somepath\globally(\tea)
  \end{array}
\]

It is not difficult to see that the meaning of this formula is
equivalent to the meaning of $t$. Indeed, the following is true for
both the test $t$ as well as the formula
$\fttoctl_{\ltstokripkex}(t)$: tea is offered in the first step without
any coin, or after a coin coffee is offered, or after two coins tea is
offered, or after two coins and customer hits tea is offered; On the
other hand after two coins and a bang, and if no coin is available
next, then tea will be offered. Finally, after two coins and a bang if
both coin and tea are available, then two options comes out: we can
either keep in the cycle, or offer a tea. In all the process can
remain in the cycle indefinitely or can exit from the cycle and pass
the test. We can get coffee or tea at the first cycle or after few
repetitions of the cycle.

The formula $\fttoctl_{\ltstokripkex}(t)$ holds for
$\ltstokripkex(P_{1})$ but it does not hold for
$\ltstokripkex(P_{2})$.
\end{qexample}

\subsection{From CTL Formulae to Failure Trace Tests}
\label{sect-ctl-to-ft}

We find it convenient to show first how to consider logical (but not
temporal) combinations of tests.

\begin{lemma}
  \label{th-ft-neg}
  For any test $t\in \ftset$ there exists a test $\nottest{t}\in
  \ftset$ such that $p\ \may\ t$ if and only if $\neg(p\ \may\ \nottest{t})$ for
  any $p\in\procset$.
\end{lemma}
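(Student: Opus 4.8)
The plan is to construct $\nottest{t}$ by structural induction on the TLOTOS structure of the test $t$, exploiting the duality between a run reaching \pass\ (success) and a run getting stuck without emitting $\gamma$ (failure). The guiding identity is that $\neg(p\ \may\ t)$ holds exactly when \emph{every} run of $p\parallelth t$ fails, so $\nottest{t}$ must be a test that \emph{may} succeed precisely in that situation. Before starting the induction I would record two facts to be used throughout: that by Proposition~\ref{th-stable-failure-trace} the predicate $p\ \may\ t$ for a general failure trace test decomposes as $\exists t'\in T(t)\subseteq\setests:\ p\ \may\ t'$, and that for a sequential test $t'$ we have $p\ \may\ t'$ iff $\ftr(t')\in\ftr(p)$. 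This lets me read negation concretely as ``no failure trace associated with $t$ is a failure trace of $p$.''

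For the base cases the construction is forced: since every process passes \pass\ and none passes \pstop, I set $\nottest{\pass}=\pstop$ and $\nottest{\pstop}=\pass$. Internal actions are invisible to the process under test, so $\nottest{\internal;t}=\internal;\nottest{t}$. For the choice operators I would first observe that $p\ \may\ (t_{1}\ \choice\ t_{2})$ iff $p\ \may\ t_{1}$ or $p\ \may\ t_{2}$ (a successful run of the composition commits to one branch at its first action), so negation turns the disjunction into a conjunction, namely $(p\ \may\ \nottest{t_{1}})\land(p\ \may\ \nottest{t_{2}})$. I would therefore establish first, as a separate conjunction combinator, that for any $t_{1},t_{2}$ there is a test realising $p\ \may\ t_{1}$ and $p\ \may\ t_{2}$; this is consistent with the fact that may-testable predicates are upward closed under $\ppre_{\mathrm{SF}}$ and hence closed under intersection. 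Then $\nottest{\mchoice T}=\bigwedge\{\nottest{t}:t\in T\}$.

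The delicate case is action prefix. Here $p\ \may\ a;t$ iff \emph{some} successor reached by $a$ may-passes $t$, so its negation is the \emph{universal} statement that no $a$-successor may-passes $t$ --- which is in particular vacuously true when $p$ cannot perform $a$ at all. I would build $\nottest{a;t}$ as a choice of two behaviours: a $\theta$-guarded refusal test of the form $\mchoice\{a;\pstop\}\ \choice\ \theta;\pass$, which may-succeeds exactly when $p$ can refuse $a$ (the deadlock-detection branch fires only when no $a$ is available), together with a component that synchronises on $a$ and then behaves as $\nottest{t}$. The role of $\theta$ is precisely to let a single may-run \emph{certify a refusal}, i.e.\ certify the universal fact that no action of a given set is enabled in a stable state, thereby injecting into otherwise existential may-testing the universal content that negation demands. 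The cases where $\theta$ occurs inside a choice in the original test are dualised the same way, reusing the analysis of $\parallelth$ already carried out in the proof of Lemma~\ref{th-ft-ctl}.

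The main obstacle is exactly this clash of quantifiers: may-testing is inherently existential --- a single successful run witnesses $p\ \may\ t$ --- whereas complementation replaces ``there exists a successful run'' by ``all runs fail'', pushing universal quantifiers over both the successors of an action and the branches of a choice. The crux of the proof is to show that each such universal can be realised within the may-fragment, and the two tools that make this possible are the conjunction combinator (for branches) and the deadlock-detection action $\theta$ (for refusals, hence for the successor universals). I would be especially careful here, since unstable or divergent processes can satisfy a refusal on one branch while enabling the same action on another, so the verification that the $\theta$-guarded refusal genuinely captures ``no $a$-successor passes $t$'' is where the argument is most likely to need a hypothesis on the processes considered. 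Once these clauses are in place, correctness of each inductive step follows by unfolding the semantics of $\parallelth$ and applying the induction hypothesis $p\ \may\ t_{i}$ iff $\neg(p\ \may\ \nottest{t_{i}})$ to the subtests, exactly as in the verifications of Lemmata~\ref{th-ft-ctl} and~\ref{th-ft-ctl-x}.
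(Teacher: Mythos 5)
There is a genuine gap here --- and note first that the paper's proof takes an entirely different, much more direct route. It does not use structural induction at all: it transforms the test $t$ globally, by (i) deleting every success transition $\gamma$, so that every run that succeeded in $t$ now fails in $\nottest{t}$; (ii) deleting any $\theta$ that led only to such a former success state; and (iii) attaching a fresh $\theta;\gamma$ branch to every state affected by (i) or (ii), so that every run that got stuck unsuccessfully in $t$ is extended in $\nottest{t}$, via the lowest-priority action $\theta$, to success. The entire argument is the observation that this swaps successful and failing runs.

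Your inductive plan breaks in at least two places, and has a structural circularity besides. First, the choice case rests on the identity ``$p\ \may\ (t_{1}\ \choice\ t_{2})$ iff $p\ \may\ t_{1}$ or $p\ \may\ t_{2}$'', which is false once $\theta$ is involved: for $t_{1} = c;\pstop\ \choice\ \theta;\pass$, $t_{2} = a;\pstop$ and $p = a;\pstop$ we have $p\ \may\ t_{1}$ (no common action, so $\theta$ fires and reaches \pass), yet in $p\parallelth(t_{1}\ \choice\ t_{2})$ the available synchronization on $a$ blocks $\theta$ and the unique run fails. Second, the prefix case: $\neg(p\ \may\ a;t)$ says that \emph{every} $a$-successor of $p$ fails $t$, but your test --- a choice between a $\theta$-guarded refusal of $a$ and $a;\nottest{t}$ --- succeeds as soon as \emph{some} $a$-successor passes $\nottest{t}$, i.e.\ as soon as some $a$-successor fails $t$. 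Taking $p = a;b;\pstop\ \choice\ a;c;\pstop$ and $t = b;\pass$, both $p\ \may\ a;t$ (through the $b$-successor) and $p\ \may\ \nottest{a;t}$ (through the $c$-successor) hold, so the biconditional fails; the $\theta$ trick certifies refusal of $a$ at a stable state but cannot express universal quantification over distinct $a$-successors, and the ``hypothesis on the processes considered'' that you admit needing is not available, since the lemma quantifies over all $p\in\procset$. Finally, you invoke a conjunction combinator on tests as a prerequisite, but in the paper conjunction (Corollary~\ref{th-ft-and}) is \emph{derived from} this very negation lemma together with disjunction (Lemma~\ref{th-ft-or}) by De Morgan; your justification for assuming it (closure of may-testable predicates under intersection) yields a predicate, not a test realizing that predicate, so this prerequisite is itself unproved.
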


\begin{proof}
  We modify $t$ to produce \nottest{t} as follows: We first force all
  the success states to become deadlock states by eliminating the
  outgoing action $\gamma$ from $t$ completely.  If the action that
  leads to a state thus converted is $\theta$ then this action is
  removed as well (indeed, the ``fail if nothing else works''
  phenomenon thus eliminated is implicit in testing).  Finally, we add
  to all the states having their outgoing transitions $\theta$ or
  $\gamma$ eliminated in the previous steps one outgoing transition
  labeled $\theta$ followed by one outgoing transition labeled
  $\gamma$.

  The test $\nottest{t}$ must fail every time the original test $t$
  succeeds.  The first step (eliminating $\gamma$ transitions) has
  exactly this effect.

  In addition, we must ensure that $\nottest{t}$ succeeds in all the
  circumstances in which $t$ fails.  The extra $\theta$ outgoing
  actions (followed by $\gamma$) ensure that whevever the end of the
  run reaches a state that was not successful in $t$ (meaning that it
  had no outgoing $\gamma$ transition) then this run is extended in
  $\nottest{t}$ via the $\theta$ branch to a success state, as
  desired.
\end{proof}

\begin{lemma}
  \label{th-ft-or}
  For any two tests $t_{1}, t_{2}\in \ftset$ there exists a test
  $t_{1}\lor t_{2}\in \ftset$ such that $p\ \may\ (t_{1}\lor t_{2})$
  if and only if $(p\ \may\ t_{1})\lor (p\ \may\ t_{2})$ for any $p\in \procset$.
\end{lemma}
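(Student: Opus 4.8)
The plan is to realise the disjunction as an \emph{internal} (nondeterministic) choice rather than an external one. Concretely I would set $t_1\lor t_2 := \internal; t_1\ \choice\ \internal; t_2$, which is a legal TLOTOS expression and hence lies in $\ftset$ whenever $t_1,t_2\in\ftset$. The purpose of the two leading internal actions is that the only transitions of $t_1\lor t_2$ on its own are the two silent moves $t_1\lor t_2 \suparrow{\tau} t_1$ and $t_1\lor t_2 \suparrow{\tau} t_2$; in particular the test commits to exactly one of $t_1,t_2$ by a $\tau$-step before it can ever synchronise on a visible action or offer $\gamma$ or $\theta$. Everything below rests on this observation.

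For the ($\Leftarrow$) direction I would assume $p\ \may\ t_i$ for some $i\in\{1,2\}$ and pick a successful run $\rho\in\paths(p\parallelth t_i)$, whose trace ends in $\gamma$. By the parallel-composition rule for internal test moves we have $p\parallelth(t_1\lor t_2)\suparrow{\tau} p\parallelth t_i$, and prepending this single $\tau$-transition to $\rho$ gives a run of $p\parallelth(t_1\lor t_2)$. Since $\tau$ contributes nothing to $\trace(\rho)$, the extended run is still successful, so $\top\in\Obs(p,t_1\lor t_2)$ and $p\ \may\ (t_1\lor t_2)$.

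For the ($\Rightarrow$) direction I would take any successful run of $p\parallelth(t_1\lor t_2)$ and argue that its initial segment consists solely of $\tau$-transitions: the test offers neither $\gamma$ nor a top-level visible action, a $p$-internal $\tau$ leaves the test untouched, and $\theta$ cannot fire while the test's own internal actions keep a $\tau$ available. At the first $\tau$ that resolves the choice the configuration becomes $p'\parallelth t_i$ for one branch $t_i$, and the remaining suffix is a run of $p'\parallelth t_i$. The $p$-only $\tau$-moves preceding the resolution are moves of $p$ in isolation and hence replayable from $p\parallelth t_i$, so splicing them in front of the suffix produces a successful run of $p\parallelth t_i$; therefore $p\ \may\ t_i$ and the right-hand disjunction holds.

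The hard part, and the reason the naive candidate $t_1\ \choice\ t_2$ fails, is the lowest-priority deadlock-detection action $\theta$. Fusing $t_1$ and $t_2$ under an external choice enlarges the set of actions offered to $p$, which can suppress a $\theta$-transition that would have fired (and led to success) in one of the tests run alone; for instance $t_1=\theta;\pass$ and $t_2=a;\pstop$ against $p=a;\pstop$ satisfy $p\ \may\ t_1$, yet the availability of $a$ in $t_1\ \choice\ t_2$ disables $\theta$ and the composite test can only fail. Guarding each branch with $\internal$ is precisely what insulates the two tests from one another, so that the priority of $\theta$ is evaluated within a single chosen branch exactly as in the isolated run. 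The one delicate point I would need to check carefully is that this insulation is complete --- no cross-branch enabling or disabling of $\theta$, $\gamma$, or a visible synchronisation --- but this follows directly from the first-paragraph observation that the choice is always resolved silently before any such interaction can arise.
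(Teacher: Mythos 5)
Your construction is correct, but it is genuinely different from the paper's. You realise the disjunction as an internal choice $\internal; t_{1}\ \choice\ \internal; t_{2}$ and argue directly on runs: the two $\tau$-guards force the composite test to commit silently to one branch before any synchronisation, $\gamma$, or $\theta$ can occur, after which the run is literally a run of the chosen $t_{i}$ (modulo replayable $p$-internal moves), and conversely any successful run of $p\parallelth t_{i}$ can be prefixed with the resolving $\tau$. This works precisely because may testing only asks for the \emph{existence} of one successful run, and your diagnosis of why the naive external choice $t_{1}\ \choice\ t_{2}$ fails (cross-branch enabling of visible actions suppressing a $\theta$ that would have fired in $t_{1}$ alone) is accurate. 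The paper instead proceeds by structural induction: it normalises both tests to the form $\mchoice\{b; t(b) : b\in B\}\ \choice\ \theta; t_{N}$ and merges them under a single \emph{external} top-level choice, splitting on $B_{1}\cap B_{2}$, $B_{1}\setminus B_{2}$, $B_{2}\setminus B_{1}$, recursing into the branches, and using a restriction operator $\restrict{t}{b}$ to reconcile an action offered by one test with the $\theta$-branch of the other. What each approach buys: yours is a one-line witness with a short, uniform run-splicing argument, needing no induction and no case analysis on $\theta$; the paper's merge avoids introducing any new internal nondeterminism, keeping the disjunction structurally close to the sequential-test format (a single outward choice with at most one $\theta$-branch), which is the shape its other constructions manipulate, at the cost of a considerably more delicate correctness argument. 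Since the lemma asserts only that \emph{some} test with the disjunction property exists in $\ftset$, and $\internal; t_{1}\ \choice\ \internal; t_{2}$ is legal TLOTOS, your simpler witness fully establishes the statement.
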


\begin{proof}
  We construct such a disjunction on tests by induction over the
  structure of tests.

  For the base case it is immediate that $\pass \lor t = t\lor \pass =
  \pass$ and $\pstop \lor t = t \lor \pstop = t$ for any structure of
  the test $t$.

  For the induction step we consider without loss of generality that
  $t_{1}$ and $t_{2}$ have the following structure:
  \begin{eqnarray*}
    t_{1} &=& \mchoice \{b_1; t_1(b_1):b_1\in B_1 \}\ \choice\ \theta; t_{N1}\\
    t_{2} &=& \mchoice \{b_2; t_2(b_2):b_2\in B_2 \}\ \choice\ \theta; t_{N2}
  \end{eqnarray*}
  Indeed, all the other possible structures of $t_{1}$ (and $t_{2}$)
  are covered by such a form since $\theta$ not appearing on the top
  level of $t_{1}$ (or $t_{2}$) is equivalent to $t_{N1} = \pstop$ (or
  $t_{N2} = \pstop$), while not having a choice on the top level of
  the test is equivalent to $B_{1}$ (or $B_{2}$) being an appropriate
  singleton.

  We then construct $t_{1}\lor t_{2}$ for the form of $t_{1}$ and
  $t_{2}$ mentioned above under the inductive assumption that the
  disjunction between any of the ``inner'' tests $t_1(b_1)$, $t_{N1}$
  $t_2(b_2)$, and $t_{N2}$ is known.  We have:
  \begin{equation}
    \begin{array}{rclc}
      \multicolumn{4}{l}{
        \mchoice \{b_1; t_1(b_1):b_1\in B_1 \}\ \choice\ \theta; t_{N1}\quad  \lor\quad
        \mchoice \{b_2; t_2(b_2):b_2\in B_2 \}\ \choice\ \theta; t_{N2}}\\
      \qquad\qquad &=& \mchoice \{b; (t_1 (b) \vee t_2 (b)): b \in B_1 \cap B_2\} & \choice \\
      & & \mchoice \{b; (t_1 (b) \lor \restrict{t_{N2}}{b}) : b \in B_1 \setminus B_{2} \} & \choice \\
      & & \mchoice \{b; (t_2 (b) \lor \restrict{t_{N1}}{b}): b \in B_2 \setminus B_{1} \} & \choice\\
      & & \theta; (t_{N1} \vee t_{N2})
    \end{array}
    \label{eq-ft-or}
  \end{equation}
  where $\restrict{t}{b}$ is the test $t$ restricted to performing $b$
  as its first action and so is inductively constructed as follows:
  \begin{enumerate}
  \item If $t = \pstop$ then $\restrict{t}{b} = \pstop$.
  \item If $t = \pass$ then $\restrict{t}{b} = \pass$.
  \item If $t = b; t'$ then $\restrict{t}{b} = t'$.
  \item If $t = a; t'$ with $a\neq b$ then $\restrict{t}{b} = \pstop$.
  \item If $t = \internal; t'$ then $\restrict{t}{b} =
    \restrict{t'}{b}$.
  \item If $t =t'\ \choice\ t''$ such that neither $t'$ nor $t''$ contain
    $\theta$ in their topmost choice then $\restrict{t}{b} =
    \restrict{t'}{b}\ \choice\ \restrict{t''}{b}$.
  \item If $t =b; t'\ \choice\ \theta; t''$ then $\restrict{t}{b} = t'$.
  \item If $t =a; t'\ \choice\ \theta; t''$ with $a\neq b$ then
    $\restrict{t}{b} = \restrict{t''}{b}$.
  \end{enumerate}

  If the test $t_{1}\lor t_{2}$ is offered an action $b$ that is
  common to the top choices of the two tests $t_{1}$ and $t_{2}$
  ($b\in B_{1}\cap B_{2}$) then the disjunction succeeds if and only if $b$ is
  performed and then at least one of the tests $t_{1}(b)$ and
  $t_{2}(b)$ succeeds (meaning that $t_{1}(b) \lor t_{2}(b)$ succeeds
  inductively) afterward.  The first term of the choice in
  Equation~(\ref{eq-ft-or}) represents this possibility.

  If the test is offered an action $b$ that appears in the top choice
  of $t_{1}$ but not in the top choice of $t_{2}$ ($b \in B_1
  \setminus B_{2}$) then the disjunction succeeds if and only if $t_{1}(b)$
  succeeds after $b$ is performed, or $t_{N2}$ performs $b$ and then
  succeeds; the same goes for $b \in B_2 \setminus B_{1}$ (only in
  reverse).  The second and the third terms of the choice in
  Equation~(\ref{eq-ft-or}) represent this possibility.

  Finally whenever the test $t_{1}\lor t_{2}$ is offered an action $b$
  that is in neither the top choices of the two component tests
  $t_{1}$ and $t_{2}$ (that is, $b\not\in B_{1}\cup B_{2}$), then the
  disjunction succeeds if and only if at least one of the tests $t_{N1}$ or
  $t_{N2}$ succeeds (or equivalently $t_{N1}\lor t_{N2}$ succeeds
  inductively).  This is captured by the last term of the choice in
  Equation~(\ref{eq-ft-or}).  Indeed, $b\not\in B_{1}\cup B_{2}$
  implies that $b\not\in (B_{1}\cap B_{2}) \cup (B_{1}\setminus B_{2})
  \cup (B_{2}\setminus B_{1})$ and so such an action will trigger the
  deadlock detection ($\theta$) choice.

  There is no other way for the test $t_{1}\lor t_{2}$ to succeed so
  the construction is complete. 
\end{proof}

We believe that an actual example of how disjunction is constructed is
instructive.  The following is therefore an example to better
illustrate disjunction over tests.  A further example (incorporating
temporal operators and also negation) will be provided later (see
Example~\ref{ex-coffee-1}).

\begin{qexample}{A disjunction of tests}
  Consider the construction $t_{1}\lor t_{2}$, where:
  \begin{eqnarray*}
    t_{1} &=& (\bang; \tea; \pass)\ \choice\\
          && (\coin;  (\coffee; \pstop\ \choice\ \theta; \pass))
  \end{eqnarray*}
  \begin{eqnarray*}
    t_{2} &=& (\coin; \pass)\ \choice\\
    &&(\nudge; \pstop)\ \choice \\
    && (\theta; (\bang; \water; \pass\ \choice\ \turn; \pstop))
  \end{eqnarray*}
  Using the notation from Equation~(\ref{eq-ft-or}) we have
  $B_{1}=\{\coin, \bang\}$, $B_{2}=\{\coin, \nudge\}$ and so
  $B_{1}\cap B_{2} = \{\coin\}$, $B_{1}\setminus B_{2} = \{\bang\}$,
  and $B_{2}\setminus B_{1} = \{\nudge\}$.  We further note that
  $t_{1}(\coin) = \coffee; \pstop\ \choice\ \theta; \pass$,
  $t_{1}(\bang) = \tea; \pass$, $t_{N1} = \pstop$, $t_{2}(\coin) =
  \pass$, $t_{2}(\nudge) = \pstop$, and $t_{N2} = \bang; \water;
  \pass\ \choice\ \turn; \pstop$.  Therefore we have:
  \begin{eqnarray*}
    t_{1}\lor t_{2} &=& (\coin; (t_{1}(\coin) \lor t_{2}(\coin)))\ \choice\\
                     && (\bang; (t_{1}(\bang) \lor \restrict{t_{N2}}{\bang})) \ \choice\ \\
                    & & (\nudge; (t_{2}(\nudge) \lor \restrict{t_{N1}}{\nudge})) \ \choice\\
                     && \theta; (t_{N1}\lor t_{N2})\\
    &=& (\coin; \pass) \ \choice \\
     && (\bang; (t_{1}(\bang) \lor \restrict{t_{N2}}{\bang}))\ \choice\ \\
    & & (\nudge; \pstop)\ \choice\\
     && (\theta; (\bang; \water; \pass\ \choice\ \turn; \pstop))
  \end{eqnarray*}
  Indeed, $t_{1}(\coin) = \pass$ so $t_{1}(\coin)\lor t_{2}(\coin) =
  \pass$; $\restrict{t_{N1}}{\nudge} = \pstop$; and $t_{N1} = \pstop$
  so $t_{N1}\lor t_{N2} = t_{N2}$.

  We further have $\restrict{t_{N2}}{\bang} = \water; \pass$, and
  therefore $t_{1}(\bang) \lor \restrict{t_{N2}}{\bang} = (\tea;
  \pass) \lor (\water; \pass)$.  We proceed inductively as above,
  except that in this degenerate case $t_{N1} = t_{N2} = \pstop$ and
  $B_{1}\cap B_{2} = \emptyset$, so the result is a simple choice
  between the components: $t_{1}(\bang) \lor \restrict{t_{N2}}{\bang}
  = (\tea; \pass)\ \choice\ (\water; \pass)$.  Overall we reach the
  following result:

  \begin{eqnarray}
    t_{1}\lor t_{2} &=& (\coin; \pass) \ \choice \nonumber \\
                     && (\bang; ((\tea; \pass)\ \choice\ (\water; \pass)))\ \choice\ \nonumber \\
                    & & (\nudge; \pstop)\ \choice \nonumber \\
                     && (\theta; (\bang; \water; \pass\ \choice\ \turn; \pstop))
                        \label{eq-ex2-1}
  \end{eqnarray}
  Intuitively, $t_{1}$ specifies that we can have tea if we hit the
  machine, and if we put a coin in we can get anything except coffee.
  Similarly $t_{2}$ specifies that we can put a coin in the machine,
  we cannot nudge it, and if none of the above happen then we can hit
  the machine (case in which we get water) but we cannot turn it
  upside down.  On the other hand, the disjunction of these tests as
  shown in Equation~(\ref{eq-ex2-1}) imposes the following
  specification:
  \begin{enumerate}
  \item If a coin is inserted then the test succeeds.  Indeed, this
    case from $t_{2}$ supersedes the corresponding special case from
    $t_{1}$: the success of the test implies that the machine is
    allowed to do anything afterward, including not dispensing coffee.
  \item We get either tea or water after hitting the machine.  Getting
    tea comes from $t_{1}$ and getting water from $t_{2}$ (where the
    \bang\ event comes from the $\theta$ branch).
  \item If we nudge the machine then the test fails immediately; this
    comes directly from $t_{2}$.
  \item In all the other cases the test behaves like the $\theta$
    branch of $t_{2}$.  This behaviour makes sense since there is no
    such a branch in $t_{1}$.
  \end{enumerate}
  We can thus see how the disjunction construction from
  Lemma~\ref{th-ft-or} makes intuitive sense.

  This all being said, note that we do not claim that either of the
  tests $t_{1}$ and $t_{2}$ are useful in any way, and so we should
  not be held responsible for the behaviour of any machine built
  according to the disjunctive specification shown above.
\end{qexample}

\begin{corollary}
  \label{th-ft-and}
  For any tests $t_{1}, t_{2}\in \ftset$ there exists a test
  $t_{1}\land t_{2}\in\ftset$ such that $p\ \may\ (t_{1}\land t_{2})$
  if and only if $(p\ \may\ t_{1})\land (p\ \may\ t_{2})$ for any $p\in\procset$.
\end{corollary}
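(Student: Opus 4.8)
The plan is to obtain conjunction for free from the two constructions already in hand, namely negation (Lemma~\ref{th-ft-neg}) and disjunction (Lemma~\ref{th-ft-or}), via De~Morgan's law. Concretely, I would define
\[
  t_{1}\land t_{2} \;=\; \nottest{\,\nottest{t_{1}}\lor \nottest{t_{2}}\,},
\]
where the inner disjunction is the test produced by Lemma~\ref{th-ft-or} applied to $\nottest{t_{1}}$ and $\nottest{t_{2}}$ (each of which exists by Lemma~\ref{th-ft-neg}), and the outer bar is the negation produced by Lemma~\ref{th-ft-neg} applied to that disjunction. Since both lemmata yield tests in $\ftset$ and are thus closed under composition, we have $t_{1}\land t_{2}\in\ftset$; and because each of the two underlying constructions is algorithmic, so is their composition.

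The verification is then a short chain of equivalences. For any $p\in\procset$, applying Lemma~\ref{th-ft-neg} to the outer negation gives $p\ \may\ (t_{1}\land t_{2})$ iff $\neg(p\ \may\ (\nottest{t_{1}}\lor\nottest{t_{2}}))$; applying Lemma~\ref{th-ft-or} rewrites the inner may-relation, yielding $\neg((p\ \may\ \nottest{t_{1}})\lor(p\ \may\ \nottest{t_{2}}))$; a single use of De~Morgan's law at the level of ordinary propositional logic turns this into $\neg(p\ \may\ \nottest{t_{1}})\land\neg(p\ \may\ \nottest{t_{2}})$; and finally two further applications of Lemma~\ref{th-ft-neg} (one per conjunct) rewrite each $\neg(p\ \may\ \nottest{t_{i}})$ as $p\ \may\ t_{i}$, producing exactly $(p\ \may\ t_{1})\land(p\ \may\ t_{2})$, as required.

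There is essentially no hard step here: the entire content is that $\ftset$ is closed under the boolean operations just established, so conjunction is definable from negation and disjunction. The only point that warrants a moment's care is invoking the lemmata at the right granularity---in particular applying Lemma~\ref{th-ft-neg} to the composite test $\nottest{t_{1}}\lor\nottest{t_{2}}$ as a whole (legitimate, since that lemma is stated for an arbitrary $t\in\ftset$) rather than attempting to push negation through the disjunction syntactically. Note also that no claim about double negation being the syntactic identity on tests is needed, since the argument reasons only about the $\may$ outcome of the constructed tests and never about their internal structure.
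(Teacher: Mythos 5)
Your proposal is correct and is exactly the paper's argument: the paper also derives conjunction from Lemmata~\ref{th-ft-neg} and~\ref{th-ft-or} via the De~Morgan rule $a\land b = \neg(\neg a\lor\neg b)$, merely stating it as ``immediate'' where you spell out the chain of equivalences. Your added care in applying Lemma~\ref{th-ft-neg} to the composite disjunction as a whole, and in noting that no syntactic double-negation identity is needed, is a faithful elaboration of the same proof.
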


\begin{proof}
  Immediate from Lemmata~\ref{th-ft-neg} and~\ref{th-ft-or} using the
  De~Morgan rule $a\land b = \neg(\neg a \lor \neg b)$. 
\end{proof}

\medskip
\noindent
We are now ready to show that any CTL formula can be converted into an
equivalent failure trace test.

\begin{lemma}
  \label{th-ctl-ft}
  There exists a function $\ctltoft_{\ltstokripke}:\ctlset\rightarrow \ftset$ such
  that $\ltstokripke(p) \sat f$ if and only if $p\ \may\ \ctltoft_{\ltstokripke}(f)$ for any
  $p\in\procset$.
\end{lemma}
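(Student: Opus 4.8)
The plan is to define $\ctltoft_{\ltstokripke}$ by structural induction on the CTL formula and to verify the equivalence in lock-step with the construction, mirroring in reverse the construction of $\fttoctl_{\ltstokripke}$ from Lemma~\ref{th-ft-ctl}. A useful first move is to replace the stated target $\ltstokripke(p)\sat f$ with the process-side satisfaction $p\sat f$ of Definition~\ref{def-lts-sat}: by Theorem~\ref{th-kripke-lts} we have $p\peq\ltstokripke(p)$ and hence $\ltstokripke(p)\sat f$ iff $p\sat f$ for every formula $f$ (Definition~\ref{def-kripke-lts-eq}). It therefore suffices to produce, for each $f$, a test $\ctltoft_{\ltstokripke}(f)$ with $p\ \may\ \ctltoft_{\ltstokripke}(f)$ iff $p\sat f$, which lets me reason with the clean single-state semantics and sidestep the set-of-states operator of Definition~\ref{def-kripke-set-sat} entirely.

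For the base cases I would set $\ctltoft_{\ltstokripke}(\top)=\pass$, $\ctltoft_{\ltstokripke}(\bot)=\pstop$, and $\ctltoft_{\ltstokripke}(a)=a;\pass$ for $a\in\ap=A$; the last is correct because $p\ \may\ a;\pass$ iff $a\in\init(p)$, which is precisely $p\sat a$ as used in Theorem~\ref{th-kripke-lts}. The Boolean connectives are then discharged directly by the test combinators already established: $\ctltoft_{\ltstokripke}(\lnot f)=\nottest{\ctltoft_{\ltstokripke}(f)}$ by Lemma~\ref{th-ft-neg}, $\ctltoft_{\ltstokripke}(f\lor g)=\ctltoft_{\ltstokripke}(f)\lor\ctltoft_{\ltstokripke}(g)$ by Lemma~\ref{th-ft-or}, and $\ctltoft_{\ltstokripke}(f\land g)=\ctltoft_{\ltstokripke}(f)\land\ctltoft_{\ltstokripke}(g)$ by Corollary~\ref{th-ft-and}; in each case the defining equivalence of the combinator transports the induction hypothesis through the connective. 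Because negation on tests is available, I would next use those CTL dualities whose negations fall on \emph{state} subformulae, namely $\allpaths\xnext f\equiv\lnot\somepath\xnext\lnot f$, $\allpaths\globally f\equiv\lnot\somepath\eventually\lnot f$, and $\allpaths\eventually f\equiv\lnot\somepath\globally\lnot f$, which are valid under Definition~\ref{def-lts-sat}. This leaves the existential path cases $\somepath\xnext$, $\somepath\eventually$, $\somepath\globally$, $\somepath(f\until g)$ and $\somepath(f\releases g)$, together with the universal until/releases forms, for which the usual until–releases duality must be re-derived directly since the paper's $\until$ is the nonstandard ``$g$ holds from $j$ onward'' variant.

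The existential-next case is easy: take $\ctltoft_{\ltstokripke}(\somepath\xnext f)=\mchoice\{a;\ctltoft_{\ltstokripke}(f):a\in A\}$, since the construction of $\ltstokripke$ (Theorem~\ref{th-kripke-lts}) makes the Kripke successors of a state correspond exactly to the process states reachable by one visible action, so $p\ \may\ \mchoice\{a;\ctltoft_{\ltstokripke}(f)\}$ iff $p\Suparrow{a}p'$ with $p'\sat f$ for some $a$, which is $p\sat\somepath\xnext f$ (with internal $\tau$-steps absorbed by $\Suparrow{}$). The reachability operators call for cyclic tests: for $\somepath\eventually f$ I would use a recursively defined $u=\ctltoft_{\ltstokripke}(f)\lor\mchoice\{a;u:a\in A\}$, whose finite successful runs witness exactly a finite path to an $f$-state; and for $\somepath(f\until g)$ the analogous $u=\ctltoft_{\ltstokripke}(g)\lor(\ctltoft_{\ltstokripke}(f)\land\mchoice\{a;u:a\in A\})$, which keeps $f$ true until a $g$-state is reached.

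The main obstacle is that the semantics of $\globally$, $\releases$, and the paper's ``holds-forever'' $\until$ are infinitary, whereas $p\ \may\ t$ is witnessed only by a \emph{finite} run ending in $\gamma$; a single finite success cannot by itself certify a property of an entire infinite computation such as $\somepath\globally f\equiv\somepath(\bot\releases f)$ or the requirement that $g$ hold at every position from the witnessing index onward. Resolving this is where the real work lies: I expect to exploit the loop structure produced by $\ltstokripke$ so that ``$f$ forever'' reduces to reaching a cycle all of whose states satisfy $f$, and to detect such a cycle with a cyclic failure trace test, using the deadlock-detection action $\theta$ and the refusal sets to certify that no exit from the $f$-region is forced. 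This is precisely the reverse of the loop handling in Theorem~\ref{th-ctl-ftr-compt}, and the delicate core of the argument is verifying that the resulting cyclic test is passed by exactly those processes whose $\ltstokripke$-image satisfies the infinitary formula, uniformly over the possibly infinite-state and nondeterministic LTS allowed here.
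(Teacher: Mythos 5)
Your skeleton---structural induction, the base cases $\pass$, $\pstop$, $a;\pass$, the Boolean cases discharged by Lemma~\ref{th-ft-neg}, Lemma~\ref{th-ft-or}, and Corollary~\ref{th-ft-and}, the test $\mchoice\{a;\ctltoft_{\ltstokripke}(f):a\in A\}$ for $\somepath\xnext\ f$, a cyclic test for $\somepath\eventually\ f$, and the reduction of the $\allpaths$ forms to the existential ones---is exactly the paper's proof (the paper invokes the completeness of the operator set $\{\somepath\xnext,\somepath\eventually,\somepath\globally,\somepath\until\}$ where you invoke dualities), and your preliminary move from $\ltstokripke(p)\sat f$ to $p\sat f$ is harmless. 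But the proposal stops being a proof exactly where you say ``the real work lies'': you never construct the tests for $\somepath\globally\ f$, for this paper's strong $\until$ (in which $g$ must hold at \emph{every} position from the witness index onward), or for $\releases$; you only announce an intention to detect ``a cycle all of whose states satisfy $f$'' using $\theta$ and refusals. That idea cannot work at the stated level of generality: the lemma quantifies over all $p\in\procset$, and an LTS may have countably infinitely many states, so a path along which $f$ holds globally need never revisit a state---``$f$ forever'' is not equivalent to ``reach an $f$-cycle'' outside the finite-state case, a caveat you flag yourself but leave unresolved. Moreover, your proposed until test $u=\ctltoft_{\ltstokripke}(g)\lor(\ctltoft_{\ltstokripke}(f)\land\mchoice\{a;u:a\in A\})$ realizes the standard until, not the paper's ``$g$ holds from $j$ onward'' semantics, so it is incorrect as stated for this logic.

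The paper closes this gap by a mechanism that involves no cycle analysis of the process at all. It introduces the recursive formula $f'=f\land\somepath\xnext\ f'$ and sets $\ctltoft_{\ltstokripke}(\somepath\globally\ f)=\ctltoft_{\ltstokripke}(f)\land(\ctltoft_{\ltstokripke}(\somepath\xnext\ f')\ \choice\ \theta;\pass)$: the conjunction combinator of Corollary~\ref{th-ft-and} enforces $f$ now, the recursively defined (cyclic) test re-issues the obligation at each successor, and the $\theta;\pass$ branch---which by the semantics of $\parallelth$ can fire only when no other action is available---releases the process from its obligation precisely at the ``sink'' states of Item~\ref{th-kripke-lts-3} in Theorem~\ref{th-kripke-lts}, i.e., at the end of a maximal finite path. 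The until case then consists of two such phases glued by an internal-action choice, $(\ctltoft_{\ltstokripke}(f_{1})\land(\ctltoft_{\ltstokripke}(\somepath\xnext\ f')\ \choice\ \theta;\pass))\ \choice\ \internal;(\ctltoft_{\ltstokripke}(f_{2})\land(\ctltoft_{\ltstokripke}(\somepath\xnext\ f'')\ \choice\ \theta;\pass))$ with $f''=f_{2}\land\somepath\xnext\ f''$, the $\internal$ choice picking the switch-over point nondeterministically; this is also what captures the ``$g$ from $j$ onward'' requirement your construction misses. So what is missing from your proposal is not a refinement of your cycle-detection plan but a different key idea: recursion through the formula, combined with test conjunction and deadlock detection, rather than cycle detection in the process.
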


\begin{proof}
  The proof is done by structural induction over CTL formulae.  As
  before, the function $\ctltoft_{\ltstokripke}$ will also be defined recursively in
  the process.

  We have naturally $\ctltoft_{\ltstokripke}(\top) = \pass$ and
  $\ctltoft_{\ltstokripke}(\bot)=\pstop$.  Clearly any Kripke structure satisfies
  $\top$ and any process passes $\pass$, so it is immediate that
  $\ltstokripke(p)\sat \top$ if and only if $p\ \may\ \pass = \ctltoft_{\ltstokripke}(\top)$.
  Similarly $\ltstokripke(p)\sat \bot$ if and only if $p\ \may\ \pstop$ is
  immediate (neither is ever true).

  To complete the basis we have $\ctltoft_{\ltstokripke}(a) = a; \pass$, which is an
  immediate consequence of the definition of $\ltstokripke$.  Indeed,
  the construction defined in Theorem~\ref{th-kripke-lts} ensures that
  for every outgoing action $a$ of an LTS process $p$ there will be an
  initial Kripke state in $\ltstokripke(p)$ where $a$ holds and so $p\
  \may\ a;\pass$ if and only if $\ltstokripke(p)\sat a$.

  The constructions for non-temporal logical operators have already
  been presented in Lemma~\ref{th-ft-neg}, Lemma~\ref{th-ft-or}, and
  Corollary~\ref{th-ft-and}.  We therefore have $\ctltoft_{\ltstokripke}(\lnot f) =
  \nottest{\ctltoft_{\ltstokripke}(f)}$ with $\nottest{\ctltoft_{\ltstokripke}(f)}$ as constructed
  in Lemma~\ref{th-ft-neg}, while $\ctltoft_{\ltstokripke}(f_{1}\lor f_{2}) =
  \ctltoft_{\ltstokripke}(f_{1}) \lor \ctltoft_{\ltstokripke}(f_{2})$ and $\ctltoft_{\ltstokripke}(f_{1}\land
  f_{2}) = \ctltoft_{\ltstokripke}(f_{1}) \land \ctltoft_{\ltstokripke}(f_{2})$ with test
  disjunction and conjunction as in Lemma~\ref{th-ft-or} and
  Corollary~\ref{th-ft-and}, respectively.  That these constructions
  are correct follow directly from the respective lemmata and
  corollary.

  We then move to the temporal operators.

  We have $\ctltoft_{\ltstokripke}(\somepath\xnext\ f) = \mchoice \{a; \ctltoft_{\ltstokripke}(f) :
  a \in A\}$.  When applied to some process $p$ the resulting test
  performs any action $a\in A$ and then (in the next state $p'$ such
  that $p\suparrow{a}p'$) gives control to $\ctltoft_{\ltstokripke}(f)$.
  $\ctltoft_{\ltstokripke}(f)$ will always test the next state (since there is no
  $\theta$ in the top choice), and there is no restriction as to what
  particular next state $p'$ is expected (since any action of $p$ is
  accepted by the test).

  Concretely, suppose that $p\ \may\ \mchoice \{a; \ctltoft_{\ltstokripke}(f) : a \in
  A\}$.  Then there exists some action $a\in A$ such that $p$ performs
  $a$, becomes $p'$, and $p'\ \may\ \ctltoft_{\ltstokripke}(f)$ (by definition of may
  testing).  It follows by inductive assumption that $\ltstokripke(p')
  \sat f$ and so $\ltstokripke(p) \sat \xnext\ f$ (since
  $\ltstokripke(p')$ is one successor of $\ltstokripke(p)$).
  Conversely, by the definition of $\ltstokripke$ all successors of
  $\ltstokripke(p)$ have the form $\ltstokripke(p')$ such that
  $p\suparrow{a}p'$ for some $a\in A$.  Suppose then that
  $\ltstokripke(p) \sat \xnext\ f$.  Then there exists a successor
  $\ltstokripke(p')$ of $\ltstokripke(p)$ such that $\ltstokripke(p')
  \sat f$, which is equivalent to $p'\ \may\ \ctltoft_{\ltstokripke}(f)$ (by
  inductive assumption), which in turn implies that $(p = a; p')\ \may\
  \mchoice \{a; \ctltoft_{\ltstokripke}(f) : a \in A\}$ (by the definition of may
  testing), as desired.

  \begin{figure}[tbp]
    \centering
    \begin{tabular}{ccc}
      \mbox{\begin{picture}(21,35)(-6,14)
          \gasset{Nframe=n,ExtNL=y,NLdist=1,NLangle=180,Nw=2,Nh=2,Nframe=n,Nfill=y}

          \node(t1)(0,46){}
          \rpnode[polyangle=90,iangle=-90,Nfill=n,Nframe=y,ExtNL=n,NLdist=0](rt1)(0,39)(3,6){\small\vspace{-3mm} $\ctltoft_{\ltstokripke}(f)$}
          \drawloop[r](t1){$a\in A$}
        \end{picture}}
      &\quad&
      \mbox{\begin{picture}(73,35)(-6,14)
          \gasset{Nframe=n,ExtNL=y,NLdist=1,NLangle=180,Nw=2,Nh=2,Nframe=n,Nfill=y}

          \node(t1)(0,46){}
          \rpnode[polyangle=90,iangle=-90,Nfill=n,Nframe=y,ExtNL=n,NLdist=0](rt1)(0,39)(3,6){\small\vspace{-3mm} $\ctltoft_{\ltstokripke}(f)$}

          \node[Nfill=n,Nframe=n,ExtNL=n,NLdist=0](t21)(15,22){}
          \rpnode[polyangle=90,iangle=-90,Nfill=n,Nframe=y,ExtNL=n,NLdist=0](rt2)(15,17)(3,6){\small\vspace{-3mm} $\ctltoft_{\ltstokripke}(f)$}

          \node(t31)(30,28){}
          \node[Nfill=n,Nframe=n,ExtNL=n,NLdist=0](t32)(30,22){}
          \rpnode[polyangle=90,iangle=-90,Nfill=n,Nframe=y,ExtNL=n,NLdist=0](rt3)(30,17)(3,6){\small\vspace{-3mm} $\ctltoft_{\ltstokripke}(f)$}

          \node(t4)(55,35){}
          \node[Nadjust=wh,Nadjustdist=0.5,Nfill=n,Nframe=n,ExtNL=n,NLdist=0](t41)(55,28){$\vdots$}
          \node[Nfill=n,Nframe=n,ExtNL=n,NLdist=0](t42)(55,22){}
          \rpnode[polyangle=90,iangle=-90,Nfill=n,Nframe=y,ExtNL=n,NLdist=0](rt4)(55,17)(3,6){\small\vspace{-3mm} $\ctltoft_{\ltstokripke}(f)$}

          \node[Nfill=n,Nframe=n,ExtNL=n,NLdist=0](dots)(42,17){\ldots}
          \node[Nfill=n,Nframe=n,ExtNL=n,NLdist=0](dots)(65,17){\ldots}

          \drawtrans[l](t1,t21){$a\in A$}
          \drawtrans[l](t1,t31){$a\in A$}
          \drawtrans[l](t31,t32){$a\in A$}
          \drawtrans[l](t1,t4){$a\in A$}
          \drawtrans[l](t4,t41){$a\in A$}
          \drawtrans[l](t41,t42){$a\in A$}
        \end{picture}}
      \\
      $(a)$ &\quad& $(b)$
    \end{tabular}
    \caption{Test equivalent to the CTL formula $\somepath\eventually\
      f$ $(a)$ and its unfolded version $(b)$.}
    \label{fig-omega-eff}
  \end{figure}
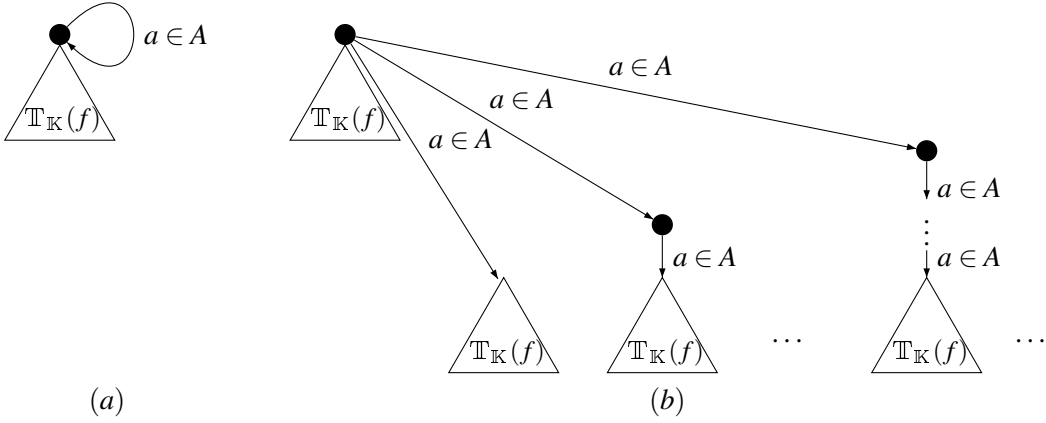

  We then have $\ctltoft_{\ltstokripke}(\somepath\eventually\ f) = t'$ such that $t'
  = \ctltoft_{\ltstokripke}(f)\ \choice\ (\mchoice (a;t' : a\in A))$.  The test $t'$
  is shown graphically in Figure~\ref{fig-omega-eff}$(a)$.  It
  specifies that at any given time the system under test has a choice
  to either pass $\ctltoft_{\ltstokripke}(f)$ or perform some (any) action and then
  pass $t'$ anew.  Repeating this description recursively we conclude
  that to be successful the system under test can pass $\ctltoft_{\ltstokripke}(f)$,
  or perform an action and then pass $\ctltoft_{\ltstokripke}(f)$, or perform two
  actions and then pass $\ctltoft_{\ltstokripke}(f)$, and so on.  The overall effect
  (which is shown in Figure~\ref{fig-omega-eff}$(b)$) is that exactly
  all the processes $p$ that perform an arbitrary sequence of actions
  and then pass $\ctltoft_{\ltstokripke}(f)$ at the end of this sequence will pass
  $t' = \ctltoft_{\ltstokripke}(\somepath\eventually\ f)$.  Given the inductive
  assumption that $\ctltoft_{\ltstokripke}(f)$ is equivalent to $f$ this is
  equivalent to $\ltstokripke(p)$ being the start of an arbitrary path
  to some state that satisfies $f$, which is precisely the definition
  of $\ltstokripke(p) \sat \somepath\eventually\ f$, as desired.

  Following a similar line of thought we have
  $\ctltoft_{\ltstokripke}(\somepath\globally\ f) = \ctltoft_{\ltstokripke}(f)\land
  (\ctltoft_{\ltstokripke}(\somepath\xnext\ f')\ \choice\ \theta; \pass)$, with $f' =
  f \land \somepath\xnext\ f'$.  Suppose that $p\ \may\
  \ctltoft_{\ltstokripke}(\somepath\globally\ f)$.  This implies that $p\ \may\
  \ctltoft_{\ltstokripke}(f)$.  This also implies that $p\ \may\
  \ctltoft_{\ltstokripke}(\somepath\xnext\ f')$ but only unless $p=\pstop$; indeed,
  the $\theta$ choice appears in conjunction with a multiple choice
  that offers all the possible alternatives (see the conversion for
  $\somepath\xnext$ above) and so can only be taken if no other action
  is available.

  By inductive assumption $p\ \may\ \ctltoft_{\ltstokripke}(f)$ if and only if
  $\ltstokripke(p)\sat f$.  By the conversion of $\somepath\xnext$
  (see above) $p\ \may\ \ctltoft_{\ltstokripke}(\somepath\xnext\ f')$ if and only if
  $\ltstokripke(p')\sat f'$ where $p\suparrow{a}p'$ for some $a\in A$
  and so $\ltstokripke(p')$ is the successor of $\ltstokripke(p)$ on
  some path.  We thus have $p\ \may\ \ctltoft_{\ltstokripke}(\somepath\globally\ f)$
  if and only if $\ltstokripke(p)\sat f$ and $\ltstokripke(p')\sat f'$ for some
  successor $p'$ of $p$.  Repeating the reasoning above recursively
  (starting from $p'$, etc.) we conclude that $p\ \may\
  \ctltoft_{\ltstokripke}(\somepath\globally\ f)$ if and only if $\ltstokripke(p_{1})\sat f$ for
  all the states $\ltstokripke(p_{1})$ on some path that starts from
  $\ltstokripke(p)$, which is clearly equivalent to
  $\ltstokripke(p)\sat \somepath\globally\ f'$.  The recursive
  reasoning terminates at the end of the path, when the LTS state $p$
  becomes $\pstop$, and the process is therefore released from its
  obligation to have states in which $f$ holds (since no states exist
  any longer).  In this case $\ltstokripke(p)$ is a ``sink'' state
  with no successor (according to Item~\ref{th-kripke-lts-3} in
  Theorem~\ref{th-kripke-lts}) and so the corresponding Kripke
  path is also at an end (and so there are no more states for $f$ to
  hold in).

  Finally we have $\ctltoft_{\ltstokripke}(\somepath\ f_{1}\ \until\ f_{2}) =
  (\ctltoft_{\ltstokripke}(f_{1}) \land (\ctltoft_{\ltstokripke}(\somepath\xnext\ f')\ \choice\
  \theta; \pass))\ \choice\ \internal; (\ctltoft_{\ltstokripke}(f_{2})\land
  (\ctltoft_{\ltstokripke}(\somepath\xnext\ f'')\ \choice\ \theta; \pass))$, with
  $f'=f_{1}\land\somepath\xnext\ f'$ and $f'' = f_{2}\land
  \somepath\xnext\ f''$.  Following the same reasoning as above (in
  the $\somepath\globally$ case) the fact that a process $p$ follows
  the test $\ctltoft_{\ltstokripke}(f_{1}) \land \ctltoft_{\ltstokripke}(\somepath\xnext\ f')$
  without deadlocking until some arbitrary state $p'$ is reached is
  equivalent to $\ltstokripke(p)$ featuring a path of arbitrary length
  ending in state $\ltstokripke(p')$ whose states
  $\ltstokripke(p_{1})$ will all satisfy $f_{1}$.  At the arbitrary
  (and nondeterministically chosen) point $p'$ the test
  $\ctltoft_{\ltstokripke}(\somepath\ f_{1}\ \until\ f_{2})$ will exercise its choice
  and so $p'$ has to pass $\ctltoft_{\ltstokripke}(f_{2})\land
  \ctltoft_{\ltstokripke}(\somepath\xnext\ f'')$ in order for $p$ to pass
  $\ctltoft_{\ltstokripke}(\somepath\ f_{1}\ \until\ f_{2})$.  We follow once more
  the same reasoning as in the $\somepath\globally$ case and we thus
  conclude that this is equivalent to $\ltstokripke(p_{1})\sat f$ for
  all the states $\ltstokripke(p_{1})$ on some path that starts from
  $\ltstokripke(p')$.  Putting the two phases together we have that
  $p\ \may\ \ctltoft_{\ltstokripke}(\somepath\ f_{1}\ \until\ f_{2})$ if and only if
  $\ltstokripke(p)$ features a path along which $f_{1}$ holds up to
  some state and $f_{2}$ holds from that state on, which is equivalent
  to $\ltstokripke(p)\sat \somepath\ f_{1}\ \until\ f_{2}$, as
  desired.  The $\theta; \pass$ choices play the same role as in the
  $\somepath\globally$ case (namely, they account for the end of a
  path since they can only be taken when no other action is
  available).

  Thus we complete the proof and the conversion between CTL formulae
  and sequential tests.  Indeed, note that $\somepath\xnext$,
  $\somepath\eventually$, $\somepath\globally$, and $\somepath\until$
  is a complete, minimal set of temporal operators for CTL
  \cite{clarke99}, so all the remaining CTL constructs can be
  rewritten using only the constructs discussed above. 
\end{proof}

\begin{qexample}{How to test that your coffee machine is working}
  \label{ex-coffee-1}
  We turn our attention again to the coffee machines $b_{1}$ and
  $b_{2}$ below, as presented earlier in Example~\ref{ex-coffee}, and
  also graphically as LTS in Figure~\ref{fig-coffee}:
  \begin{eqnarray*}
    b_{1} &=& \coin; (\tea\ \choice\ \bang; \coffee)\quad \choice
    \quad \coin; (\coffee\ \choice\ \bang; \tea)\\
    b_{2} &=& \coin; (\tea\ \choice\ \bang; \tea)\quad \choice
    \quad \coin; (\coffee\ \choice\ \bang; \coffee)
  \end{eqnarray*}
  Also recall that the following CTL formula was found to
  differentiate between the two machines:
  \[\phi = \coin \land \somepath \xnext\
  (\coffee \lor \lnot \coffee \land \bang \land \somepath\xnext\
  \coffee)\]
  Indeed, the formula holds for both the initial states of
  $\ltstokripke(b_{1})$ (where coffee is offered from the outset or follows a
  hit on the machine) but holds in only one of the initial states of
  $\ltstokripke(b_{2})$ (the one that dispenses coffee).

  Let $C = \{\coin, \tea, \bang, \coffee\}$ be the set of all actions.
  We will henceforth convert silently all the tests $\theta; \pass$
  into $\pass$ as long as these tests do not participate in a choice.
  We will also perform simplifications of the intermediate tests as we
  go along in order to simplify (and so clarify) the presentation.

  The process of converting the formula $\phi$ to a test suite
  $\ctltoft_{\ltstokripke}(\phi)$ then goes as follows:
  \begin{enumerate}
  \item We convert first $\bang \land \somepath\xnext\ \coffee = \neg
    (\neg \bang \vee \neg \somepath\xnext \ \coffee)$.

    We have $\ctltoft_{\ltstokripke}(\neg\bang) = \bang; \pstop\ \choice\ \theta;
    \pass$.  On the other hand $\ctltoft_{\ltstokripke}(\somepath\xnext\ \coffee) =
    \mchoice\{a; \coffee; \pass: a\in C\}$ and so $\ctltoft_{\ltstokripke}(\neg
    \somepath\xnext\ \coffee) = \mchoice\{a; \coffee; \pstop\ \choice\
    \theta; \pass: a\in C\}\ \choice\ \theta; \pass = \mchoice\{a;
    \coffee; \pstop\ \choice\ \theta; \pass: a\in C\}$ (we ignore the
    topmost $\theta$ branch since the rest of the choice covers all
    the possible actions).

    Now we compute the disjunction $\ctltoft_{\ltstokripke}(\neg\bang)\lor
    \ctltoft_{\ltstokripke}(\neg \somepath\xnext\ \coffee)$.  With the notations used
    in the proof of Lemma~\ref{th-ft-or} we have $B_{1} = \{\bang\}$,
    $B_{2} = C$, $t_{1}(\bang) = \pstop$, $t_{N1} = \pass$, $t_{2}(b)
    = \coffee; \pstop\ \choice\ \theta; \pass$ for all $b\in B_{2}$,
    and $t_{N2} = \pass$.  We therefore have
    $\ctltoft_{\ltstokripke}(\neg\coffee)\lor \ctltoft_{\ltstokripke}(\neg \somepath\xnext\ \coffee)
    = \bang; (\coffee; \pstop\ \choice\ \theta; \pass)\ \choice\
    \mchoice\{a; \pass : a\in C\setminus\{\bang\}\}\ \choice\ \theta;
    \pass$.  Therefore:
    \begin{eqnarray}
      \ctltoft_{\ltstokripke}(\neg(\bang \land \somepath\xnext\ \coffee)) &=& \bang; (\coffee;
      \pstop\ \choice\ \theta; \pass) \nonumber \\
      && \choice\ \theta; \pass \label{eq-ex3-1}
    \end{eqnarray}
    For brevity we integrated the $C\setminus\{\bang\}$ into the
    $\theta$ branch.  Negating this test yields:
    \begin{eqnarray*}
      \ctltoft_{\ltstokripke}(\bang \land \somepath\xnext\ \coffee) &=& \bang; \coffee; \pass
    \end{eqnarray*}
    Note in passing that that the negated version as shown in
    Equation~(\ref{eq-ex3-1}) will suffice.  The last formula is only
    provided for completeness and also as a checkpoint in the
    conversion process.  Indeed, the equivalence between $\bang \land
    \somepath\xnext\ \coffee$ and $\bang; \coffee; \pass$ can be
    readily ascertained intuitively.

  \item We move to $\neg \coffee \land \bang \land \somepath\xnext\
    \coffee = \neg(\coffee \lor \neg (\bang \land \somepath\xnext\
    \coffee))$.  By Equation~(\ref{eq-ex3-1}) we have $\ctltoft_{\ltstokripke}(\coffee
    \lor \neg (\bang \land \somepath\xnext\ \coffee)) = \coffee; \pass
    \lor (\bang; (\coffee; \pstop\ \choice\ \theta; \pass)\ \choice\
    \theta; \pass)$.  This time $B_{1}=\{\coffee\}$,
    $B_{2}=\{\bang\}$, $t_{1}(b) = \pass$, $t_{2}(b) = \coffee;
    \pstop\ \choice\ \theta; \pass$, $t_{N1} = \pstop$, and $t_{N2} =
    \pass$.  Therefore $\ctltoft_{\ltstokripke}(\coffee \lor \neg (\bang \land
    \somepath\xnext\ \coffee)) = \coffee; \pass\ \choice\ \bang;
    (\coffee; \pstop\ \choice\ \theta; \pass)\ \choice\ \theta; \pass$
    (note that $B_{1}\cap B_{2}=\emptyset$).  Negating this test
    yields:
    \begin{eqnarray}
      \ctltoft_{\ltstokripke}(\neg \coffee \land \bang \land \somepath\xnext\ \coffee)
      &=&
      \coffee; \pstop\ \choice \nonumber\\
      && \bang; \coffee; \pass \label{eq-ex3-2}
    \end{eqnarray}
    This is yet another checkpoint in the conversion, as the
    equivalence above can be once more easily ascertained.

  \item The conversion of $\coffee \lor \neg \coffee \land \bang \land
    \somepath\xnext\ \coffee$ combines in a disjunction the test
    $\coffee; \pass$ and the test from Equation~(\ref{eq-ex3-2}).
    None of these tests feature a $\theta$ branch in their top choice
    and so the combination is a simple choice between the two:
    \begin{eqnarray}
      & \ctltoft_{\ltstokripke}(\coffee \lor \neg \coffee \land \bang \land \somepath\xnext\ \coffee)
      \nonumber \\
      & \qquad = \quad \coffee; \pass\ \choice\ \bang; \coffee; \pass  \label{eq-ex3-3}
    \end{eqnarray}
    In what follows we use for brevity $\phi' = \coffee \lor \neg
    \coffee \land \bang \land \somepath\xnext\ \coffee$.

  \item We have $\ctltoft_{\ltstokripke}(\somepath\xnext\ \phi') = \mchoice\{a;
    \ctltoft_{\ltstokripke}(\phi') : a\in C\}$ and therefore
    \begin{eqnarray}
      \label{eq-ex3-4a}
      \ctltoft_{\ltstokripke}(\somepath\xnext\ \phi')
      &=&
      \mchoice\{a; (\coffee; \pass \nonumber \\
      && \qquad \quad \choice\ \bang; \coffee; \pass): a\in C\}
    \end{eqnarray}
    We will actually need in what follows the negation of this
    formula, which is the following:
    \begin{eqnarray}
    \ctltoft_{\ltstokripke}(\neg \somepath\xnext\ \phi')
    &=&
    \mchoice\{a; (\coffee; \pstop \nonumber\\
    && \qquad \quad \choice\ \bang; (\coffee; \pstop\
    \choice\ \theta; \pass) \nonumber\\
    && \qquad \quad \choice\ \theta; \pass): a\in C\} \nonumber\\
    && \choice\ \theta; \pass
    \label{eq-ex3-4}
    \end{eqnarray}
    At this point our test becomes complex enough so that we can use
    it to illustrate in more detail the negation algorithm
    (Lemma~\ref{th-ft-neg}).  We thus take this opportunity to explain
    in detail the conversion between $\ctltoft_{\ltstokripke}(\somepath\xnext\
    \phi')$ from Equation~(\ref{eq-ex3-4a}) and
    $\nottest{\ctltoft_{\ltstokripke}(\somepath\xnext\ \phi')} = \ctltoft_{\ltstokripke}(\neg
    \somepath\xnext\ \phi')$ shown in Equation~(\ref{eq-ex3-4}).

    \begin{figure}
      \centering
      \begin{tabular}{ccc}
        \mbox{\begin{picture}(28,44)(-5,-2)
            \gasset{Nframe=n,ExtNL=y,NLdist=1,NLangle=180,Nw=2,Nh=2,Nframe=n,Nfill=y}

            \node[NLangle=0](t1)(5,40){$t_{1} = \ctltoft_{\ltstokripke}(\somepath\xnext\ \phi')$}
            \node(t2)(5,30){$t_{2}$}
            \node(t3)(0,20){$t_{3}$}
            \node(t4)(10,20){$t_{4}$}
            \node(t5)(0,10){$t_{5}$}
            \node(t6)(10,10){$t_{6}$}
            \node(t7)(10,0){$t_{7}$}

            \drawtrans[l](t1,t2){\small $a\in C$}
            \drawtrans[r](t2,t3){\small coffee}
            \drawtrans[l](t2,t4){\small bang}
            \drawtrans[r](t3,t5){\small $\gamma$}
            \drawtrans[l](t4,t6){\small coffee}
            \drawtrans[l](t6,t7){\small $\gamma$}
          \end{picture}}
        & \qquad &
        \mbox{\begin{picture}(36,44)(-10,-2)
            \gasset{Nframe=n,ExtNL=y,NLdist=1,NLangle=180,Nw=2,Nh=2,Nframe=n,Nfill=y}

            \node[NLangle=0](t1)(5,40){$t_{1} = \nottest{\ctltoft_{\ltstokripke}(\somepath\xnext\ \phi')}$}
            \node(t2)(5,30){$t_{2}$}
            \node(t3)(0,20){$t_{3}$}
            \node(t4)(10,20){$t_{4}$}
            \node(t6)(10,10){$t_{6}$}
            \node(t8)(-5,40){$t_{8}$}
            \node(tt1)(-5,30){}
            \node[NLangle=0](t9)(20,30){$t_{9}$}
            \node(tt2)(20,20){}
            \node[NLangle=0](t10)(20,15){$t_{10}$}
            \node(tt3)(20,5){}

            \drawtrans[l](t1,t2){\small $a\in C$}
            \drawtrans[r](t2,t3){\small coffee}
            \drawtrans[l](t2,t4){\small bang}
            \drawtrans[r](t4,t6){\small coffee}
            \drawtrans[l](t1,t8){\small $\theta$}
            \drawtrans[l](t8,tt1){\small $\gamma$}
            \drawtrans[l](t2,t9){\small $\theta$}
            \drawtrans[l](t9,tt2){\small $\gamma$}
            \drawtrans[l](t4,t10){\small $\theta$}
            \drawtrans[l](t10,tt3){\small $\gamma$}
          \end{picture}}
        \\
        $(a)$  & \qquad & $(b)$
      \end{tabular}
      \caption{Conversion of the test $\ctltoft_{\ltstokripke}(\somepath\xnext\
        \phi')$ $(a)$ into its negation
        $\nottest{\ctltoft_{\ltstokripke}(\somepath\xnext\ \phi')}$ $(b)$.}
      \label{fig-ex3-neg1}
    \end{figure}

    \begin{enumerate}
    \item The test $\ctltoft_{\ltstokripke}(\somepath\xnext\ \phi')$ is shown as an
      LTS in Figure~\ref{fig-ex3-neg1}$(a)$.  For convenience all the
      states are labeled $t_{i}$, $1\leq i\leq 7$ so that we can
      easily refer to them.

    \item We then eliminate all the success ($\gamma$) transitions.
      The states $t_{3}$ and $t_{6}$ are thus converted from \pass\ to
      \pstop.

    \item All the states that were not converted in the previous step
      (that is, states $t_{1}$, $t_{2}$, and $t_{4}$) gain a $\theta;
      \pass$ branch.  The result is the negation of the original test
      and is shown in Figure~\ref{fig-ex3-neg1}$(b)$.

    \item If needed, the conversion the other way around would proceed
      as follows: The success transitions are eliminated (this affects
      $t_{8}$, $t_{9}$, and $t_{10}$).  The preceding $\theta$
      transitions are also eliminated (which eliminates $t_{8}$,
      $t_{9}$, $t_{10}$ and affects the states $t_{1}$, $t_{2}$, and
      $t_{4}$).  The states unaffected by this process are $t_{3}$ and
      $t_{6}$ so they both gain a $\theta; \pass$ branch; however,
      such a branch is the only outgoing one for both $t_{3}$ and
      $t_{6}$ so it is equivalent to a simple $\pass$ in both cases.
      The result is precisely the test $\ctltoft_{\ltstokripke}(\somepath\xnext\
      \phi')$ as shown in Figure~\ref{fig-ex3-neg1}$(a)$.
    \end{enumerate}

  \item We finally reach the top formula $\phi$.  Indeed, $\phi =
    \coin \land \somepath\xnext\ \phi' = \neg (\neg \coin\lor \neg
    \somepath\xnext\ \phi')$.  We thus need to combine in a
    disjunction the test $\coin; \pstop\ \choice\ \theta; \pass$ with
    the test shown in Equation~(\ref{eq-ex3-4}).  We have:
    \begin{eqnarray*}
      \ctltoft_{\ltstokripke}(\neg \coffee\lor \neg \somepath\xnext\
      \phi')
      &=& \coin; (\coffee; \pstop\ \\
      && \qquad \quad \choice\ \bang; (\coffee;
      \pstop\ \choice\ \theta; \pass) \\
      && \qquad \quad \choice\ \theta; \pass)\\
      && \choice\ \mchoice\{b; \pass : b\in C\setminus\{\coin\}\} \\
      && \choice\ \theta; \pass
    \end{eqnarray*}
    Indeed, $B_{1}=\{\coin\}$, $B_{2}= C$, $t_{1}(b) = \pstop$,
    $t_{2}(b) = \coffee; \pstop\ \choice\ \bang; (\coffee; \pstop\
    \choice\ \theta; \pass)\ \choice\ \theta; \pass$, and $t_{N1} =
    t_{N2} = \pass$.

    To reduce the size of the expression we combine the
    $C\setminus\{\coin\}$ and $\theta$ choices and so we obtain:
    \begin{eqnarray*}
    \ctltoft_{\ltstokripke}(\neg \coffee\lor \neg \somepath\xnext\ \phi')
    &=&
    \coin; (\coffee; \pstop\\
    && \qquad \quad \choice\ \bang; (\coffee; \pstop\ \choice\
    \theta; \pass)\\
    && \qquad \quad \choice\ \theta; \pass)\\
    && \choice\ \theta; \pass
    \end{eqnarray*}
    Negating the above expression results in the test equivalent to
    the original formula:
    \begin{eqnarray*}
      \ctltoft_{\ltstokripke}(\phi) &=&
      \coin; (\coffee; \pass\ \choice\ \bang; \coffee; \pass)
    \end{eqnarray*}
  \end{enumerate}
  Recall now that the test we started from in Example~\ref{ex-coffee}
  was slightly different, namely:
  \[
  t= \coin; (\coffee; \pass\ \choice\ \theta; \bang; \coffee; \pass)
  \]
  We argue however that these two tests are in this case equivalent.
  Indeed, both tests succeed whenever \coin\ is followed by \coffee.
  Suppose now that \coin\ does happen but the next action is not
  \coffee.  Then $t$ will follow on the deadlock detection branch,
  which will only succeed if the next action is \bang.  On the other
  hand $\ctltoft_{\ltstokripke}(\phi)$ does not have a deadlock detection branch in the
  choice following \coin; however, the only alternative to \coffee\ in
  $\ctltoft_{\ltstokripke}(\phi)$ is \bang, which is precisely the same alternative as
  for $t$ (as shown above).  We thus conclude that $t$ and
  $\ctltoft_{\ltstokripke}(\phi)$ are indeed equivalent.
\end{qexample}

\begin{lemma}
  \label{th-ctl-ft-x}
  There exist a function
  $\ctltoft_{\ltstokripkex}:\ctlset\rightarrow\ftset$ such that
  $\ltstokripkex(p)\models f$ iff
  $p\ \may\ \ctltoft_{\ltstokripkex}(f)$ for any $p\in\procset$.
\end{lemma}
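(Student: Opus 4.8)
The plan is to reuse, essentially unchanged, the conversion $\ctltoft_{\ltstokripke}$ built in Lemma~\ref{th-ctl-ft}; that is, to set $\ctltoft_{\ltstokripkex} = \ctltoft_{\ltstokripke}$ and to re-run the very same structural induction over CTL formulae, but now matching each construct against the satisfaction operator $\models$ of Definition~\ref{def-Kripke-Kripke1-sat} on $\ltstokripkex(p)$ rather than against $\sat$ on $\ltstokripke(p)$. The tests themselves never mention the Kripke encoding---they are applied to the LTS $p$, and $p\ \may\ t$ is insensitive to whether we later picture $p$ as $\ltstokripke(p)$ or $\ltstokripkex(p)$---so the entire test-theoretic side of each inductive case (the base cases $\ctltoft_{\ltstokripkex}(\top)=\pass$, $\ctltoft_{\ltstokripkex}(\bot)=\pstop$, and $\ctltoft_{\ltstokripkex}(a)=a;\pass$, the Boolean cases handled through Lemmata~\ref{th-ft-neg} and~\ref{th-ft-or} and Corollary~\ref{th-ft-and}, and the recursive tests for the temporal operators) is copied verbatim from the proof of Lemma~\ref{th-ctl-ft}.

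What has to be re-verified is the Kripke-side equivalence in each case. First I would observe that $\models$ (Definition~\ref{def-Kripke-Kripke1-sat}) is engineered precisely so that the $\Delta$-labelled states---the only feature of $\ltstokripkex(p)$ absent from the original LTS---are skipped over, so that $\ltstokripkex(p)\models f$ coincides with the process-level satisfaction $p\sat f$ of Definition~\ref{def-lts-sat} for every CTL (indeed CTL*) formula $f$; this is exactly the equivalence $p\peq\ltstokripkex(p)$ already established in Theorem~\ref{new-th-kripke-lts}. Consequently the argument of Lemma~\ref{th-ctl-ft}, which was phrased throughout in terms of the underlying process transitions $p\suparrow{a}p'$ and the paths they induce, transfers directly: wherever that proof used ``$\ltstokripke(p)\sat a$ iff $p\suparrow{a}$'' or ``the successors of a state satisfying $a$ are exactly the post-$a$ states'', I would substitute the corresponding facts for $\models$, namely the defining clauses $p\models a$ iff $K,s\models\Delta\ \until\ a$ and $p\models\xnext\ f$ iff $K,s\models\xnext\ (\Delta\ \until\ f)$.

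The delicate point---the step I expect to absorb most of the work---is the next-state operator $\somepath\xnext$ together with its recursive reappearances inside the $\somepath\globally$, $\somepath\eventually$, and $\somepath\until$ conversions, since a single process step now corresponds in $\ltstokripkex(p)$ to traversing the length-two segment $\Delta\rightarrow a\rightarrow\Delta$ rather than a single edge. Here I would check that the test $\mchoice\{a;\ctltoft_{\ltstokripkex}(f):a\in A\}$ still captures $\somepath\xnext\ f$: performing one visible action drives $p$ to some $p'$ with $p\suparrow{a}p'$, which on the Kripke side is exactly the move across the intervening action-state to the next $\Delta$-state, and the clause $p\models\xnext\ f$ iff $K,s\models\xnext\ (\Delta\ \until\ f)$ guarantees that evaluating $\ctltoft_{\ltstokripkex}(f)$ there agrees with $\ltstokripkex(p')\models f$.

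Finally I would stress that the troublesome decoupling that forced a new construction in Lemma~\ref{th-ft-ctl-x}---where a conjunction $a\ \land\ \somepath\xnext\ (\cdots)$ split the atom $a$ and the next-state requirement onto different branches of a shared $\Delta$-node---simply does not arise in this direction. The reason is that here the logical structure of $f$ is being consumed rather than synthesised, and every conjunction in the target is realised by the process-level test conjunction of Corollary~\ref{th-ft-and}, which binds both conjuncts to one and the same process $p$; there is thus no opportunity for the two sides of a conjunction to wander onto distinct $\Delta$-rooted branches. With $\somepath\xnext$ settled, the remaining temporal cases reduce, exactly as in Lemma~\ref{th-ctl-ft}, to iterating this single-step correspondence along a path, and the induction closes.
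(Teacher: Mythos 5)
Your proposal is correct and takes essentially the same route as the paper: the paper's own proof likewise reuses $\ctltoft_{\ltstokripke}$ unchanged and rests on the observation that the operator $\models$ of Definition~\ref{def-Kripke-Kripke1-sat} makes the $\Delta$-labelled states be ``skipped over,'' so the $\Delta$-free paths of $\ltstokripkex(p)$ are identical to the paths examined in Lemma~\ref{th-ctl-ft}. Your extra verifications (the $\somepath\xnext$ single-step correspondence and the explanation of why the decoupling problem of Lemma~\ref{th-ft-ctl-x} cannot recur in this direction) are details the paper leaves implicit, but they elaborate rather than alter the argument.
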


\begin{proof}
  The proof established earlier for $\ctltoft_{\ltstokripke}$
  (Lemma~\ref{th-ctl-ft}) will also work for
  $\ctltoft_{\ltstokripkex}$.  Indeed, the way the operator $\models$
  is defined (Definition~\ref{def-Kripke-Kripke1-sat}) ensures that
  all occurrences of $\Delta$ are ``skipped over'' as if they were not
  there in the first place.  However, the paths without the $\Delta$
  labels are identical to the paths examined in Lemma~\ref{th-ctl-ft}. 
\end{proof}

\section{Conclusions}
\label{sect-conclusions}

Our work creates a constructive equivalence between CTL and failure
trace testing.  For this purpose we start by offering an equivalence
relation between a process or a labeled transition system and a Kripke
structure (Section~\ref{sect-lts-kripke-a}), then the development of
an algorithmic function $\ltstokripke$ that converts any labeled
transition system into an equivalent Kripke structure
(Theorem~\ref{th-kripke-lts}).

As already mentioned, the function $\ltstokripke$ creates a Kripke
structure that may have multiple initial states, and so a weaker
satisfaction operator (over sets of states rather than states) was
needed.  It was further noted that this issue only happens for those
LTS that start with a choice of multiple visible actions.  It follows
that in order to eliminate the need for a new satisfaction operator
(over sets of states) one can in principle simply create an extra LTS
state which becomes the initial state and performs an artificial
``start'' action to give control to the original initial state.  This
claim needs however to be verified.  Alternatively, one can
investigate another equivalence relation (and subsequent conversion
function) without this disadvantage. We offer precisely such a
relation (Section~\ref{sect-lts-kripke-a}) and the corresponding
algorithmic conversion function $\ltstokripkex$
(Theorem~\ref{new-th-kripke-lts}).

There are advantages and disadvantages to both these approaches.  The
conversion function $\ltstokripke$ uses an algorithmic conversion
between LTS and Kripke structures that results in very compact Kripke
structures but introduces the need to modify the model checking
algorithm (by requiring a modified notion of satisfaction for CTL
formulae).  The function $\ltstokripke$ on the other hand results in
considerably larger Kripke structures but does not require any
modification of the model checking algorithm.

Once the conversion functions are in place we develop algorithmic
functions for the conversion of failure trace tests to equivalent CTL
formulae and the other way around, thus showing that CTL and failure
trace testing are equivalent (Theorem~\ref{th-ctl-ft-main}).
Furthermore this equivalence holds for both notions of equivalence
between LTS and Kripke structure and so our thesis is that failure
trace testing and CTL are equivalent under any reasonable equivalence
relation between LTS and Kripke structures.

Finally, we note that the straightforward, inductive conversion of
failure trace tests into CTL formulae produces in certain cases
infinite formulae.  We address this issue by showing how failure trace
tests containing cycles (which produce the unacceptable infinite
formulae) can be converted into compact (and certainly finite) CTL
formulae (Theorem~\ref{th-ctl-ftr-compt}).

We believe that our results (providing a combined, logical and
algebraic method of system verification) have unquestionable
advantages.  To emphasize this point consider the scenario of a
network communication protocol between two end points and through some
communication medium being formally specified.  The two end points are
likely to be algorithmic (or even finite state machines) and so the
natural way of specifying them is algebraic.  The communication medium
on the other hand has a far more loose specification.  Indeed, it is
likely that not even the actual properties are fully known at
specification time, since they can vary widely when the protocol is
actually deployed (between say, the properties of a 6-foot direct
Ethernet link and the properties of a nondeterministic and congested
Internet route between Afghanistan and Zimbabwe).  The properties of
the communication medium are therefore more suitable to logic
specification.  Such a scenario is also applicable to systems with
components at different levels of maturity (some being fully
implemented already while others being at the prototype stage of even
not being implemented at all and so less suitable for being specified
algebraically).  Our work enables precisely this kind of mixed
specification.  In fact no matter how the system is specified we
enable the application of either model checking of model-based testing
(or even both) on it, depending on suitability or even personal taste.

The results of this paper are important first steps towards the
ambitious goal of a unified (logic and algebraic) approach to
conformance testing. We believe in particular that this paper opens
several direction of future research.

The issue of tests taking an infinite time to complete is an
ever-present issue in model-based testing.  Our conversion of CTL
formulae is no exception, as the tests resulting from the conversion
of expressions that use $\somepath\eventually$, $\somepath\globally$,
and $\somepath\until$ fall all into this category.  Furthermore Rice's
theorem \cite{papa} (which states that any non-trivial and extensional
property of programs is undecidable) guarantees that tests that take
an infinite time to complete will continue to exist no matter how much
we refine our conversion algorithms.  We therefore believe that it is
very useful to investigate methods and algorithms for partial (or
incremental) application of tests.  Such methods will offer
increasingly stronger guarantees of correctness as the test
progresses, and total correctness at the limit (when the test
completes).

It would also be interesting to extend this work to other temporal
logics (such as CTL*) and whatever testing framework turns out to be
equivalent to it (in the same sense as used in our work).

\bibliographystyle{siam}
\bibliography{formal,algos}

\end{document}